  \providecommand\BibTeX{{%
    \normalfont B\kern-0.5em{\scshape i\kern-0.25em b}\kern-0.8em\TeX}}}
\setlist{nosep,leftmargin=\parindent}
\crefname{section}{\S}{\S\S}
\Crefname{section}{\S}{\S\S}
\newtheorem{theorem}{Theorem}
\newcommand{\llangle}{\langle\!\langle}
\newcommand{\rrangle}{\rangle\!\rangle}
\newcommand{\rone}{(\emph{i})~}
\newcommand{\rtwo}{(\emph{ii})~}
\newcommand{\rthree}{(\emph{iii})~}
\newcommand{\lbbar}{\{\kern-0.5ex|}
\newcommand{\rbbar}{|\kern-0.5ex\}}
\newcommand{\biglbbar}{\left\{\kern-0.5ex\left|}
\newcommand{\bigrbbar}{\right|\kern-0.5ex\right\}}
\newcommand{\revision}[1]{{{\color{magenta}#1}}}
\newcommand{\noappendix}[1]{{#1}}
\definecolor{noncrappyred}{RGB}{209, 25, 8}
\definecolor{noncrappyblue}{RGB}{0, 69, 245}
\def\cvc5{\textsc{cvc5}\xspace}
\newcommand{\type}{\tau}
\newcommand{\Etrue}{{{\mathsf{true}}}}
\newcommand{\Efalse}{{{\mathsf{false}}}}
\newcommand{\Et}{{{\mathsf{t}}}}
\newcommand{\Ef}{{{\mathsf{f}}}}
\newcommand{\Eifthenelse}[3]{{\mathsf{if}}\ {#1}\ {\mathsf{then}}\ {#2}\ {\mathsf{else}}\ {#3}}
\newcommand{\Eseq}[2]{{#1}{\mathsf{;}}\ {#2}}
\newcommand{\sem}[1]{\llbracket{#1}\rrbracket}       
\newcommand{\Eassign}[2]{{{#1}\ \mathsf{:=}\ {#2}}}
\newcommand{\Ewhile}[2]{\mathsf{while} \; {#1} \; \mathsf{do} \; {#2}}
\newcommand{\semantics}[1]{\llbracket {#1} \rrbracket}
\newcommand{\Omit}[1]{}
\newenvironment{mybox}[1][gray!20]{
	\begin{tcolorbox}[   
		breakable,
		left=0pt,
		right=0pt,
		top=0pt,
		bottom=-1pt,
		colback=#1,
		colframe=#1,
		width=\dimexpr\textwidth\relax,
		boxsep=2pt,
		arc=0pt,outer arc=0pt,
		]
	}{
\end{tcolorbox}
}
\newcounter{resq}
\newcommand{\triple}[3]{\{{#1}\} \: #2 \: \{#3\}}
\newcommand{\utripleof}[4]{\lbbar {#1} \rbbar^{#4} \: #2 \: \lbbar #3 \rbbar^{#4}}
\newcommand{\utriple}[3]{\lbbar {#1} \rbbar \: #2 \: \lbbar #3 \rbbar}
\newcommand{\utripleag}[3]{\utripleof{#1}{#2}{#3}{\setsing}}
\newcommand{\utriplevs}[3]{\utripleof{#1}{#2}{#3}{\setweak}}
\newcommand{\Eskip}{\mathsf{skip}}
\newcommand{\vars}{\mathit{vars}}
\newcommand{\grmdisj}{\mathsf{GrmDisj}}
\newcommand{\set}[1]{\{{#1}\}}
\newcommand{\vinf}{V_{\mathsf{inf}}}
\definecolor{dgreen}{RGB}{0,128,0}
\definecolor{dred}{RGB}{200,0,0}
\newcommand{\gimp}{\mathit{G_{imp}}}
\newcommand{\defeq}{\ensuremath{\triangleq}}
\newcommand{\op}{\mathit{op}}
\newenvironment{proofsketch}{%
    \par\medskip\noindent%
    \textit{Proof Sketch.}%
}{%
    \hfill$\square$
    \par\medskip%
}
\newcommand{\Z}{\mathbb{Z}}
\newcommand{\N}{\mathbb{N}}
\newcommand{\divg}{\uparrow}
\def\state{\sigma}
\newcommand{\State}{\mathsf{State}}
\newcommand{\Prog}{\mathsf{Prog}}
\newcommand{\DState}{\mathsf{DState}}
\newcommand{\abs}[1]{\lvert {#1} \rvert}
\newcommand{\subs}[2]{[{#2}\mapsto{#1}]}
\newcommand{\stateentry}[2]{#1 \mapsto #2}
\newcommand{\statelist}[1]{({#1})}
\newcommand{\filter}[2]{\mathit{filter}({#1}, {#2})}
\newcommand{\subvec}[3]{\,{#1}_{{#2} \cdots {#3}}}
\newcommand{\pend}{++}
\def\domain{\tau}
\def\op{\alpha^{(n)}}
\def\varfun{\mathit{vars}}
\def\Tvars{\mathit{T}}
\def\Stmtvars{\mathit{Stmt}}
\def\Expvars{\mathit{Exp}}
\def\vars{\mathit{Var}}
\def\Boolvars{\mathit{BExp}}
\def\Tset{\mathbf{T}}
\def\Stmtset{\mathbf{Stmt}}
\def\Expset{\mathbf{Exp}}
\def\Boolset{\mathbf{BExp}}
\def\varspres{\mathit{vp}}
\def\vsvars{\mathit{EVS}}
\newcommand{\abstrp}[1]{\semprog{\cdot}^{#1}}
\newcommand{\abstrpnovars}[1]{\semprog{\cdot}^{#1}}
\newcommand{\abstrpstmt}[1]{\semprog{\cdot}^{#1}_{\Stmtvars}}
\newcommand{\abstrpbool}[1]{\semprog{\cdot}^{#1}_{\Boolvars}}
\newcommand{\abstrpexp}[1]{\semprog{\cdot}^{#1}_{\Expvars}}
\newcommand{\abstrpapp}[2]{\semprog{#2}^{#1}}
\newcommand{\abstrs}[1]{\semset{\cdot}^{#1}}
\newcommand{\abstrsnovars}[1]{\semset{\cdot}^{#1}}
\newcommand{\abstrsstmt}[1]{\semset{\cdot}^{#1}_{\Stmtvars}}
\newcommand{\abstrsbool}[1]{\semset{\cdot}^{#1}_{\Boolvars}}
\newcommand{\abstrsexp}[1]{\semset{\cdot}^{#1}_{\Expvars}}
\newcommand{\abstrst}[1]{\semset{\cdot}^{#1}_{T}}
\newcommand{\abstrsapp}[2]{\semset{#2}^{#1}}
\newcommand{\abstrstapp}[2]{\semset{#2}^{#1}_{T}}
\newcommand{\semset}[1]{\llangle{#1}\rrangle}
\newcommand{\semprog}[1]{\llbracket{#1}\rrbracket}
\newcommand{\semsingset}[1]{\semset{#1}^{\setsing}}
\newcommand{\semsingprog}[1]{\semprog{#1}^{\sing}}
\newcommand{\semsingsetgrn}[1]{\semset{#1}^{\setsinggrn}}
\newcommand{\semsingproggrn}[1]{\semprog{#1}^{\singgrn}}
\newcommand{\semsingsetyel}[1]{\semset{#1}^{\setsingyel}}
\newcommand{\semsingprogyel}[1]{\semprog{#1}^{\singyel}}
\newcommand{\semfullset}[1]{\semset{#1}^{\setfull}}
\newcommand{\semfullprog}[1]{\semprog{#1}^{\full}}
\newcommand{\semweakset}[1]{\semset{#1}^{\setweak}}
\newcommand{\semweaksetgrn}[1]{\semset{#1}^{\setweakgrn}}
\newcommand{\semweaksetgrnbad}[1]{\semset{#1}^{\setweakgrn_{\mathbf{fine}}}}
\newcommand{\semweakproggrnbad}[1]{\semprog{#1}^{\setweakgrn_{\mathbf{fine}}}}
\newcommand{\semweakproggrn}[1]{\semprog{#1}^{\weakgrn}}
\newcommand{\semweaksetgrnnc}[1]{\semset{#1}^{\setweakgrn_{\mathbf{coarse}}}}
\newcommand{\semweaksetyel}[1]{\semset{#1}^{\setweakyel}}
\newcommand{\semweakprogyel}[1]{\semprog{#1}^{\weakyel}}
\newcommand{\semulagset}[1]{\abstrsapp{La}{#1}}
\newcommand{\semulvsset}[1]{\abstrsapp{Lv}{#1}}
\newcommand{\green}{divergence\text{-}aware\xspace}
\newcommand{\greenCaps}{Divergence\text{-}Aware\xspace}
\newcommand{\yellow}{divergence\text{-}agnostic\xspace}
\newcommand{\yellowCaps}{Divergence\text{-}Agnostic\xspace}
\newcommand{\greensup}{t\xspace}
\newcommand{\yellowsup}{\xspace}
\newcommand{\sing}{\mathit{a}}
\newcommand{\singyel}{\mathit{a\yellowsup}}
\newcommand{\singgrn}{\mathit{a\greensup}}
\newcommand{\setsing}{\mathbf{a}}
\newcommand{\setsingyel}{\mathbf{a\yellowsup}}
\newcommand{\setsinggrn}{\mathbf{a\greensup}}
\newcommand{\full}{\mathit{aw}}
\newcommand{\setfull}{\mathbf{aw}}
\newcommand{\weak}{\mathit{v}}
\newcommand{\weakyel}{\mathit{v\yellowsup}}
\newcommand{\weakgrn}{\mathit{v\greensup}}
\newcommand{\setweak}{\mathbf{v}}
\newcommand{\setweakyel}{\mathbf{v\yellowsup}}
\newcommand{\setweakgrn}{\mathbf{v\greensup}}
\newcommand{\prog}{c}
\newcommand{\progset}{C}
\newcommand{\Red}{\mathit{reduce}}
\newcommand{\itrp}{\mathtt{interpret}}
\newcommand{\opn}{\alpha}
\newcommand{\opset}{\mathcal{A}}
\renewcommand{\pod}[1]{\allowbreak\mathchoice
  {\if@display \mkern 18mu\else \mkern 8mu\fi (#1)}
  {\if@display \mkern 18mu\else \mkern 8mu\fi (#1)}
  {\mkern4mu(#1)}
  {\mkern4mu(#1)}
}
\keywords{Unrealizability Logic, compositional semantics, infinite sets of programs}
\begin{document}

\title{Semantics of Sets of Programs}

\author{Jinwoo Kim}
\authornote{Jinwoo Kim and Shaan Nagy contributed equally to this paper.}
\orcid{0000-0002-3897-1828}
\affiliation{%
  \institution{University of California-San Diego}
  \city{San Diego}
  \country{USA}
}
\email{pl@ucsd.edu}

\author{Shaan Nagy}
\authornotemark[1]
\orcid{0000-0001-8015-5421}
\affiliation{%
  \institution{University of California-San Diego}
  \city{San Diego}
  \country{USA}
}
\email{shnagy@ucsd.edu}

\author{Thomas Reps}
\orcid{0000-0002-5676-9949}
\affiliation{%
  \institution{University of Wisconsin-Madison}
  \city{Madison}
  \country{USA}
}
\email{reps@cs.wisc.edu}

\author{Loris D'Antoni}
\orcid{0000-0001-9625-4037}
\affiliation{%
  \institution{University of California-San Diego}
  \city{San Diego}
  \country{USA}
}
\email{ldantoni@ucsd.edu}



\begin{abstract}
Applications like program synthesis sometimes require proving that a property holds for all of the 
infinitely many programs described by a grammar---i.e., an inductively defined set of programs. 
Current verification frameworks overapproximate programs' behavior when sets of programs contain loops, including two Hoare-style logics that
fail to be relatively complete
when loops are allowed. In this work, we prove that compositionally verifying simple properties for infinite sets of programs requires tracking distinct program behaviors over unboundedly many executions. Tracking this information is both necessary and sufficient for verification. We prove this fact in a general, reusable theory of denotational semantics that 
can model the expressivity and compositionality of
verification techniques
over infinite sets of programs. We construct the minimal compositional semantics that captures simple properties of sets of programs and use it to
derive the first sound and relatively complete Hoare-style logic for infinite sets of programs. Thus, our methods 
can be used to design minimally complex, compositional verification techniques for sets of programs.

\end{abstract}
\maketitle

\section{Introduction}
\label{sec:intro}
The central problem of this work is to understand what formal semantics are needed when verifying infinite sets of imperative programs.

\begin{example} [Verifying a Set of Imperative Programs]
\label{ex:motive}
Consider the following grammar, which defines a set $L(W)$ of loops:
\begin{align*}
    W& ::= \Ewhile{x < E}{x:=x + 1} \\
    E& ::= 0~|~E + 2 
\end{align*}

\noindent Can we prove that, for every loop $w \in L(W)$, running $w$ on the initial value $x=0$ results in a final state in which $x$ is even?

\end{example}

The need to verify
a property of an infinite set of programs
appears in several domains. 
For example, in language-wide verification, one may wish to prove that every program in a given domain-specific language 
terminates or satisfies some safety property.
Another example is verifying programs that contain constructs like \texttt{eval}, 
which load code as a string/tree at run time and dynamically execute it. 
In this setting, one may need to reason about 
infinitely many  possible strings/trees that represent programs and are passed to \texttt{eval}~\cite{Thesis:Malmkjaer93}.
A third example, and the main motivation behind our work, is proving unrealizability of program-synthesis problems~\cite{uls}, 
where the goal is to determine whether all programs in a search space satisfy the negation of a given specification---i.e., no program in the search space matches the desired specification. 
Unrealizability of program-synthesis problems can be used to speed up program synthesis and also to prove optimality of a program $\prog$ with respect to a given metric~\cite{HuD18}---i.e., by
showing that the problem of synthesizing a program that is better than $\prog$ is unrealizable.

While there are many techniques to prove that a \emph{particular} loop $w \in L(W)$ satisfies the specification given in Example~\ref{ex:motive}, no verification technique that we are aware of can prove that the specification holds for
\emph{all} $w \in L(W)$.

Our two main goals in this paper are: \rone to present a fundamental theoretical
obstacle that
makes it hard to design effective frameworks for reasoning about extensional properties (i.e., input-output properties) of sets of deterministic imperative programs with loops (e.g. \Cref{ex:motive}), and \rtwo in light of this obstruction, to present a provably simplest semantic basis for designing verification techniques. To ground our study, we first consider reasoning about 
individual imperative programs.

\paragraph{Denotational Semantics for Single Imperative Programs}
Formal verification can always be defined with respect to a denotational semantics---i.e.,
a map that assigns a mathematical object, called a denotation, to each program. 
These denotations 
capture the program information one is interested in studying.
A familiar example of
a denotation is a
\textit{state transformer}---i.e., a function (or  relation) that captures how a program maps input states to output states.

In this paper, we are interested in understanding how powerful denotations must be 
to capture extensional properties for sets of programs. 
For a \textit{single program} $\prog$, an 
\textit{extensional property} 
is a property of the relation between inputs and outputs captured by the state transformer of $\prog$.~\footnote{In this paper, we use $c$ to refer to arbitrary single programs (i.e., program statements or expressions). We use $s$ to refer specifically to program statements (derivable from $\Stmtvars$ in the grammar given in \Cref{fig:gimp}). Similarly, we use $C$ to refer to sets of programs, and we use $S$ to refer to sets of program statements.}
For example, monotonicity and commutativity are extensional properties---they are 
direct properties of the mathematical relation that the state transformer assigned to $\prog$ induces.
On the other hand, program text size or running time of the program are \textit{not} extensional properties---they are unrelated to the input-output relation described by $\prog$.

To reason about extensional properties
of individual deterministic imperative programs, one often uses a semantics that maps every syntactic program to a partial function mapping input program states $\sigma$ to output program states~\cite{Book:Stoy77,Book:Schmidt86}, illustrated in the following example:
\begin{equation}\label{eq:indiv_semantics}
    \sem{\Eassign{x}{10}} = \lambda \sigma. \sigma\subs{10}{x}
\end{equation} 

\noindent
This semantics is the simplest one that expresses single-input extensional properties (i.e., the output
state of the program on a given single input state $\sigma$). Despite its simplicity, this denotational semantics boasts two key properties that enable effective verification techniques to be built on top of it, the second of which enables the first:

\begin{description}
    \item[Compositionality.] The denotational semantics of a complex program can be expressed precisely in terms of the semantics of its subprograms, as shown by the following example:  
    \[\sem{\Eseq{\Eassign{x}{10}}{\Eassign{y}{12}}} = \sem{\Eassign{y}{12}} \circ \sem{\Eassign{x}{10}}\] 

    \item[Expression of Multiple Executions.] The semantics of a program is capable of describing how the program acts on multiple different input states. For example, to understand how $\prog$ acts on two states rather than just one, one need not define a new semantics. Instead, 
    one can apply $\sem{\prog}$ to each state:
    \begin{equation}\label{eq:express_multi_exec_indiv}
        (\sigma_1, \sigma_2) \mapsto (\sem{\prog}(\sigma_1), \sem{\prog}(\sigma_2))
    \end{equation} 
\end{description}

Both traits 
are desirable for program analysis and verification.
Compositionality enables scalable program analysis:
a program can be analyzed by decomposing the original analysis into simpler ones over 
smaller subprograms. 
Expression of multiple executions enables compositionality over constructs that repeatedly execute code (e.g., while loops):
to determine the behavior of a while loop ``$\Ewhile{b}{s}$'' on an input state $\sigma$ compositionally, 
one must understand how $b$ and $s$ behave on every state that appears in a loop iteration. 
For example, to see that 
$\sem{\Ewhile{x < 2}{x := x + 1}}\statelist{\stateentry{x}{0}}$ $=\statelist{\stateentry{x}{2}}$, 
one must observe multiple executions of the guard and body:

\begin{center}
\setlength{\tabcolsep}{8pt}
\begin{tabular}{c c c} 
    $\sem{x {< 2}}\statelist{\stateentry{x}{0}} ~=~ t$
    &$\longrightarrow$
    &$\sem{x := x {+} 1}\statelist{\stateentry{x}{0}} ~=~ \statelist{\stateentry{x}{1}}$\\
    & $\swarrow$ &\\
    $\sem{x {< 2}}\statelist{\stateentry{x}{1}} ~=~ t$
    &$\longrightarrow$
    &$\sem{x := x {+} 1}\statelist{\stateentry{x}{1}} ~=~ \statelist{\stateentry{x}{2}}$\\
    & $\swarrow$ &\\
    $\sem{x {< 2}}\statelist{\stateentry{x}{2}} ~=~ f$& &
\end{tabular}
\end{center}

Although one might take the power to express
multiple executions for granted in this single-program semantics, 
it turns out to be surprisingly hard to design a semantics of \emph{sets} of programs that can express multiple executions for multiple programs \textit{at once}. 
The reason is that 
``natural'' extensions lifting single-program semantics to
semantics for sets of programs tend to mix the outputs of 
different programs together, making it hard to tell 
when two possible outputs on two different inputs can be produced by the same program. 
As a result, it is hard to design a compositional semantics for a set of programs when the programs can contain loops.

\paragraph{Challenges in Designing a Denotational Semantics for Sets of Imperative Programs}

Taking inspiration from the properties of denotational semantics of individual programs, this paper addresses the following objective:

\begin{mybox}
    \textit{Design a \textbf{compositional} semantics for \textbf{sets of programs} that can express \textbf{extensional properties}.}
\end{mybox}
  The simplest extensional properties are 
  \emph{single-input extensional properties}, 
  which for a single program $c$, 
  are the facts
  that can be
  deduced
  by computing $\sem{c}(\state)$ on some
  state $\state$.
  To generalize single-input extensional properties to a set of programs $C$, our first attempt is a straightforward lifting of the single-program semantics in \Cref{eq:indiv_semantics}---which we call $\semsingset{\cdot}$, the ``program-agnostic'' semantics for sets of programs. 
Whereas $\sem{\prog}$ has the signature $\State \rightarrow \State$, 
the signature of 
$\semsingset{\progset}$ must be 
$2^{\State} \rightarrow 2^{\State}$ 
because $\progset$ may generate multiple outputs for a single input 
(i.e., one for each $\prog \in \progset$).
To maintain compositionality on, e.g., sequential composition ($\Eseq{S_1}{S_2}$),
$\semsingset{\progset}$ must have signature 
$2^{\State} \rightarrow 2^{\State}$---that is, 
\emph{single-input} extensional properties require reasoning about 
\emph{sets} of inputs in the case of sets of programs.
Then on a given set of input states $P$, $\semsingset{\progset}$ 
returns the set of all output states that programs in $\progset$ can generate: 
\begin{equation}
\label{eq:agnostic-sem-intro}    
\semsingset{\progset} = \lambda P. \{\sem{\prog}(p) \mid \prog \in \progset, p \in P\}
\end{equation}
(In this paper, we will use $\abstrs{}$---typically with a distinguishing superscript, such as 
$\semsingset{\progset}$---to denote a 
semantics that operates over a set of programs.)
$\semsingset{\progset}$, as defined in \Cref{eq:agnostic-sem-intro},
has been used implicitly in prior work as the basis of relatively incomplete 
Hoare-style logics for sets of programs without loops~\cite{uls, ulw}. 
Crucially, \Cref{eq:agnostic-sem-intro} is the \emph{least-expressive} semantics
(formally, the \emph{coarsest-grained} semantics [see \Cref{def:granularity}]) 
that captures single-input extensional properties of sets of programs.
In other words, \Cref{eq:agnostic-sem-intro} is the least-expressive semantics 
that can specify the set of all possible output states 
that can be produced by running some program in $\progset$ on a single given input state $\sigma$. 
Thus, it is reasonable to say that a useful verification technique 
for sets of programs should capture $\semsingset{\cdot}$ at a minimum.

%

As with $\sem{\cdot}$, the semantics for sets 
$\semsingset{\cdot}$ must also be capable of 
expressing multiple executions to achieve compositionality for loops.
However, for sets of programs, $\semsingset{\cdot}$ is surprisingly \emph{too weak} to be 
compositional for loops: it cannot capture the semantics of multiple executions with sufficient precision.
The first contribution of this paper is a
formal
proof that  although the semantics $\semsingset{\cdot}$ 
is compositional for sets of loop-free programs (\Cref{thm:ag_loop_free_comp}),
$\semsingset{\cdot}$ is not compositional for sets of programs that are allowed to contain loops (\Cref{thm:sing_noninductive}).
The non-compositionality of $\semsingset{\cdot}$ is due to the fact that, 
for a set of loops $\Ewhile{B}{S}$, 
it is not always possible 
to derive $\semsingset{\Ewhile{B}{S}}$ from only $\semsingset{B}$ and $\semsingset{S}$, 
because $\semsingset{\cdot}$ cannot \textit{precisely express multiple executions} 
of individual programs in $B$ or $S$.

To see why $\semsingset{\cdot}$ cannot precisely express multiple executions, suppose that the set $S = \{{\color{orange}s_1}, {\color{cyan}s_2}\}$ contains two statements, 
and we are given two input states $\sigma_i$ and $\sigma_j$. 
We want to combine $\semsingset{S}(\sigma_i)$ and $\semsingset{S}(\sigma_j)$ to produce the set $\{(\sem{{\color{orange}s_1}}(\sigma_i), \sem{{\color{orange}s_1}}(\sigma_j)), (\sem{{\color{cyan}s_2}}(\sigma_i), \sem{{\color{cyan}s_2}}(\sigma_j))\}$, which tells us what can happen if we run each program in $S$ on both $\sigma_i$ and $\sigma_j$.
%
In \Cref{eq:express_multi_exec_indiv}, our semantics produced a single output state, so we directly combined the results into a tuple of states. Because $\semsingset{\cdot}$ returns a \emph{set} of output states, we must take the Cartesian product of $\semsingset{S}(\{\sigma_i\})$ and $\semsingset{S}(\{\sigma_j\})$ to get a set of tuples of states:
\begin{align*}
(\sigma_i, \sigma_j) \mapsto& \semsingset{S}(\{\sigma_i\}) \times \semsingset{S}(\{\sigma_j\})\\
&= \{\sem{{\color{orange}s_1}}(\sigma_i), \sem{{\color{cyan}s_2}}(\sigma_i)\} \times \{\sem{{\color{orange}s_1}}(\sigma_j), \sem{{\color{cyan}s_2}}(\sigma_j)\}\\
&= \{(\sem{{\color{orange}s_1}}(\sigma_i), \sem{{\color{orange}s_1}}(\sigma_j)), (\sem{{\color{orange}s_1}}(\sigma_i), \sem{{\color{cyan}s_2}}(\sigma_j)), (\sem{{\color{cyan}s_2}}(\sigma_i), \sem{{\color{orange}s_1}}(\sigma_j)), (\sem{{\color{cyan}s_2}}(\sigma_i), \sem{{\color{cyan}s_2}}(\sigma_j))\}    
\end{align*}

The {\color{orange}orange}-only $(\sem{{\color{orange}s_1}}(\sigma_i), \sem{{\color{orange}s_1}}(\sigma_j))$ and {\color{cyan}cyan}-only $(\sem{{\color{cyan}s_2}}(\sigma_i), \sem{{\color{cyan}s_2}}(\sigma_j))$ outcomes reflect the programs {\color{orange}$s_1$} and {\color{cyan}$s_2$}, but the outcomes with two colors---i.e., $(\sem{{\color{orange}s_1}}(\sigma_i), \sem{{\color{cyan}s_2}}(\sigma_j))$ and $(\sem{{\color{cyan}s_2}}(\sigma_i), \sem{{\color{orange}s_1}}(\sigma_j))$---do not correspond to any single program in $S$. 
The semantics $\semsingset{\cdot}$ does not track which programs produce which output states (i.e., it is program-agnostic), so it cannot tell whether or not a given pair of output states $(\sigma'_i, \sigma'_j) \in \semsingset{S}(\{\sigma_i\}) \times \semsingset{S}(\{\sigma_j\})$ is produced by the same program $s \in S$.

As a result, $\semsingset{\cdot}$ is not compositional over sets of programs containing loops.
In practice, this
property
means that, given $\semsingset{B}$ and $\semsingset{S}$, 
one cannot distinguish the set of loops $\Ewhile{B}{S}$ 
from the singleton set containing the nondeterministic program $\Ewhile{choose(B)}{choose(S)}$, 
which nondeterministically chooses an expression from $B$ 
and program statement from $S$ on every loop iteration. 
Note that, in the absence of loops, sets of programs can usually be modeled as a single 
nondeterministic program~\cite{nope}.

\paragraph{Compositional Denotational Semantics for Sets of Imperative Programs}
To resolve the noncompositionality of $\semsingset{\cdot}$, we must design a more expressive semantics that explicitly tracks
the effects of programs on multiple inputs.  
Our next contribution is a ``vector-state semantics,''
denoted by $\semweakset{\cdot}$,
which meets the above goal by mapping sets of input vectors of states $P$ to sets of output 
vectors that programs in $\progset$ can produce when applied to each entry of an input vector:
\[\semweakset{\progset} = \lambda P. \{[\sem{\prog}(\sigma_1), \cdots, \sem{\prog}(\sigma_n)] \mid \prog \in \progset, [\sigma_1, \cdots, \sigma_n] \in P\}\]

Unlike the program-agnostic semantics $\semsingset{\cdot}$, the vector-state semantics $\semweakset{\cdot}$ is capable of precisely expressing multiple executions.
The key difference is that the vectorized input-states group inputs together, preventing the loss of precision that occurred by taking Cartesian products when we attempted to 
express multiple executions using $\semsingset{\cdot}$.
As an illustration, 
taking the set of statements $S = \{{\color{orange}s_1}, {\color{cyan}s_2}\}$ as before, the vectorized semantics yields the following behavior:
\begin{align*}
    (\sigma_1, \sigma_2) \mapsto& \semweakset{S}(\{[\sigma_1, \sigma_2]\}) \\
    &= \{[\sem{s}(\sigma_1), \sem{s}(\sigma_2)] \mid s \in S\}\\
    &= \{[\sem{{\color{orange}s_1}}(\sigma_1), \sem{{\color{orange}s_1}}(\sigma_2)], 
         [\sem{{\color{cyan}s_2}}(\sigma_1), \sem{{\color{cyan}s_2}}(\sigma_2)]\}
\end{align*}

The main contribution of this paper is to show that $\semweakset{\cdot}$ is compositional, even over sets of loops (\Cref{thm:weak_compositional_yel}) and is the coarsest-grained (i.e., least expressive)
compositional semantics that is stronger than $\semsingset{\cdot}$ (\Cref{thm:weak_minimality_yel}). Thus, the vector-state semantics $\semweakset{\cdot}$ is the simplest semantics that can serve as the foundation of a compositional verification framework for sets of programs.

Taken together, our results show that a complete, compositional verification framework for sets of programs that can capture single-input extensional properties must capture at least $\semweakset{\cdot}$. This result sets a lower bound on the expressivity of complete, compositional verification frameworks for sets of programs. 
Whether a framework abandons completeness, abandons compositionality, removes loops from the language under consideration, or adopts the complexity of the vector-state semantics $\semweakset{\cdot}$ is a design choice to be made on a case-by-case basis.

\paragraph{Relationship to Program Synthesis}
Readers familiar with program synthesis may wonder how single-input extensional properties 
tie in with program-synthesis problems, where specifications are often written over 
multiple examples, as in programming-by-example (PBE)~\cite{sketch, flashfill}.
Single-input extensional properties correspond to single-example synthesis problems. 
Later, in \Cref{sec:weak-vector-sem}, we discuss how 
the vector-state semantics $\semweakset{\cdot}$ 
can be viewed as an extension of $\semsingset{\cdot}$ that supports reasoning about finitely many inputs simultaneously, 
allowing the vector-state semantics $\semweakset{\cdot}$ to model multiple-input PBE problems as well.

\paragraph{Contributions} 
Although we focus on single-input extensional properties, our work presents a \emph{general framework} for determining the minimum amount of information that must be tracked for a verification technique to be both compositional and relatively complete. In particular, we make the following contributions:

\begin{itemize}
    \item We introduce formalisms for reasoning about denotational semantics of sets of programs, including a notion of \emph{granularity} to capture the relative expressivity of various semantics (\S\ref{Se:prelim}).
    \item We introduce two program-agnostic
    semantics: $\semsingsetyel{\cdot}$ and $\semsingsetgrn{\cdot}$. The $\semsingsetyel{\cdot}$ semantics captures exactly single-input extensional properties as described above, and $\semsingsetgrn{\cdot}$ extends $\semsingsetyel{\cdot}$ with the ability to capture nontermination (\S\ref{Se:prelim}).
    \item We show that, because $\semsingsetyel{\cdot}$ and $\semsingsetgrn{\cdot}$ cannot express multiple executions, neither semantics is compositional over sets of programs when loops are allowed (\S\ref{sec:compositional-semantics}).
    \item We define two more-expressive semantics that
    operate over vector-states,
    $\semweaksetyel{\cdot}$ and $\semweaksetgrn{\cdot}$, and prove that $\semweaksetyel{\cdot}$ and $\semweaksetgrn{\cdot}$ are the least-expressive \textit{compositional} semantics 
    that are at least as expressive as
    $\semsingsetyel{\cdot}$ and $\semsingsetgrn{\cdot}$, respectively (\S\ref{Se:modestgoal}).
    \item We use the semantics $\semweaksetyel{\cdot}$ and $\semweaksetgrn{\cdot}$ to design the \emph{first two relatively complete, compositional Hoare-style 
    logics for sets of programs}, a goal not achieved
    by previous work (\S\ref{sec:proof_systems})~\cite{uls, ulw}.

\end{itemize}
\noindent
Given a verification technique for sets of programs, our framework lets us analyze its expressivity and compositionality by considering the semantics defined by the set of facts that the technique can prove.
In \Cref{sec:case_studies}, we give two concrete examples, where 
we expose and resolve the limitations of two different verification techniques by considering 
their associated semantics.

\Cref{Se:related} discusses related work. \Cref{Se:conclusion} concludes. 
The relative granularities of the semantics we introduce are 
depicted in \Cref{fig:granularity-lattice}.
\noappendix{
Proofs of major theorems appear in 
Appendices~\ref{app:proofs} and \ref{app:lattice}. 
Appendices~\ref{app:interpreter} and \revision{\ref{app:VectorStateSemantics}}
elaborate ideas introduced 
in the text about interpreters and the $\semsingsetgrn{\cdot}$ semantics, respectively.
}

\section{From Verification Frameworks to Semantics and Back}\label{sec:case_studies}

In this section, we demonstrate the utility of our theory by deriving semantics 
for two verification techniques for sets of programs: program logics~\cite{uls,ulw} for sets of programs 
(\Cref{subsec:setlogic}) and Hoare-style reasoning about an interpreter on sets of programs (\Cref{subsec:interpreter}).
In particular, our theory can 
\rone prove non-obvious tradeoffs between expressivity and completeness for these verification strategies, and 
\rtwo inspire new complete versions of these techniques using the vector-state semantics $\semweakset{\cdot}$. 

  We note that in this paper we use the word `program' to 
  refer to not only statements, but also Boolean and integer expressions---e.g., both 
  the statement $\Eassign{x}{x + 3}$ 
  and the expression $x + 3$ are referred to as programs.
  Later, when we define semantics for programs (\Cref{def:semantics_prog}) and 
  sets of programs (\Cref{def:semantics_set}), we mean that the 
  semantics is not limited to statements.

\begin{figure}
{\centering
    \includegraphics[width=0.7\linewidth]{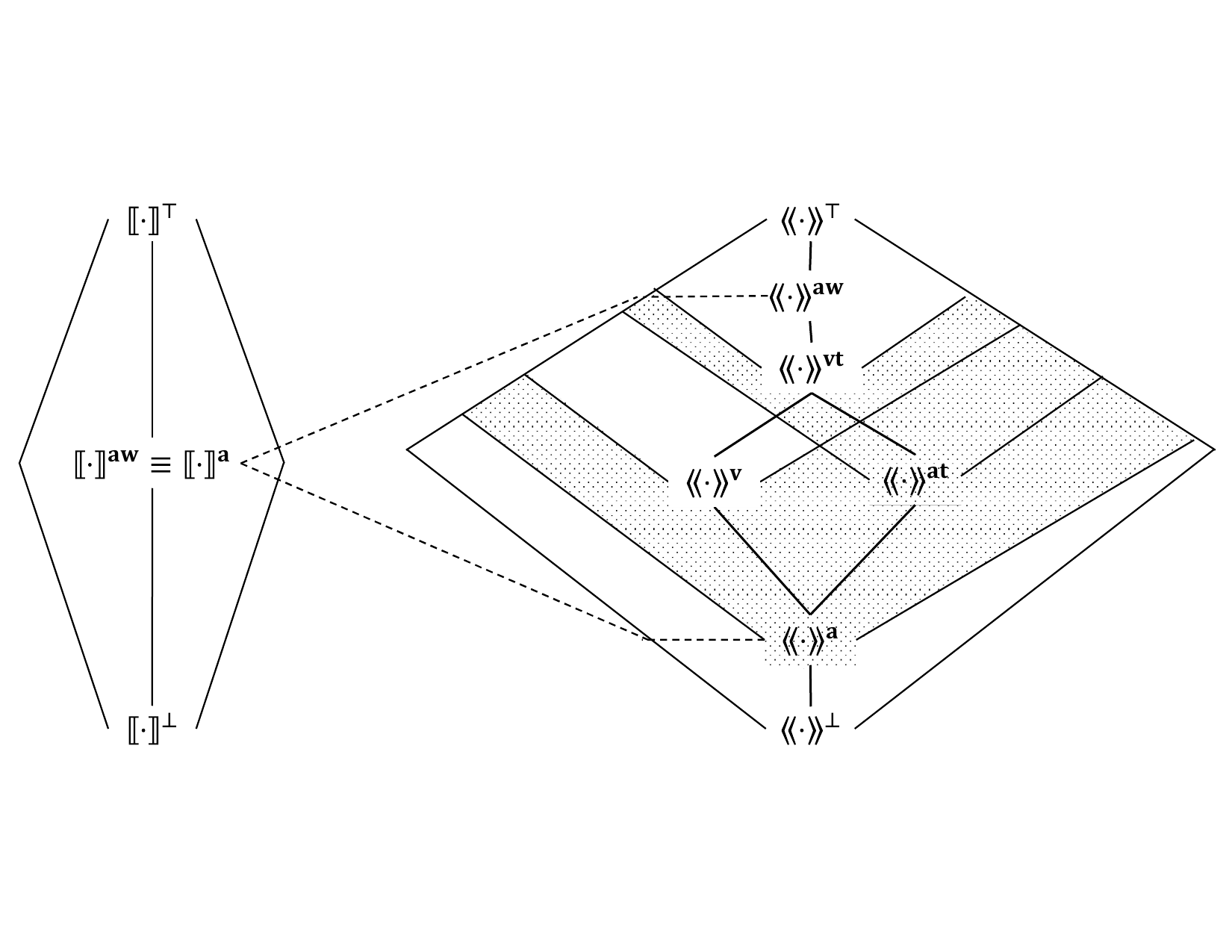}
    \caption{
      Partial orders that relate the different semantics discussed in the paper using granularity (finest at the top).
      The lattice on the left is for semantics of single programs; the lattice on the right is for semantics of sets of programs.
      Each semantics in each lattice is allowed to have a different output type.
        Semantics are ordered based on the granularity of the equivalence classes their denotations define 
        (see \Cref{def:granularity}).
      The top semantics $\sem{\cdot}^{\top} $ and $\abstrs{\top}$ are the respective (syntactic) identity functions---i.e., $\sem{\cdot}^{\top} = \lambda \prog. \prog$ and $\abstrs{\top} = \lambda \progset. \progset$.
      The bottom semantics $\sem{\cdot}^{\bot}$ and $\abstrs{\bot}$ indicate constant semantics that return the same element for all (sets of) programs.
      The partial order for the semantics of sets of programs has finer structure than the partial order for the semantics of single programs:
      there exist elements between $\semfullset{\cdot}$ and $\semsingset{\cdot}$.
      Moreover, 
      between the \yellow semantics $\semsingsetyel{\cdot}$ and $\semweaksetyel{\cdot}$ and between the \green semantics $\semsingsetgrn{\cdot}$ and $\semweaksetgrn{\cdot}$, there exist two overlapping crosshatched zones in which all semantics are noncompositional.
      \label{fig:granularity-lattice}
      \vspace{-3mm}
}
    }
\end{figure}
\subsection{A Program Logic for Sets of Programs}
\label{subsec:setlogic}

A program logic (e.g., Hoare Logic~\cite{hoare}) is a proof system for reasoning about program properties. 
In this case study, the goal is to design a program logic $L$ for reasoning about 
single-input extensional properties of sets of programs.
Previous attempts at designing such a logic~\cite{uls, ulw} have resulted in incomplete logics, 
due to implicitly basing the logics on $\semsingset{\progset}$ from \Cref{sec:intro}; we illustrate 
why exactly this is the case, providing an outline here and more formal details in \Cref{sec:proof_systems}.

To design such a logic, one needs to define 
\rone some kind of meaningful judgments ($J(\progset)$) about sets of programs $\progset$, and 
\rtwo inference rules that allow the logic to prove new judgments
from known judgments or axioms, for example,
\begin{equation}
\label{eq:while-eq}
\infer[]
{\vdash J_3(\Ewhile{B}{S})}
{\vdash J_1(B) \quad \vdash J_2(S)}
\end{equation}
where $B$ is a set of Boolean expressions, $S$ is a set of statements.
Critically, the inference rules should be compositional:
for example, 
any valid judgment about the set $\Ewhile{B}{S}$ 
should be derivable from valid judgments about $B$ and $S$, as in \Cref{eq:while-eq}.

\paragraph{Existing Logics' Judgments are Noncompositional}
In the logic proposed by Nagy et al.~\cite{ulw}, judgments are triples, 
denoted by $\utriple{P}{\progset}{Q}$, where $\progset$ is a set of programs, and $P$ and $Q$ are sets of states. 
In terms of semantics,
the meaning of
this kind of triple can be written as follows:
\begin{equation}
  \label{Eq:TripleToSingSet}
  \utripleag{P}{\progset}{Q} \defeq \semsingset{\progset}(P) \subseteq Q
\end{equation}
Other prior work interprets triples slightly differently but relies on the same intuition~\cite{uls}.

Observe that a denotational semantics over a set of programs is just a map that assigns to each 
set of programs $\progset$ a mathematical object 
associated with $\progset$
(i.e., its denotation).
Thus, one can also define a
``logic-based''
semantics $\semulagset{\cdot}$ that associates with $\progset$ 
 the set of valid judgments about $\progset$:
\begin{equation}
  \semulagset{\progset} \defeq \set{(P,Q) \mid~ \utripleag{P}{\progset}{Q} \text{ holds}}
\end{equation}
\noindent
Under this definition, 
the logic-based semantics
$\semulagset{\cdot}$ and 
the program-agnostic semantics $\semsingset{\cdot}$ carry the same information: 
$\semulagset{\progset}$ can be derived from $\semsingset{\progset}$ and vice versa.

Previous work has failed to find relatively complete, compositional inference rules for a logic over such triples. Our theory proves that such rules do not exist: because both the program-agnostic semantics $\semsingset{\cdot}$ and the logic-based semantics $\semulagset{\cdot}$ carry the same information, noncompositionality of the program-agnostic semantics $\semsingset{\cdot}$ 
(see \Cref{thm:sing_noninductive}) implies noncompositionality of 
the logic-based semantics $\abstrs{La}$.
The non-compositionality of $\semulagset{\cdot}$ implies that there is
no complete system of compositional inference rules for such triples (see \Cref{thm:no-compo-prog-ag-while})!

More concretely, there are sets of Boolean guards $B$ and loop bodies $S$ for which there exist true triples $\utripleag{P}{\Ewhile{B}{S}}{Q}$ that cannot be proven from any true triples about $B$ and $S$.
For example, if $B = \{x == 1, \neg(x == 1)\}$ then we cannot compositionally prove the triple
\begin{equation}
  \label{Eq:UnprovableTriple}
  \utripleag{x=0}{(\Ewhile{\{x == 1, \neg(x == 1)\}}{\Eassign{x}{x+1}})}{x=0 \lor x=1}
\end{equation}
Without the ability to track the effects of programs on multiple inputs, the set of guards $B$ is indistinguishable from the set of guards $B' = \{x == 2, \neg(x == 2)\}$,
because both sets can produce $\Etrue$ and $\Efalse$ on any input state (i.e., $\semsingset{B} = \semsingset{B'}$ so $\semulagset{B} = \semulagset{B'}$).
In other words, every triple that is true of $B$ is true of $B'$.
However, changing the set of guards from $B$ to $B'$ invalidates the triple in \Cref{Eq:UnprovableTriple} (i.e., $\utripleag{x=0}{(\Ewhile{B'}{\Eassign{x}{x+1}})}{x=0 \lor x=1}$ is false because $x=2$ is a possible outcome). A compositional logic that cannot distinguish
between
$B$ and $B'$ also cannot distinguish
between
$\Ewhile{B}{\Eassign{x}{x+1}}$ and $\Ewhile{B'}{\Eassign{x}{x+1}}$, so no compositional logic can prove the
triple given in \Cref{Eq:UnprovableTriple}.

\paragraph{Compositional Triples via Vector-State Semantics}
To design a compositional logic, we need to strengthen our triples. Our new triples must yield a semantics that is at least as expressive as the vector-state semantics $\semweakset{\cdot}$---i.e., the coarsest compositional semantics stronger than the program-agnostic semantics $\semsingset{\cdot}$---otherwise the resulting semantics will be noncompositional like $\semulagset{\cdot}$ (\Cref{thm:weak_minimality_yel}).
Following \Cref{Eq:TripleToSingSet},
we can take $P$ and $Q$ to be sets of vector states, defining triples by \Cref{Eq:TripleToWeakSet} and inducing the vectorized logic-based semantics in \Cref{eq:ul_vs_sem}:
\begin{align}
  \utriplevs{P}{&\progset}{Q} \defeq \semweakset{\progset}(P) \subseteq Q\label{Eq:TripleToWeakSet}\\[0.5em]
\abstrsapp{Lv}{\progset} \defeq &\{(P,Q) \mid~ 
  \utriplevs{P}{\progset}{Q} \text{ holds}
  \}\label{eq:ul_vs_sem}
\end{align}

With this definition, the vectorized logic-based semantics $\semulvsset{\cdot}$ is equivalent to the vector-state semantics $\semweakset{\cdot}$, 
and compositional inference rules follow directly from our constructive proof that 
$\semweakset{\cdot}$ is compositional (\Cref{thm:weak_compositional_yel}). 
Moreover, the above definition is the simplest notion of triples that permits a compositional inference rule for sets of while loops 
(given in \S\ref{sec:vs_have_while_rules}). 
To capture hyperproperties, prior work has proposed similar---but different---semantics over vectors,
but has only produced incomplete program logics~\cite{uls, ulw}.\footnote{There is a flaw in the proof of completeness given in~\citet{uls}, which has been discussed with the authors.\label{note1}
}

\subsection{Analyzing Sets of Programs Through an Interpreter}
\label{subsec:interpreter}
 
\noindent
\lstset{%
    deletekeywords={do, if, then},
    keywordstyle=\color{blue},
    morekeywords={match, with, if, then}
}
\begin{wrapfigure}{r}{0.4\textwidth}
  \vspace{-5mm}
  \begin{lstlisting}[linewidth=\linewidth]
State interpret (t: Prog, x: State) {
  match t with 
    ...
    case While b do s: {
      State r;
      State b1 = interpret(b, x);
      if (b1.bt) {
        State x1 = interpret(s, x);
       (*@ \color{red}{r = interpret(While b do s, x1);}\label{line:recWhile}@*)
      }
      else {
        r = x;
      }
      return r;
    }
    ...
}
  \end{lstlisting}
  \vspace{-4mm}
  \caption{An interpreter for
  the language described in \Cref{fig:gimp}.
  The symbol $\mathtt{bt}$ indicates a reserved variable in $\mathtt{State}$ 
  that stores the result of evaluating Boolean expressions.
}
  \label{fig:interpreter}
  \vspace{-4mm}
\end{wrapfigure}
Readers might feel that compositionality is an issue only because we are defining our program logic over sets of programs. 
One way to avoid lifting our reasoning to sets is to instead perform a Hoare-style analysis of an interpreter 
(e.g., $\itrp$ in \Cref{fig:interpreter}).\footnote{
    Because $\itrp$ is a  recursive function, one technically needs to consider proofs of infinite size or apply recursive Hoare logic~\cite{nipkow}. These approaches require additional steps (e.g., introducing an induction hypothesis). We ignore these details to simplify the presentation; 
    \Cref{app:interpreter} gives a more thorough treatment.
  }
Rather than analyzing $\progset$ directly as in 
\Cref{subsec:setlogic}, one would carry out an inductive argument about the behavior of $\itrp$ over $\progset$.
The hope is that because $\itrp$ is a single program,
an analysis that is compositional over $\itrp$ would avoid the complexity of an analysis that is compositional over $\progset$ as in \Cref{subsec:setlogic}.

Using the theory presented in this paper, we can \textit{prove} that this approach does not avoid the hardness of compositional reasoning over sets of programs.
In particular, we consider a Hoare-style analysis of the interpreter $\itrp$
where the preconditions are 
limited to ``Cartesian'' predicates: 
predicates of the form 
$\varphi_1(t) \wedge \varphi_2(x)$, where 
$\varphi_1$ is forbidden from referencing the state $x$ and 
$\varphi_2$ is forbidden from referencing the program $t$.
Maintaining this separation is necessary to
prevent one from writing 
preconditions that reason about interpretation 
explicitly,
which would defeat the purpose of analyzing an interpreter.

In this section, we demonstrate that Hoare-style analysis limited to Cartesian preconditions induces a semantics $\abstrs{\itrp_a}$ similar to $\abstrs{La}$. 
The completeness of Hoare-style analysis limited to Cartesian preconditions would imply 
compositionality of the induced semantics $\abstrs{\itrp_a}$, but $\abstrs{\itrp_a}$ is not compositional. 
Thus, such an analysis
over recursive interpreters of the type
$\itrp: \Prog \times \State \rightarrow \State$ is bound to be incomplete.
On the other hand, the 
analysis may be complete if
we lift the interpreter to act on vector-states 
(i.e., $\itrp: \Prog \times \State^* \rightarrow \State^*$), 
because the existence of a compositional vector-state semantics as mentioned in \Cref{sec:intro} 
implies that a compositional analysis must exist.
We sketch the main idea in this section,
and 
provide further details in \Cref{app:interpreter}.

\paragraph{Noncompositionality of a Simple Interpreter}
To analyze the extensional properties of a syntactically defined set of programs $\progset$, 
one can compute $\sem{\itrp}(\progset \times P)$ 
given an arbitrary set of inputs $P$.
This approach is equivalent to deriving a Hoare triple of the form:
\[
  \triple{t \in \progset \land x \in P}{y = \itrp(t, x)}{y \in Q}.
\]
For example, to understand the possible outputs of the set of loops $W$ from \Cref{ex:motive} on the input $\statelist{\stateentry{x}{0}}$, one can 
prove
\[
  \triple{t \in W \land x = 0}{y = \itrp(t, x)}{y = 0\pmod 2}.
\]

However, reasoning about such triples can be complicated.
Cartesian preconditions of the form $t \in \progset \land x \in P$
can check membership of an input $(t, x)$ in the precondition by checking 
$t \in \progset$ and $x \in P$ \emph{independently}.
Critically, because $\progset$ is defined syntactically and 
$P$ cannot reference $t$, one cannot relate the syntax of $t$ to its semantics $\sem{t}$ inside the precondition.
Observe that Cartesian preconditions have a direct correspondence with the program-agnostic semantics $\semsingset{\cdot}$: 
a triple
$\triple{t \in \progset \land x \in P}{\itrp(t, x)}{Q}$ holds iff $\semsingset{\progset}(P) \subseteq Q$.
Unfortunately, when $\mathtt{t}$ contains a loop, such Cartesian preconditions often cannot be proven without using substantially more complicated 
``non-Cartesian'' preconditions that relate $\mathtt{t}$ and $\mathtt{x}$.
In this situation, we no longer have a direct correspondence between Hoare triples and $\semsingset{\cdot}$, which corresponds to reasoning about complex properties outside the program-agnostic semantics $\semsingset{\cdot}$.

For example, consider proving the triple $\triple{t \in \Ewhile{B}{S} \land x \in P}{y = \itrp(t, x)}{y \in Q}$.
Analysis is straightforward until \Cref{line:recWhile} of \Cref{fig:interpreter}, emphasized in {\color{red}red}, where the semantics of a loop 
requires that the same $\mathtt{b} \in B$ and $\mathtt{s} \in S$ be 
\emph{repeated} for each iteration of a loop.
Upon reaching \Cref{line:recWhile}, there is a relation formed with the input state and the program:
$\mathtt{x}_1$
is created using a specific combination of $\mathtt{b} \in B$ and $\mathtt{s} \in S$, such that 
$\sem{\mathtt{b}}(\mathtt{x}_0) = \Etrue$ 
and
$\sem{\mathtt{s}}(\mathtt{x}_0) = \mathtt{x}_1$,
for some input state $\mathtt{x}_0$.
Thus, the precondition over which the recursive call to $\itrp$ must operate becomes:
\begin{equation}
  \label{eq:interpret_precondition_two}
(\mathtt{t}, \mathtt{x}) \in \set{
  (\mathtt{while \ b \ do \ s}, \texttt{x}_1) \mid 
   \mathtt{b} \in B \wedge \mathtt{s} \in S \wedge 
   (\exists \mathtt{x}_0 \in P. 
   \sem{\mathtt{b}}(\mathtt{x}_0) = \Etrue
   \wedge
   \sem{\mathtt{s}}(\mathtt{x}_0) = \mathtt{x}_1)
   }
\end{equation}

\noindent
Although $t \in \Ewhile{B}{S} \land x \in P$ was initially a Cartesian precondition, 
\Cref{eq:interpret_precondition_two} is not---the additional condition 
$\exists \mathtt{x}_0 \in P.     
\sem{\mathtt{b}}(\mathtt{x}_0) = \Etrue \wedge \sem{\mathtt{s}}(\mathtt{x}_0
) = \mathtt{x}_1$ 
forces our precondition to reason about both the syntax of our programs (e.g., $s \in S$) as well as their semantic effects (i.e., $\sem{\mathtt{s}}(\mathtt{x}_0
) = \mathtt{x}_1$). 
This entanglement requires a complex notion of `interpretation' $\sem{\cdot}$ in our precondition that one would expect analysis of an interpreter to avoid.

Reasoning about non-Cartesian preconditions 
is actually an unavoidable feature of every 
correct recursive implementation of $\itrp: \Prog \times \State \rightarrow \State$
(\Cref{thm:itrp_noncart}).
To understand why this observation holds,
consider an interpreter-based semantics 
$\abstrs{\itrp_a}$ akin to $\abstrs{La}$ in \Cref{subsec:setlogic}:
    \begin{equation}
        \abstrsapp{\itrp_a}{\progset} = \set{(P, Q) \mid \triple{\mathtt{t} \in\ \progset \wedge \mathtt{x} \in P}{y = \itrp(t, x)}{Q} \text{ is true}}
    \end{equation}

If one could prove every Cartesian triple with a Hoare-logic proof using only Cartesian triples, 
one would be able to derive a compositional characterization of $\abstrs{\itrp_a}$. 
However, $\abstrs{\itrp_a}$ is noncompositional
because it is equivalent to the noncompositional program-agnostic semantics $\semsingset{\cdot}$, 
just like the logic-based semantics 
$\semulagset{\cdot}$.
Consequently, reasoning about non-Cartesian preconditions like \Cref{eq:interpret_precondition_two} 
is unavoidable for \emph{any} recursive implementation of $\itrp: \Prog \times \State \rightarrow \State$ 
(see \Cref{thm:itrp_noncart} for a proof).

\paragraph{Complete Interpreter Analysis}
To precisely reason about all facts of the form $\sem{\itrp}(\progset \times P)$ with Hoare logic, one can either 
\rone accept the complexity of non-Cartesian preconditions or 
\rtwo extend the signature of $\itrp$ so that $\mathtt{x}$ is a vector-state rather than a single input 
(i.e., $\itrp_v : \Prog \times \State^* \rightarrow \State^*$). 
In the second case, this new interpreter induces a semantics:
\begin{equation}
    \abstrsapp{\itrp_v}{\progset} = \set{(P, Q) \mid \triple{\mathtt{t} \in\ \progset \wedge \mathtt{x} \in P}{\itrp_v(t, x)}{Q} \text{ is true}},
\end{equation}
which is equivalent to the vector-state semantics
$\semweakset{\cdot}$.
Because 
$\semweakset{\cdot}$ is compositional (\Cref{thm:weak_compositional_yel}),
there will also exist a Hoare-style analysis of $\itrp_v$ 
that stays within the realm of Cartesian preconditions---at the cost of introducing vector-states.
%
Alternatively, one could forego completeness in exchange for 
a simple and compositional analysis.
For example, in \Cref{eq:interpret_precondition_two}, one could disallow
conditions such as 
$\exists \mathtt{x}_0 \in P.     
\sem{\mathtt{b}}(\mathtt{x}_0) = \Etrue \wedge \sem{\mathtt{s}}(\mathtt{x}_0) = \mathtt{x}_1$
and only consider preconditions that are Cartesian.
This approach would enlarge the possible input states to subsequent calls of $\itrp$, resulting in an imprecise, but still sound, analysis.

\section{Semantics of Sets of Programs}
\label{Se:prelim}


Having illustrated 
the implications of our semantic theory for the hardness of
practical 
verification techniques, we now turn to formalizing the ideas presented in 
Sections~\ref{sec:intro} and~\ref{sec:case_studies}.
We start by formally defining the sets of programs we are interested in (\S\ref{sec:ind_sets}), 
proceed to formalize what a semantics over (sets of) programs is mathematically 
(\S\ref{sec:semantics} and \S\ref{sec:full_single_semantics}), then 
define a notion of \emph{granularity} which allows us to formally compare the 
expressive power of different semantics (\S~\ref{sec:gran_and_comp}).

\subsection{Inductively Defined Sets of Programs} \label{sec:ind_sets}

\begin{figure}[bt!]
  {\small
  \[
  \begin{array}{lll}
     \Stmtvars & ::= &
      \Eassign{\vars}{\Expvars} \mid \Eseq{\Stmtvars}{\Stmtvars} \mid\\
      &&\quad\Eifthenelse{\Boolvars}{\Stmtvars}{\Stmtvars}  \mid \Ewhile{\Boolvars}{\Stmtvars}\\
    \Expvars & ::= &
      \vars \mid 1 \mid 0 \mid \Expvars + \Expvars \mid \Expvars - \Expvars\\    
    \Boolvars & ::= &
      \Et \mid \Ef \mid \lnot \Boolvars \mid \Boolvars \wedge \Boolvars \mid \Expvars < \Expvars \mid \Expvars == \Expvars\\
      \vars & ::= & 
      x \mid x_1 \mid x_2 \mid \cdots 
  \end{array}\]}
  \vspace{-2mm}
  \caption{The grammar $\gimp$, which defines the entire set of terms considered in this paper.}
  \label{fig:gimp}
\end{figure}

In this paper, we consider a small deterministic imperative language that contains loops, 
defined via $\gimp$ in \Cref{fig:gimp}.
In particular, we are interested in sets of programs (i.e., subsets of $\gimp$)
that are inductively defined by a regular tree grammar.

\def\alphabet{\mathcal{A}}
\def\nonterminals{\mathcal{N}}
\begin{definition}[Regular Tree Grammar]
A \emph{(typed) regular tree grammar} (RTG) is a tuple $G = (\nonterminals,\alphabet, \mathit{Start},a,\delta)$,
where $\nonterminals$ is a finite set of nonterminals;
$\alphabet$ is a ranked alphabet; $\mathit{Start} \in \nonterminals$ is an initial nonterminal;
$a$ is a type assignment that gives types for members of $\alphabet \cup \nonterminals$;
and $\delta$ is a finite set of productions of the form
$N_0 \rightarrow \alpha^{(i)}(N_1,...,N_i)$,
where for $0 \leq j \leq i$, each $N_j \in \nonterminals$ is a nonterminal
such that if $a_{\alpha^{(i)}} = (\type_0,\type_1,...,\type_i)$ then $a_{N_j} = \type_j$.
\end{definition}

\begin{definition} [Inductively Defined Set of Programs]
  \label{def:inductive-programs}
    Given a set of programs $\progset$ and an RTG $G$ with start nonterminal $N$, we say that $\progset$ is \emph{inductively defined by} $G$
    if $\progset = L(N)$.
\end{definition}

For this paper, the alphabet of an RTG consists of constructors for each of the constructs of $\gimp$.
For simplicity of presentation, 
we elide the arity of constructors and use infix notation (e.g., $\Eassign{x}{0}$, rather than $:=^{(2)} (x^{(0)}, 0^{(0)})$)) and 
also allow inlining of nonterminals 
(e.g., $S' ::= \Eassign{x_1}{x_1}$ instead of the three productions 
$S' ::= \Eassign{V_x}{E_x}$, $E_x ::= V_x$, and $V_x ::= x_1$). 
We assume grammars use productions that differ from the ones in $\gimp$ only due to the nonterminal name
(i.e., the constructor $\op$ of a production must be from $\gimp$ and have the same arity).
We call a production $N_0 ::= \alpha^{(i)}(N_1,\cdots,N_i)$ \emph{valid with respect to $\gimp$} if replacing each nonterminal $N_j$ with the nonterminal of the same type in the set $\{\Stmtvars, \Expvars, \Boolvars\}$ yields a
production in $\gimp$ (e.g., production $S_1 ::= S_2; S_3$ is valid with respect to $\gimp$ because $\Stmtvars ::= \Stmtvars; \Stmtvars$
is a production in $\gimp$).

Given a nonterminal $N$ of a grammar $G$, we will use $N$ itself as shorthand to refer to $L(N)$ 
(the set of terms derivable via the grammar $G$ from the nonterminal $N$). 
Thus, $\Stmtvars$ refers also to the set of statements derivable from $\Stmtvars$. 
Given a term $\prog$, 
we define $\varfun(\prog)$ to be the set of variables appearing in $\prog$---e.g., 
$\varfun(\Eassign{x_1}{x_2})=\{x_1,x_2\}$.
For a set of programs $\progset$, we say 
$\varfun(\progset) = \bigcup\limits_{\prog \in \progset} \varfun(\prog)$. 

We use the (bolded) name
$\Stmtset = \{S \mid S \in 2^\Stmtvars \land \abs{\varfun(S)} < \infty\}$ 
to denote the set of all sets of statements $S$ that each contain finitely many variables. 
$\Expset$ and $\Boolset$ are defined analogously, for expressions 
and Boolean expressions, respectively.
This paper only considers sets of programs $\progset$ where 
$\abs{\varfun(\progset)} < \infty$---sets whose programs reference an infinite number of variables are rare 
and out of scope for this paper.

\subsection{Semantics of Programs} \label{sec:semantics}

We now formalize the definition of a \emph{semantics} of a \emph{single} program.



\begin{definition}[Semantics of a Program]
  \label{def:semantics_prog}
  Let $A(\domain_{\Stmtvars}, \domain_{\Expvars}, \domain_{\Boolvars})$ be a 3-tuple with an associated name $\mathit{A}$ where $\domain_{\Stmtvars}$, $\domain_{\Expvars}$, and $\domain_{\Boolvars}$ are the types of the denotations that the semantics assigns to programs of type $\Stmtvars$, $\Expvars$, and $\Boolvars$, respectively.
  Then a \emph{semantics over programs} $\abstrpnovars{A}$ is 
  a 3-tuple $\abstrpnovars{A} = (\abstrpstmt{A}, \abstrpexp{A}, \abstrpbool{A})$, where:
   $\abstrpstmt{A}: \Stmtvars \rightarrow \domain_{\Stmtvars}$, $\abstrpexp{A}: \Expvars \rightarrow \domain_{\Expvars}$, and $\abstrpbool{A}: \Boolvars \rightarrow \domain_{\Boolvars}$.
   We call $A(\domain_{\Stmtvars}, \domain_{\Expvars}, \domain_{\Boolvars})$ the \emph{signature} of $\abstrpnovars{A}$.
\end{definition}

In this paper, we will only be defining semantics where 
$\domain_\Stmtvars = \domain_\Expvars = \domain_\Boolvars$, by 
interpreting integer and Boolean expressions as state transformers as well.
Specifically, an expression $e \in \Expvars$ is considered to update a reserved variable $e_t$
instead of returning an integer value; 
Boolean expressions update a separate reserved variable $b_t$ in a similar manner.
This formalism makes it easier to relate the result of an expression with its input state, 
which will become convenient later when constructing compositional characterizations for sets of programs, 
as in \Cref{fig:single_state_semantics}.
When $\domain_\Stmtvars = \domain_\Expvars = \domain_\Boolvars = \domain$, we abbreviate the signature $A(\domain_\Stmtvars, \domain_\Expvars, \domain_\Boolvars)$ as $A(\domain)$, 
and we often omit subscripts for $\Stmtvars$, $\Expvars$, and $\Boolvars$ when they are clear from context.

\Cref{def:semantics_prog} allows us to consider multiple different semantics for a program 
by changing the signature $A(\tau)$
and the definition of $\abstrpnovars{A}$.
We will first define a collecting big-step semantics for programs in $\gimp$, called $\semsingprogyel{\cdot}$, which ignores nontermination and is such that $\tau {=} 2^{\State} \rightarrow 2^{\State}$.
We will then introduce a variant of $\semsingprogyel{\cdot}$, denoted by 
$\semsingproggrn{\cdot}$, which handles nontermination explicitly 
by introducing a non-terminating value $\divg$.
We start by defining program states.
\begin{definition}[$\State$ and $\DState$]
\label{def:state}
  A \emph{state} $\sigma$ is defined as a tuple $(h, e_t, b_t)$, where 
  \rone $h$ is a map $h: \vars \rightarrow \Z$ from variables in $\vars$ to integer values, 
  \rtwo $e_t \in \Z$, and \rthree $b_t \in \set{\Et, \Ef}$.
  We denote the set of all possible states as $\State$.
  For a state $\state = (h, e_t, b_t)$, 
  we write $\state[x_j]$ to denote $h(x_j)$, 
  $\state[e_t]$ to denote $e_t$, and $\state[b_t]$ to denote $b_t$.
  $\DState$ denotes the set of all possible states including the divergence behavior $\divg$, 
  and we define it as $\DState = \State \cup \set{\divg}$.
\end{definition}
In \Cref{def:state}, $h$ acts as the `usual' state that maps variables to values---note that 
$h$ is defined over $\vars$, the entire set of possible variables in $\gimp$, as opposed to 
a subset of variables that occur in the program;
this
convention simplifies the definitions of our semantics as well the proofs in \Cref{Se:modestgoal}.
When we only care about the values of specific variables, we
write states as, for example, $\statelist{\stateentry{x_1}{4}, \stateentry{e_t}{3}}$ 
to denote any $\state \in \State$ where $\state[x_1] = 4$ and $\state[e_t] = 3$.
$e_t$ and $b_t$ act as auxiliary variables
that ``store'' the results of evaluating an expression and Boolean expression, respectively.

The value $\divg$ denotes nontermination (i.e., diverging behavior) and is used in the semantics
$\semsingproggrn{\cdot}$.
For example, $\semsingproggrn{\Ewhile{x = 1}{\Eassign{x}{x}}}(\{[x \mapsto 1]\}) = \{\divg\}$.
To make $\semsingproggrn{\prog}$ total, we assume 
$\semsingproggrn{\prog}(\{\divg\}) =\{\divg\}$ for every program $\prog$.

We are now ready to formally define a big-step collecting semantics for terms in $\gimp$.\footnote{We consider 
collecting semantics because the focus on this paper is on the semantics of sets of programs, which do not admit a compositional characterization when limited to single-state input and outputs.}
This semantics is
the single-program version of the ``program-agnostic'' semantics $\semsingset{\cdot}$ from \Cref{sec:intro},
which map
sets of inputs to sets of outputs.
We call this semantics \emph{program-agnostic} because
it does not attempt to track how the program maps inputs to outputs.
This distinction will
become clearer when we switch to talking about sets of programs in \Cref{sec:full_single_semantics}.
\begin{figure*}
{\small
\[
\begin{array}{c}
  \semsingprog{0}(\state)=\state\subs{0}{e_t} \quad
  \semsingprog{1}(\state)=\state\subs{1}{e_t} \quad
  \semsingprog{x_j}(\state)=\state\subs{\state[x_j]}{e_t} \quad
  \semsingprog{\Et}(\state)=\state\subs{\Et}{b_t} 
  \\[2.5mm]
  \semsingprog{\Ef}(\state)=\state\subs{\Ef}{b_t} \quad
  \semsingprog{\neg b}(\state)=\state\subs{\neg (\semsingprog{b}(\state)[b_t])}{b_t} 
  \\[2.5mm]
  \semsingprog{t_1 \odot t_2}(\state)=\state\subs{\state_1[q_{t}] \odot \state_2[q_{t}]}{r_{t}},
  \text{where } \state_1 = \semsingprog{t_1}(\state) \text{ and } \state_2 = \semsingprog{t_2}(\state)
  \\[2.5mm]
  \semsingprog{\Eassign{x}{e}}(\state)=\state\subs{(\semsingprog{e}(\state)[e_t])}{x}\qquad
  \semsingprog{s_1;s_2}(\state) = \semsingprog{s_2}(\semsingprog{s_1}(\state)) \quad 
  \\[2.5mm]
  \semsingprog{\Eifthenelse{b}{s_1}{s_2}}(\state) = \Eifthenelse{\semsingprog{b}(\state)}{\semsingprog{s_1}(\state)}{\semsingprog{s_2}(\state)} 
  \\[2.5mm]
  \semsingprogyel{\Ewhile{b}{s}}(\state) = \Eifthenelse{\semsingprogyel{b}(\state)}{\semsingprogyel{\Ewhile{b}{s}}(\semsingprogyel{s}(\state))}{\state}
  \\[2.5mm]
  \semsingproggrn{\Ewhile{b}{s}}(\state) = 
  \begin{cases}
      \semsingprogyel{\Ewhile{b}{s}}(\state) & \semsingprogyel{\Ewhile{b}{s}}(\state) \neq \emptyset\\
      \{\divg\} & \text{else}\\
  \end{cases}
\end{array}
\]
}
\caption{
  Definition of $\semsingprogyel{\cdot}$ and $\semsingproggrn{\cdot}$ where $b \in \Boolset$, $e, e_1, e_2 \in \Expset$, 
  and $s, s_1, s_2 \in \Stmtset$.
  Recall that $e_t$ and $b_t$ are reserved variables that 
  hold the results of evaluated arithmetic and Boolean expressions, respectively.
  The symbol $\odot$ denotes a binary operator in $\gimp$ (either $+, -, <, ==,$ or $\wedge$),
  $t_1$ and $t_2$ refer to expressions of the appropriate types, and ${r_{t}}$ and $q_{t}$ refer to auxiliary variables of the appropriate type.
  $\semsingprogyel{\cdot}$ and $\semsingproggrn{\cdot}$ are respectively 
  of type $2^{\State} \rightarrow 2^{\State}$ and 
  $2^{\DState} \rightarrow 2^{\DState}$;
  this figure only lists definitions for singleton inputs for readability.
  The definition of $\semsingproggrn{\cdot}$ is 
  identical to that of $\semsingprogyel{\cdot}$
  for all constructors
  other than
  while. 
  The semantics of $\semsingprogyel{\Ewhile{b}{s}}$ is taken to be the least fixed point of the given equation, as is standard.
}
\label{fig:single_state_individual_semantics}
\end{figure*}

\begin{definition}[Program-Agnostic Semantics of a Program] \label{def:sing_state_sem_prog}
    The \emph{program-agnostic} semantics for single programs $\semsingprogyel{\cdot}$ over the signature
    $\singyel(2^\State \rightarrow 2^\State)$
    is defined as
    $\semsingprogyel{\prog}(X) = \bigcup\limits_{\state \in X} \semsingprogyel{\prog}(\{\state\})$, 
    where the rules for computing 
    $\semsingprogyel{\prog}(\set{\state})$ 
    are defined in Figure~\ref{fig:single_state_individual_semantics}.
    Similarly, the program-agnostic semantics for single programs $\semsingproggrn{\cdot}$ over the signature
    $\singgrn(2^\DState \rightarrow 2^\DState)$
    is defined as
    $\semsingproggrn{\prog}(X) = \bigcup\limits_{\state \in X} \semsingproggrn{\prog}(\{\state\})$ following the rules defined in Figure~\ref{fig:single_state_individual_semantics}.
    
    In Figure~\ref{fig:single_state_individual_semantics}
    and other parts of this paper, 
    for a program $\prog$ and a single state $\state \in \DState$, 
    we write $\semsingprog{\prog}(\state)$ to mean $\semsingprog{\prog}(\{\state\})$ 
    (the same holds for other kinds of semantics defined in this paper).
\end{definition}

The difference between 
$\semsingprogyel{\cdot}$ and $\semsingproggrn{\cdot}$ 
lies in their treatment 
of divergence: 
$\semsingprogyel{\Ewhile{\Et}{\Eassign{x}{x}}}(\state) = \emptyset$ 
whereas $\semsingproggrn{\Ewhile{\Et}{\Eassign{x}{x}}}(\state) = \{\divg\}$. 
For single programs, 
these two semantics are equivalent in their distinguishing power: 
given two programs $c_1$ and $c_2$, 
$\semsingprogyel{c_1} = \semsingprogyel{c_2}$ iff
$\semsingproggrn{c_1} = \semsingproggrn{c_2}$.
However, the difference in the treatment of divergence 
will create a difference when considering semantics for sets of programs, and we will see that explicitly tracking divergence as in $\semsingproggrn{\cdot}$
is necessary to 
precisely capture how different programs can diverge.

Next, we define the program-aware semantics $\semfullprog{\cdot}$, whose denotations are \emph{sets of} functions, as opposed to functions between sets of states (like $\semsingprogyel{\cdot}$).
%
This distinction is irrelevant for single programs 
(because there is only one function) 
but becomes crucial later for
sets of programs.

\begin{definition} [Program-Aware Semantics ($\semfullprog{\cdot}$)]
\label{def:full-sem-prog}
    The \emph{program-aware semantics} for programs $\semfullprog{\cdot}$ over the signature 
    $\full(2^{2^\DState \rightarrow 2^\DState})$
    is defined so that, for each given program $c$,
    $$\semfullprog{\prog} \defeq \{\semsingproggrn{\prog}\}.$$
\end{definition}

For example, $\semfullprog{\Eassign{x}{5}} = \{\lambda \state. \state\subs{5}{x}\}$ 
because $\semsingproggrn{\Eassign{x}{5}} = \lambda \state. \state\subs{5}{x}$.
While $\semsingproggrn{\cdot}$ and $\semfullprog{\cdot}$ 
contain the same information for single programs, they are different mathematical objects.
When generalizing these
semantics to sets of programs, this difference in representation allows 
$\semfullprog{\cdot}$ to keep the semantics of different programs separate, 
whereas $\semsingproggrn{\cdot}$ cannot do the same.


\pagebreak
\subsection{Semantics of Sets of Programs} \label{sec:full_single_semantics}

In this section, we extend Definition~\ref{def:semantics_prog} to define the semantics
of a \emph{set of programs}.

\begin{definition} [Semantics of a Set of Programs]
  \label{def:semantics_set}
  Let $A(\domain_{\Stmtvars}, \domain_{\Expvars}, \domain_{\Boolvars})$ be a 3-tuple with an associated name $A$ where $\domain_{\Stmtvars}$, $\domain_{\Expvars}$, and $\domain_{\Boolvars}$ are the types of the denotations that the semantics assigns to sets of programs
  of type $\Stmtset$, $\Expset$, and $\Boolset$, respectively.
  Then a \emph{semantics over a set of programs}
  $\abstrsnovars{A}$ is defined 
  as the 3-tuple $\abstrsnovars{A} = (\abstrsstmt{A}, \abstrsexp{A}, \abstrsbool{A})$, where:
   $\abstrsstmt{A}: \Stmtset \rightarrow \domain_{\Stmtvars}$, $\abstrsexp{A}: \Expset \rightarrow \domain_{\Expvars}$, and $\abstrsbool{A}: \Boolset \rightarrow \domain_{\Boolvars}$.
   We call $A$ the \emph{signature} of $\abstrsnovars{A}$, and when $\domain_\Stmtvars = \domain_\Expvars = \domain_\Boolvars$, we write the signature as simply $A(\domain)$.\footnote{
    When we consider a semantics over sets of programs (e.g. $\abstrstapp{A}{\cdot}$), we assume that it carries with it $\varfun(\cdot)$. Thus, $\abstrstapp{A}{\cdot}$ really means $\abstrstapp{A}{\cdot} \times \varfun(\cdot)$. 
    This assumption is a reasonable one (because $\varfun$ can be computed immediately from an RTG),
    and the assumption will be needed in later proofs (\S~\ref{Se:modestgoal}). 
} 
We omit the subscripts $\Stmtvars, \Expvars, \Boolvars$ if clear from context.
\end{definition}

Next, we extend the three semantics over programs ($\semsingprogyel{\cdot}$ and $\semsingproggrn{\cdot}$ from \Cref{def:sing_state_sem_prog} and $\semfullprog{\cdot}$ from \Cref{def:full-sem-prog})
to semantics over sets of programs ($\semsingsetyel{\cdot}$, $\semsingsetgrn{\cdot}$, and $\semfullset{\cdot}$, respectively), by taking the union of the semantics of all programs in the sets.

\begin{definition}[Program-Agnostic and Program-Aware Semantics of a Set of Programs] \label{def:sing_state_sem_set}
\label{def:full_sem_set}
   For a given 
   set of programs $\progset$ and an input set of states $X$, each of the semantics $\abstrp{A}$ for $A = \singyel, \singgrn, \full$ can be lifted to a semantics over sets of programs $\abstrs{A}$ over the signature
   $A(2^\State \rightarrow 2^\State)$,
   defined as:
   \[
     \abstrsapp{A}{\progset}(X) = \bigcup_{\prog \in \progset} \abstrpapp{A}{\prog}(X).
   \]
\end{definition}

As discussed in \Cref{sec:intro}, these program-agnostic semantics determine exactly the single-input extensional properties of $\progset$ (and nontermination in the case of $\semsingsetgrn{\cdot}$). 
The \yellow semantics are also noticeably less precise than the
program-aware semantics of sets of programs, $\semfullset{\cdot}$, which explicitly gives the 
semantics of each program $\prog \in \progset$, as illustrated in \Cref{ex:ag_and_aw} below.

\begin{example}
  \label{ex:ag_and_aw}
  Consider two sets of programs, $S_1 = \set{\Eassign{x}{1}, \Eassign{x}{2}}$, and 
  $S_2 = \set{\Eifthenelse{x==0}{\Eassign{x}{1}}{\Eassign{x}{2}}, \Eifthenelse{\neg(x==0)}{\Eassign{x}{2}}{\Eassign{x}{1}}}$.
  Then for an arbitrary input state $\sigma$, 
  $\semsingset{S_1}(\sigma) = \semsingset{S_2}(\sigma) = \set{\sigma\subs{1}{x}, \sigma\subs{2}{x}}$.
  More generally, it can be stated that
  $\semsingset{S_1} = \semsingset{S_2} = \lambda \Sigma. \bigcup_{\sigma \in \Sigma} \set{\sigma\subs{1}{x}, \sigma\subs{2}{x}}$.

  In contrast, $\semfullset{S_1} \neq \semfullset{S_2}$.
  Consider $\semantics{\Eassign{x}{1}} = \lambda \state. \state \subs{1}{x}$.
  Then $\sem{\Eassign{x}{1}} \in \semfullset{S_1}$, but 
  $\sem{\Eassign{x}{1}} \not \in \semfullset{S_2}$: each function in $S_2$ assigns $2$ to $x$ for some $\state$.
  Thus $\semfullset{S_1} \neq \semfullset{S_2}$.
\end{example}

\subsection{Granularity} 
\label{sec:gran_and_comp}

While the two program-agnostic and
the
program-aware semantics are equi-expressive for single programs, 
the situation changes
for sets of programs.
The following definition of \emph{granularity} 
formalizes the notion of the relative expressivity of a semantics to capture this discrepancy.

\begin{definition}[Granularity of a Semantics Over Sets of Programs]
  \label{def:granularity}
  Let $A$
  and $B$
  be two signatures.
  We say that a semantics for sets of programs 
  $\abstrs{A}$ is \emph{finer} than another semantics $\abstrs{B}$
  (or that $\abstrs{B}$ is \emph{coarser} than $\abstrs{A}$), denoted by $\abstrs{B}\preceq \abstrs{A}$, iff 
  for each type $T$ ($\Stmtvars$, $\Expvars$, or $\Boolvars$) there exists a function $f_T$ such that, for all $C \in \boldsymbol{T}$,
  $f_T(\abstrstapp{A}{C}) = \abstrstapp{B}{C}$.
\end{definition}
When $\abstrs{B}\preceq \abstrs{A}$, we say that $\abstrs{A}$ is finer than $\abstrs{B}$.
When $\abstrs{B}\preceq \abstrs{A}$ but $\abstrs{A}\not\preceq \abstrs{B}$, we say that 
$\abstrs{A}$ is strictly finer than $\abstrs{B}$, denoted by $\abstrs{B}\prec \abstrs{A}$. 
This granularity preorder induces a complete lattice, 
which we prove in \Cref{app:lattice}.

  Note that $f_{T}$ is essentially $\abstrstapp{\widehat{B}}{(\abstrstapp{A}{\cdot})^{-1}}$, using the convention that, when $\abstrst{B}: T \rightarrow \domain_T$, we say $\abstrst{\widehat{B}} : 2^T \rightarrow 2^{\domain_T}$.
  
\begin{wrapfigure}{l}{0.25\textwidth}
    \vspace{-4mm}
  \begin{tikzcd}
        {T} && {A_{T}} \\
        \\
        {T} && {B_{T}}
        \arrow["id"', from=1-1, to=3-1]
        \arrow["{\abstrst{A}}", from=1-1, to=1-3]
        \arrow["{\abstrst{B}}"', from=3-1, to=3-3]
        \arrow["{f_{T}(\cdot)}", dashed, from=1-3, to=3-3]
    \end{tikzcd}
    \vspace{-4mm}
\end{wrapfigure}
  Definition~\ref{def:granularity} can be understood via the 
  diagram on the left: $f_T$ is such that this diagram commutes. 
  This demands that  
  $(\abstrst{A})^{-1}$ refines 
  $(\abstrst{B})^{-1}$; 
  a semantics $\abstrst{A}$
  is finer than $\abstrst{B}$ if the partition induced by 
  $(\abstrst{A})^{-1}$ is finer than the partition induced by 
  $(\abstrst{B})^{-1}$.

Intuitively, for semantics $\abstrs{A}$ and $\abstrs{B}$, $\abstrs{B} \preceq \abstrs{A}$ if
for all $\progset$, $\abstrsapp{A}{\progset}$
lets us uniquely identify $\abstrsapp{B}{\progset}$.
Recall that in this paper, a semantics $\abstrsapp{A}{\cdot}$ is assumed to implicitly
carry $\varfun(\cdot)$, a function that returns the set of variables used by a set of programs. Thus, when considering $\abstrsapp{A}{\progset}$, the function $f_T$ has access to $\varfun(\progset)$ through its argument as well.

From \Cref{def:granularity},
one can show that 
$\semsingsetyel{\cdot} \prec \semsingsetgrn{\cdot}$ and $\semsingsetgrn{\cdot} \prec \semfullset{\cdot}$.

\begin{theorem} [$\semfullset{\cdot}$ is strictly finer than $\semsingsetgrn{\cdot}$]
I.e.,
$\semsingsetgrn{\cdot} \prec \semfullset{\cdot}$.
\end{theorem}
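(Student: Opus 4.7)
The plan is to verify the two conditions implied by the strict-finer relation $\semsingsetgrn{\cdot} \prec \semfullset{\cdot}$ separately, following \Cref{def:granularity}: first that $\semsingsetgrn{\cdot} \preceq \semfullset{\cdot}$ holds, and second that the converse $\semfullset{\cdot} \preceq \semsingsetgrn{\cdot}$ fails.

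For the easy direction $\semsingsetgrn{\cdot} \preceq \semfullset{\cdot}$, I would construct the required conversion function explicitly. For each type $T \in \{\Stmtvars, \Expvars, \Boolvars\}$, define
\[
f_T(D) \;\defeq\; \lambda X.\, \bigcup_{g \in D} g(X).
\]
Since $\semfullset{\progset} = \bigcup_{\prog \in \progset} \{\semsingproggrn{\prog}\} = \{\semsingproggrn{\prog} \mid \prog \in \progset\}$ by \Cref{def:full_sem_set} and \Cref{def:full-sem-prog}, we obtain
\[
f_T(\semfullset{\progset}) \;=\; \lambda X.\, \bigcup_{\prog \in \progset} \semsingproggrn{\prog}(X) \;=\; \semsingsetgrn{\progset},
\]
which is exactly the definition of $\semsingsetgrn{\progset}$ given in \Cref{def:sing_state_sem_set}. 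This step is essentially a bookkeeping verification.

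For the strictness direction, I plan to exhibit a single witness of two program sets that collapse under $\semsingsetgrn{\cdot}$ but are separated by $\semfullset{\cdot}$. The sets $S_1$ and $S_2$ from \Cref{ex:ag_and_aw} are natural candidates, since neither contains any loops and thus no program in either set ever diverges; consequently their $\semsingsetgrn{\cdot}$ behavior coincides with their $\semsingset{\cdot}$ behavior. \Cref{ex:ag_and_aw} already establishes $\semsingset{S_1} = \semsingset{S_2}$ pointwise, which upgrades directly to $\semsingsetgrn{S_1} = \semsingsetgrn{S_2}$ (including on inputs containing $\divg$, since both are total on $\{\divg\}$). On the other hand, $\semfullset{S_1} \neq \semfullset{S_2}$ because the function $\lambda \state.\, \state\subs{1}{x}$ lies in $\semfullset{S_1}$ but no function in $\semfullset{S_2}$ always sets $x$ to $1$.

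The final paragraph assembles the two parts: if some $f_T$ satisfied $f_T(\semsingsetgrn{C}) = \semfullset{C}$ for all $C$, then applying it to $S_1$ and $S_2$ would yield $\semfullset{S_1} = f_T(\semsingsetgrn{S_1}) = f_T(\semsingsetgrn{S_2}) = \semfullset{S_2}$, contradicting the preceding observation. Hence $\semfullset{\cdot} \not\preceq \semsingsetgrn{\cdot}$, and combined with the first direction we conclude $\semsingsetgrn{\cdot} \prec \semfullset{\cdot}$. The only step requiring any care is confirming that the $S_1, S_2$ witness survives the switch from $\semsingsetyel{\cdot}$ (used in \Cref{ex:ag_and_aw}) to the divergence-aware variant $\semsingsetgrn{\cdot}$, but since both sets are loop-free this is immediate.
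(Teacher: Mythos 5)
Your proposal is correct and follows essentially the same route as the paper's own proof: the same union construction $f_T(\semfullset{\progset}) = \bigcup_{g \in \semfullset{\progset}} g$ for the direction $\semsingsetgrn{\cdot} \preceq \semfullset{\cdot}$, and the same witness pair $S_1, S_2$ from \Cref{ex:ag_and_aw} to refute $\semfullset{\cdot} \preceq \semsingsetgrn{\cdot}$. Your extra remark that the loop-freeness of $S_1$ and $S_2$ lets the divergence-agnostic equality from the example transfer to $\semsingsetgrn{\cdot}$ is a small but welcome piece of care that the paper leaves implicit.
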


\begin{proof}
To see that $\semsingsetgrn{\cdot}\preceq\semfullset{\cdot}$,
simply take 
$f(\semfullset{\progset}) = \bigcup_{\semsingproggrn{\prog} \in \semfullset{\progset}} \semsingproggrn{\prog}$.

To see that $\semfullset{\cdot} \not \preceq \semsingsetgrn{\cdot}$, recall the two sets 
of statements
$S_1$ $= \{ \Eassign{x_1}{2}$, $\Eassign{x_1}{1} \}$ and $S_2$ $= \{ \Eifthenelse{x_1==0}{\Eassign{x_1}{1}}{\Eassign{x_1}{2}}$,  
$\ \Eifthenelse{\neg(x_1==0)}{\Eassign{x_1}{2}}{\Eassign{x_1}{1}} \}$ 
from \Cref{ex:ag_and_aw}.
As illustrated in \Cref{ex:ag_and_aw}, $\semsingsetgrn{S_1} = \semsingsetgrn{S_2}$,
but because 
$\semfullset{S_1} \neq \semfullset{S_2}$,
there cannot exist $f_{\Stmtvars}$
where
$f_{\Stmtvars}(\semsingsetgrn{S}) = \semfullset{S}$ for all $S \in \Stmtset$.
\end{proof}

\vspace{-1.9ex}
\begin{restatable}[$\semsingsetgrn{\cdot}$ is strictly finer than $\semsingsetyel{\cdot}$]{thm}{yellessthangrn}
\label{thm:yel_less_than_grn}
I.e.,
$\semsingsetyel{\cdot}\prec\semsingsetgrn{\cdot}$.
\end{restatable}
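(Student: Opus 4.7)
The plan is to verify the two conditions required by Definition~\ref{def:granularity}: first that $\semsingsetyel{\cdot} \preceq \semsingsetgrn{\cdot}$, and second that the reverse does not hold. Both parts are essentially immediate consequences of the single observation that $\semsingproggrn{\cdot}$ differs from $\semsingprogyel{\cdot}$ only by the way it records divergence via $\divg$, per the last clause of Figure~\ref{fig:single_state_individual_semantics}.

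For the forward direction, I would exhibit, for each type $T \in \{\Stmtvars, \Expvars, \Boolvars\}$, a witnessing function $f_T$. The natural choice is the ``forget $\divg$'' map: given a denotation $g : 2^{\DState} \to 2^{\DState}$ produced by $\semsingsetgrn{\cdot}$, set $f_T(g) = \lambda X \in 2^{\State}.\, g(X) \setminus \{\divg\}$. Correctness reduces to the pointwise claim $\semsingproggrn{c}(\sigma) \setminus \{\divg\} = \semsingprogyel{c}(\sigma)$ for every $c$ and $\sigma \in \State$, which is read off Figure~\ref{fig:single_state_individual_semantics} (the two semantics agree on every constructor except $\rWhile$, and on $\rWhile$ they differ precisely by the addition of $\divg$ when the least fixed point of the yellow semantics is empty). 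Lifting through the unions in Definition~\ref{def:sing_state_sem_set} then gives $f_T(\semsingsetgrn{C}) = \semsingsetyel{C}$ for every $C \in \boldsymbol{T}$.

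For the strictness direction, I would exhibit two sets of statements with identical yellow semantics but distinct green semantics. A clean choice is
\[
  C_1 = \{\Eassign{x_1}{x_1}\}, \qquad C_2 = \{\Eassign{x_1}{x_1},\ \Ewhile{\Et}{\Eassign{x_1}{x_1}}\}.
\]
The diverging loop contributes $\emptyset$ under $\semsingprogyel{\cdot}$ and $\{\divg\}$ under $\semsingproggrn{\cdot}$, as noted explicitly in the paragraph following Definition~\ref{def:sing_state_sem_prog}. Consequently, for every $X \subseteq \State$ one gets $\semsingsetyel{C_1}(X) = X = \semsingsetyel{C_2}(X)$, whereas for every nonempty $X \subseteq \State$ one gets $\semsingsetgrn{C_1}(X) = X$ but $\semsingsetgrn{C_2}(X) = X \cup \{\divg\}$. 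Since $\varfun(C_1) = \varfun(C_2) = \{x_1\}$, the pair $C_1, C_2$ is indistinguishable to any candidate function taking the yellow semantics (together with the implicitly carried $\varfun$) as input, so no $f_{\Stmtvars}$ can satisfy $f_{\Stmtvars}(\semsingsetyel{C_i}) = \semsingsetgrn{C_i}$ for both $i \in \{1,2\}$.

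I expect no real obstacle here; the only subtlety worth attention is remembering the footnote to Definition~\ref{def:granularity} that a semantics implicitly carries $\varfun(\cdot)$, which is why both witnesses must share the same variable set---a condition trivially satisfied by the choice above.
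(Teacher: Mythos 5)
Your proof is correct and takes essentially the same route as the paper's: the $\preceq$ direction is witnessed by postcomposing $\semsingsetgrn{\cdot}$ with a ``forget $\divg$'' map, and strictness comes from exhibiting two sets that the \yellow semantics cannot separate but the \green semantics can. The only difference is the choice of witnesses---the paper contrasts $\emptyset$ with $\{\Ewhile{\Et}{\Eassign{x}{x}}\}$, while your pair $C_1, C_2$ additionally shares $\varfun$, which dovetails slightly more cleanly with the footnoted convention that a semantics implicitly carries $\varfun(\cdot)$.
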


\Cref{thm:yel_less_than_grn} can be proven by 
observing that 
$\semsingsetyel{\cdot}$ cannot distinguish between 
an empty set of programs and a set $C_{\mathit{divg}}$ 
that only contains diverging programs, 
because
$\semsingsetyel{\emptyset} = \semsingsetyel{C_{\mathit{divg}}} = \emptyset$, 
but $\semsingsetgrn{\cdot}$ can.
The full proof can be found in 
\Cref{app:small-proofs}.

\section{Compositionality of the Semantics of Sets of Programs} 
\label{sec:compositional-semantics}

Having established a formal notion of semantics in \Cref{sec:semantics}, we now 
turn to the issue of \emph{compositionality}.
As argued in \S\ref{sec:intro}, the compositionality of a semantics is crucial
for
supporting practical reasoning methods.
%
In this section, we formally define what it means for a semantics to be 
compositional, and then prove that $\semfullset{\cdot}$ 
\emph{is} compositional, 
whereas $\semsingset{\cdot}$ and $\semsingsetgrn{\cdot}$ 
\emph{are not} compositional.\footnote{
  In \Cref{Se:prelim}, none of $\semsingset{\cdot}$, $\semsingsetgrn{\cdot}$, and $\semfullset{\cdot}$ are stated in a compositional manner.
 \Cref{thm:relational_abstr_inductive} shows that $\semfullset{\cdot}$ can be characterized compositionally.
}

The compositionality of a semantics is 
inspired by
\emph{grammar-flow analysis} (GFA)~\cite{SAGA:MW91}.
GFA is a formalism for recursively 
specifying, and computing, aggregate properties of the nonterminals 
in a given regular tree grammar.
A compositional semantics for sets of programs does something very similar: 
it computes properties
over a given inductive set of programs 
(which is, in this paper, defined by a regular tree grammar).
Thus, the
notion that ``a semantics for a set of programs $\abstrs{A}$ is compositional''
can be taken to mean that $\abstrs{A}$ should be computable in a fashion similar to grammar-flow analysis 
(i.e., the semantics of a nonterminal should be computed from the semantics of its productions, 
and the semantics of each production is computed from the semantics of its terms), 
an analogy that may be drawn for many other analyses over sets of 
programs as well (as illustrated in \Cref{sec:case_studies}).
We capture this notion of compositionality by requiring that the semantics $\abstrs{A}$ satisfy the 
following intuitive properties for any input set of programs $C$, defined by a regular tree
grammar $G_C$:
\begin{itemize}
  \item
    For each nonterminal $N \in G_C$, defined by the production right-hand sides
    (RHSs)
    $R_1, \cdots, R_n$, the denotation $\abstrsapp{A}{N}$ 
    is obtained by applying some aggregation function 
    $f^A_{\cup}$ over the denotations of each production RHS: i.e., 
    $\abstrsapp{A}{N} = f^A_{\cup}(\abstrsapp{A}{R_1}, \cdots, \abstrsapp{A}{R_2})$.
    
  \item For a constructor $\opn$ and
  an RHS $R$
  of the form $\opn(\progset_1, \cdots, \progset_n)$, 
  the denotation $\abstrsapp{A}{R}$ is obtained by applying some function 
  $f^A_{\alpha}$ to the denotations of the ``operand'' sets:
  i.e., $\abstrsapp{A}{R} = f^A_{\alpha}(\abstrsapp{A}{\progset_1}, \cdots, \abstrsapp{A}{\progset_n})$.
\end{itemize}
For example, given a set of programs like $\Eifthenelse{x_1 == 2}{x_2:=E}{S}$, a compositional semantics is
able to compute
the semantics of this set using only the semantics of $E$ and $S$, 
without individually considering all programs in $\Eifthenelse{x_1 == 2}{x_2:=E}{S}$.
In contrast,
a characterization like 
$\abstrsapp{A}{\Eifthenelse{x_1 == 2}{x_2:=E}{S}} = 
f(\{\abstrpapp{A}{\Eifthenelse{x_1 == 2}{x_2:=e}{s}} \mid e \in E, s \in S\})$ 
\emph{is noncompositional:}
it enumerates all programs in the set.
\Cref{def:inductivity} formalizes this intuition.

\begin{definition}[Compositionality]
\label{def:inductivity}
Let $\opset$ denote a finite set of constructors, and $G$ be a regular tree grammar
that only uses constructors in $\opset$.
Let $N$ be a nonterminal in $G$, for which the productions are defined as follows ($N_{1, 1}, \cdots, N_{k, n_k}$ are nonterminals of $G$):
\[
  N ::= \alpha^{(n_1)}_1(N_{1,1}, \ldots, N_{1,n_1}) \mid \ldots \mid \alpha^{(n_k)}_k(N_{k,1}, \ldots, N_{k,n_k})  
\]

We say that a semantics for sets of programs $\abstrs{A}$ 
\emph{admits a compositional characterization over $\opset$}, or simply that
$\abstrs{A}$ is \emph{compositional over $\opset$}, 
if there exists a function $f^A_{\opn_i}$ for each $\opn_i \in \opset$, and an aggregate function 
$f^A_{\cup}$, such that the following holds for any $G$ and $N$:
\[
  \abstrsapp{A}{L(N)} = f^A_{\cup}(f^A_{\opn_1}(\abstrsapp{A}{L(N_{1, 1})}, \ldots, \abstrsapp{A}{L(N_{1, n_1)}}), 
                          \ldots, f^A_{\opn_k}(\abstrsapp{A}{L(N_{k, 1})}, \ldots, \abstrsapp{A}{L(N_{k, n_k})}))
\]
We simply say that $\abstrs{A}$ admits a compositional characterization, or is compositional, 
if it is compositional over the set of all operators in $\gimp$.
We say that $\abstrs{A}$ is compositional over sets of loop-free programs if it is compositional over 
the set of all operators in $\gimp$ except $\mathsf{While}$.
\end{definition}

In the rest of the section, we show that the program-aware semantics $\semfullset{\cdot}$ 
admits a compositional characterization,
whereas the program-agnostic semantics $\semsingsetyel{\cdot}$ and $\semsingsetgrn{\cdot}$ 
admit compositional characterizations only over sets of loop-free programs. 

\begin{theorem} [Compositionality of $\semfullset{\cdot}$] \label{thm:relational_abstr_inductive}
    The program-aware semantics $\semfullset{\cdot}$ admits a compositional characterization.
\end{theorem}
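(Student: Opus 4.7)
The plan is to prove compositionality by constructing the compositional functions explicitly, then unfolding definitions. The key observation is that $\semsingproggrn{\cdot}$ on single programs is itself compositional: inspecting Figure 1, each rule for $\semsingproggrn{\alpha(c_1, \ldots, c_n)}$ refers only to $\semsingproggrn{c_1}, \ldots, \semsingproggrn{c_n}$. So for every $\gimp$ constructor $\alpha$ of arity $n$ there exists a function $g_\alpha$ such that
\[
\semsingproggrn{\alpha(c_1, \ldots, c_n)} = g_\alpha(\semsingproggrn{c_1}, \ldots, \semsingproggrn{c_n}).
\]

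Given the $g_\alpha$, I would lift them pointwise to sets of denotations and take union as the aggregator:
\[
f^{\full}_\alpha(D_1, \ldots, D_n) \defeq \{ g_\alpha(d_1, \ldots, d_n) \mid d_1 \in D_1, \ldots, d_n \in D_n \},
\qquad
f^{\full}_\cup(X_1, \ldots, X_k) \defeq X_1 \cup \cdots \cup X_k.
\]
The defining equation of Definition~\ref{def:inductivity} then follows by unfolding $\semfullset{C} = \{\semsingproggrn{c} \mid c \in C\}$ together with the RTG-level identity $L(N) = \bigcup_{i} \{ \alpha_i(c_{i,1}, \ldots, c_{i,n_i}) \mid c_{i,j} \in L(N_{i,j}) \}$. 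Concretely,
\[
\semfullset{L(N)} = \bigcup_i \{ \semsingproggrn{\alpha_i(c_{i,1}, \ldots, c_{i,n_i})} \mid c_{i,j} \in L(N_{i,j}) \} = \bigcup_i f^{\full}_{\alpha_i}\bigl(\semfullset{L(N_{i,1})}, \ldots, \semfullset{L(N_{i,n_i})}\bigr),
\]
which is exactly $f^{\full}_\cup$ applied to the per-production compositions.

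I expect the main obstacle to be justifying the existence of $g_\alpha$ for the $\mathsf{While}$ constructor, since the rule in Figure~\ref{fig:single_state_individual_semantics} defines $\semsingproggrn{\Ewhile{b}{s}}$ via a least fixed point of an equation rather than by a direct formula. I would dispatch this with a standard Knaster--Tarski / Kleene argument on the CPO of (monotone) state-transformers $2^{\DState} \rightarrow 2^{\DState}$ ordered by pointwise inclusion: the fixed-point operator $\Phi_{b,s}(F)(\sigma) = \Eifthenelse{\semsingproggrn{b}(\sigma)}{F(\semsingproggrn{s}(\sigma))}{\sigma}$ depends only on the pair $(\semsingproggrn{b}, \semsingproggrn{s})$, so its least fixed point is a function of those two denotations alone, supplying $g_{\mathsf{While}}$. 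The subsequent case for $\semsingproggrn{\cdot}$ (wrapping divergent inputs as $\{\divg\}$) is applied pointwise and so preserves compositionality. All remaining constructors ($+$, $-$, sequencing, conditionals, assignment, Boolean operations, constants and variables) yield $g_\alpha$ immediately from Figure~\ref{fig:single_state_individual_semantics}, and the verification of Definition~\ref{def:inductivity} for these cases is a routine rewrite.
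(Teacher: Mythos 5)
Your proposal is correct and follows essentially the same route as the paper's proof: both observe that the single-program semantics of $\alpha(c_1,\ldots,c_n)$ is a function of the denotations of the subprograms (with the while case handled as a least fixed point depending only on $\semsingproggrn{b}$ and $\semsingproggrn{s}$), then lift this pointwise to sets and aggregate grammar alternatives by union.
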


\begin{proof}
    Given $\op$ and $\semfullset{\progset_1}, \cdots, \semfullset{\progset_n}$, we want to determine $\semfullset{\op(\progset_1, \cdots, \progset_n)}$.

    We observe that $\semfullset{\op(\progset_1, \cdots, \progset_n)}$ is exactly $\{\op(\semsingproggrn{\prog_1}, \cdots, \semsingproggrn{\prog_n}) \mid \forall j \leq n \,.\, \semsingproggrn{\prog_j} \in \semfullset{\progset_j}\}$, where applying $\op$ to $\semsingproggrn{\prog_j}$ means to use the standard compositional 
    definition of $\semsingproggrn{\cdot}$ over $\op$,
    shown in \Cref{fig:single_state_individual_semantics} for every constructor besides while.

    For example, for the sequence constructor, $\semfullset{S_1;S_2} = \{\semsingproggrn{s_2} \circ \semsingproggrn{s_1} \mid \semsingproggrn{s_1} \in \semfullset{S_1} \land \semsingproggrn{s_2} \in \semfullset{S_2}\}$.
    For the $\mathsf{while}$ constructor, taking $\mu$ as the usual 
    least fixed-point operator:
    \[\semfullset{\Ewhile{B}{S}} = \left\{
    \mu f. \begin{cases}
        f(\semsingproggrn{s}(\sigma)) & \text{if } \semsingproggrn{b}(\sigma)\\
        \sigma & \text{else}
    \end{cases}
    \,\middle\vert\,
    \semsingproggrn{b} \in \semfullset{B} \land \semsingproggrn{s} \in \semfullset{S}\right\}\]

    More intuitively (and abusing notation), we can say $\semfullset{\Ewhile{B}{S}} = \{\Ewhile{\semsingproggrn{b}}{\semsingproggrn{s}} \mid \semsingproggrn{b} \in \semfullset{B} \land \semsingproggrn{s} \in \semfullset{S}\}$, i.e., we consider all the functions that can be built as loops whose guard computes a function in $\semfullset{B}$ and whose body computes a function in $\semfullset{S}$.
     For unions, if $\progset = \progset_1, \cdots, \progset_m$, then $\semfullset{\progset} = \bigcup\limits_{j \leq m}\semfullset{\progset_j}$.

     Consequently,
     $\semfullset{\cdot}$ is compositional.
\end{proof}

\begin{figure*}
{\small
\[
\begin{array}{c}
  \semsingset{0}(\state)=\{\state\subs{0}{e_t}\} \qquad
  \semsingset{1}(\state)=\{\state\subs{1}{e_t}\} \qquad
  \semsingset{x_j}(\state)=\{\state\subs{x_j}{e_t}\}
  \\[2.5mm]
  \semsingset{\Et}(\state)=\{\state\subs{\Et}{b_t}\} \qquad
  \semsingset{\Ef}(\state)=\{\state\subs{\Ef}{b_t}\} \\[2.5mm]
  \semsingset{\neg B}(\state)=\{\sigma\subs{\neg \sigma_b[b_t]}{b_t} \mid \sigma_b \in \semsingset{B}(\state)\} \qquad \semsingset{\bigcup\limits_{j < n} \progset_j}(\state)=\bigcup\limits_{j < n} \semsingset{\progset_j}(\state) 
  \\[2.5mm]
  \semsingset{A_1 \odot A_2}(\state)=\{\state\subs{\state_1[a_t] \odot \state_2[a_t]}{r_t} \mid \state_1 \in \semsingset{A_1}(\state), \state_2 \in \semsingset{A_2}(\state)\}
  \\[2.5mm]
  \semsingset{\Eassign{x}{E}}(\state)=\{\sigma\subs{\sigma_e[e_t]}{x} \mid \sigma_e \in \semsingset{E}(\state)\} \qquad
  \semsingset{S_1;S_2}(\state) = \semsingset{S_2}(\semsingset{S_1}(\state)) \quad 
  \\[2.5mm]
  \semsingset{\Eifthenelse{B}{S_1}{S_2}}(\state) = \begin{cases}
                                                      \semsingset{S_1}(\state) & \set{\state_b[b_t] \mid \state_b \in \semsingset{B}(\state)} = \set{\Et} \\
                                                      \semsingset{S_2}(\state) & \set{\state_b[b_t] \mid \state_b \in \semsingset{B}(\state)} = \set{\Ef} \\
                                                      \semsingset{S_1}(\state) \cup \semsingset{S_2}(\state) & \text{otherwise}
                                                   \end{cases}
\end{array}
\]
}
\caption{
  Compositional {definitions} for the program-agnostic semantics $\semsingset{\cdot}$ (\Cref{def:sing_state_sem_set}).
  In the fourth line, $\odot$ denotes a binary operator (either $+, -, <, ==,$ or $\wedge$), and 
  $a_t$ and $r_t$ are auxiliary variables of the appropriate type. Note that $\semsingset{\progset}(X) = \bigcup\limits_{\state \in X}\semsingset{\progset}(\state)$.
}
\label{fig:single_state_semantics}
\end{figure*}

\noindent 
Next, we show that $\semsingsetyel{\cdot}$ and $\semsingsetgrn{\cdot}$ 
are compositional for loop-free programs.
\begin{theorem} [Loop-Free Compositionality of $\semsingset{\cdot}$ and $\semsingsetgrn{\cdot}$] 
\label{thm:ag_loop_free_comp}
     The program-agnostic semantics 
     $\semsingsetyel{\cdot}$ and $\semsingsetgrn{\cdot}$ 
     admit compositional characterizations over loop-free programs.
\end{theorem}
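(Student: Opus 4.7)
The plan is to prove compositionality by directly exhibiting, for each loop-free constructor $\alpha$ of $\gimp$, the function $f^A_\alpha$ required by Definition~\ref{def:inductivity}, together with the aggregate $f^A_\cup$ for the grammar-disjunction case. In fact, Figure~\ref{fig:single_state_semantics} already supplies the compositional equations for $\semsingsetyel{\cdot}$ (and can be trivially lifted to $\semsingsetgrn{\cdot}$); so the real content of the theorem is to verify that these equations are equivalent to the definitional union $\semsingset{\progset}(X) = \bigcup_{c \in \progset} \semsingprog{c}(X)$ from Definition~\ref{def:sing_state_sem_set}. I would take $f^A_\cup$ to be set union, since a nonterminal $N ::= R_1 \mid \cdots \mid R_k$ denotes $L(N) = L(R_1) \cup \cdots \cup L(R_k)$ and $\semsingset{\cdot}$ distributes over unions of sets of programs.

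The main body of the proof is a structural induction on the grammar, checking one loop-free constructor at a time. The leaf constants ($0$, $1$, $\Et$, $\Ef$, variables) are immediate since the corresponding sets $L(N)$ are singletons whose denotations exactly match Figure~\ref{fig:single_state_semantics}. For the binary-operator case $A_1 \odot A_2$, I would unfold $\semsingset{A_1 \odot A_2}(\sigma) = \bigcup_{c_1 \in A_1, c_2 \in A_2} \semsingprog{c_1 \odot c_2}(\sigma)$ using \Cref{fig:single_state_individual_semantics} and observe that, because $c_1$ and $c_2$ are evaluated independently on the same state $\sigma$, the union ranges exactly over pairs $(\sigma_1, \sigma_2) \in \semsingset{A_1}(\sigma) \times \semsingset{A_2}(\sigma)$, matching the given equation. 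Negation, assignment, and sequencing are handled analogously; sequencing uses the fact that $\semsingset{\cdot}$ is already defined on sets of input states, so $\semsingset{S_1;S_2}(\sigma) = \semsingset{S_2}(\semsingset{S_1}(\sigma))$ is just function composition of the lifted denotations, which is a compositional operation on $\semsingset{S_1}$ and $\semsingset{S_2}$.

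The step I expect to require the most care is the if-then-else case, because the compositional rule in Figure~\ref{fig:single_state_semantics} splits into three subcases depending on whether the set of Boolean outcomes of $B$ on $\sigma$ is $\{\Et\}$, $\{\Ef\}$, or $\{\Et, \Ef\}$. To justify this, I would argue that for each individual program $\Eifthenelse{b}{s_1}{s_2}$ the branch taken depends only on $\semsingprog{b}(\sigma)[b_t]$, so the union $\bigcup_{b, s_1, s_2} \semsingprog{\Eifthenelse{b}{s_1}{s_2}}(\sigma)$ decomposes cleanly into the $b$-true contributions (drawing outputs from $\semsingset{S_1}(\sigma)$) and the $b$-false contributions (drawing from $\semsingset{S_2}(\sigma)$); the three cases in the figure are then simply the three possibilities for which of these contributions are nonempty. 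This is also the place where the argument needs the assumption, implicit in Definition~\ref{def:semantics_set}, that a semantics carries along $\varfun(\cdot)$, since the denotations of $B$, $S_1$, $S_2$ alone determine the outcome only because they all act on a shared state space.

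Finally, for $\semsingsetgrn{\cdot}$ the equations of Figure~\ref{fig:single_state_semantics} lift verbatim once we extend inputs to $2^{\DState}$ and stipulate that every loop-free constructor propagates $\divg$ pointwise (i.e., $\semsingsetgrn{\alpha(C_1,\ldots,C_n)}(\{\divg\}) = \{\divg\}$), which is forced by the convention $\semsingproggrn{c}(\{\divg\}) = \{\divg\}$ for all $c$. Since no loop-free constructor can introduce $\divg$ from a non-$\divg$ input, $\semsingsetyel{\cdot}$ and $\semsingsetgrn{\cdot}$ agree on $2^\State$ inputs over loop-free programs, and the compositional characterization of the former extends to the latter without modification. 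Combining these cases yields the required $f^A_\alpha$ and $f^A_\cup$ for every non-$\mathsf{While}$ constructor, establishing compositionality over sets of loop-free programs.
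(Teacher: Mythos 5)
Your proposal is correct and follows essentially the same route as the paper: the paper's proof simply points to the compositional equations in Figure~\ref{fig:single_state_semantics} and asserts their equivalence to Definition~\ref{def:sing_state_sem_set} by structural induction, which is exactly the induction you carry out constructor by constructor (with set union as $f^A_\cup$ and the divergence-aware case handled by noting loop-free programs never introduce $\divg$). The only superfluous bit is the appeal to $\varfun(\cdot)$ in the if-then-else case---it is not needed there, since states are defined over all of $\vars$ and the three-way case split follows directly from whether $\semsingset{B}(\sigma)$ contains only $\Et$, only $\Ef$, or both.
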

\begin{proof}
    Figure~\ref{fig:single_state_semantics} provides a compositional characterization of both semantics; 
    equivalence to Definition~\ref{def:sing_state_sem_set}
    can be proven by structural induction.
\end{proof}
  However, $\semsingset{\cdot}$ and 
  $\semsingsetgrn{\cdot}$ are no longer compositional when programs are allowed 
  to have loops.
\vspace{-5mm}
\begin{theorem} [Noncompositionality of $\semsingset{\cdot}$ and 
$\semsingsetgrn{\cdot}$] \label{thm:sing_noninductive}
    Both $\semsingset{\cdot}$ and $\semsingsetgrn{\cdot}$
    do not admit compositional characterizations when loops are permitted.
\end{theorem}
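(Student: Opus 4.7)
The plan is a contradiction argument attacking the $\mathsf{While}$ constructor directly. Suppose $\semsingset{\cdot}$ admitted a compositional characterization. By \Cref{def:inductivity} there would then be a function $f^{\singyel}_{\mathsf{While}}$ satisfying $\semsingset{\Ewhile{B}{S}} = f^{\singyel}_{\mathsf{While}}(\semsingset{B}, \semsingset{S})$ for every RTG-definable $B \in \Boolset$ and $S \in \Stmtset$. My strategy is to produce two choices $(B, S)$ and $(B', S)$ with $\semsingset{B} = \semsingset{B'}$ yet $\semsingset{\Ewhile{B}{S}} \neq \semsingset{\Ewhile{B'}{S}}$, immediately contradicting compositionality.

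For the witness I reuse the sets foreshadowed in \S\ref{subsec:setlogic}: take $S = \{\Eassign{x}{x+1}\}$, $B = \{x == 1,\ \neg(x == 1)\}$, and $B' = \{x == 2,\ \neg(x == 2)\}$, each trivially defined by a small RTG. The equality $\semsingset{B} = \semsingset{B'}$ follows from a pointwise check using \Cref{fig:single_state_semantics}: on every state $\sigma$, exactly one guard in each set evaluates to $\Et$ and the other to $\Ef$, so both sets produce exactly $\{\sigma\subs{\Et}{b_t}, \sigma\subs{\Ef}{b_t}\}$ on any singleton input. Since $\semsingset{\cdot}$ distributes over unions of inputs, this pointwise agreement extends to equality as maps on $2^\State$.

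To separate the loops I would evaluate both on $\sigma_0 = [x \mapsto 0]$, using the identity $\semsingset{\Ewhile{B}{S}}(\sigma_0) = \bigcup_{b \in B, s \in S} \semsingprogyel{\Ewhile{b}{s}}(\sigma_0)$. For $B$: $\Ewhile{x == 1}{\Eassign{x}{x+1}}$ exits immediately with $\sigma_0$, while $\Ewhile{\neg(x == 1)}{\Eassign{x}{x+1}}$ takes one body step and then exits at $[x \mapsto 1]$, yielding $\{[x \mapsto 0], [x \mapsto 1]\}$. For $B'$ the analogous loop $\Ewhile{\neg(x == 2)}{\Eassign{x}{x+1}}$ iterates twice before the guard fails, producing $\{[x \mapsto 0], [x \mapsto 2]\}$. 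The two output sets differ at $[x \mapsto 1]$ versus $[x \mapsto 2]$, contradicting the existence of $f^{\singyel}_{\mathsf{While}}$. The same witness handles $\semsingsetgrn{\cdot}$: no program in $B \cup B' \cup S$ diverges on any reached state, so no $\divg$ is introduced and the identical equalities and inequalities hold with $\semsingsetgrn{\cdot}$ in place of $\semsingset{\cdot}$. I do not foresee a substantive obstacle beyond careful bookkeeping of the fixed-point unfoldings for the four concrete loops.
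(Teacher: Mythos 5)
Your proposal is correct and follows essentially the same route as the paper's own proof: the same guard sets $\{x == 1, \neg(x==1)\}$ versus $\{x == 2, \neg(x==2)\}$ with body $\Eassign{x}{x+1}$, the same check that the guard sets are indistinguishable under $\semsingset{\cdot}$, and the same separation of the resulting loop sets on the input $x = 0$ (outputs $\{x{=}0, x{=}1\}$ versus $\{x{=}0, x{=}2\}$), with the identical observation carrying over to $\semsingsetgrn{\cdot}$.
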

\begin{proof}
    We prove that $\semsingset{\cdot}$ is 
    not compositional for loops; the proof is identical 
    for $\semsingsetgrn{\cdot}$.
    
    We will show that there are two sets of loop guards, $B_1$ and $B_2$, 
    with the \emph{same} program-agnostic semantics but which can be used to construct 
    otherwise identical sets of loops with \emph{different} program-agnostic semantics.
    Let 
    \begin{align*}
        B_1 = \{x == 1, \neg (x == 1)\} &\hspace{4mm} B_2 = \{x == 2, \neg (x == 2)\}.
    \end{align*}
    Then observe for any input state $\sigma \in \State$ and $B= B_1$ or $B_2$, 
    there is a guard $\mathsf{guard}_{\Etrue} \in B$ such that 
    $\semsingprog{\mathsf{guard}_{\Etrue}}(\state)[b_t] = \Et$
    and a guard $\mathsf{guard}_{\Efalse} \in B$ such that 
    $\semsingprog{\mathsf{guard}_{\Efalse}}(\state)[b_t] = \Ef$.
    Thus, on every set of input states $\Sigma$,
    $\semsingset{B_1}(\Sigma) = \semsingset{B_2}(\Sigma) = 
    \Sigma\subs{\Et}{b_t} \cup \Sigma\subs{\Ef}{b_t}$.
    
    Now consider two sets of loops, $W_1$ and $W_2$, 
    that share the same body:
    \begin{align*}
        W_1 = \Ewhile{B_1}{x := x + 1} &\hspace{4mm} W_2 = \Ewhile{B_2}{x := x + 1}.
    \end{align*}
    Because $\semsingset{B_1} = \semsingset{B_2}$, if $\semsingset{\cdot}$
    is
    compositional, it must be the case that $\semsingset{W_1} = \semsingset{W_2}$.
    However, a simple check proves they are not:
    \begin{align*}
        \semsingset{W_1}&(\{\sigma \in \State \mid \sigma[x] = 0\})\\ &= \{\sigma \in \State \mid \sigma[x] = 0\} \cup \{\sigma \in \State \mid \sigma[x] = 1\}\\
        \semsingset{W_2}&(\{\sigma \in \State \mid \sigma[x] = 0\})\\ &= \{\sigma \in \State \mid \sigma[x] = 0\} \cup \{\sigma \in \State \mid \sigma[x] = 2\}
    \end{align*}

    Because $\semsingset{W_1} \neq \semsingset{W_2}$, we have that $\semsingset{\cdot}$ and $\semsingsetgrn{\cdot}$
    are not compositional over sets of programs containing loops.
\end{proof}

Intuitively, 
\Cref{thm:sing_noninductive}
means that $\semsingset{\cdot}$ cannot always be
determined compositionally when sets of loops are allowed. 
Thus, compositional verification techniques with semantics equivalent to
$\semsingset{\cdot}$ (see \Cref{subsec:setlogic,subsec:interpreter}) will be
unable to prove many properties about such sets of programs. 
For example, no such technique can prove any property 
that distinguishes
$W_1$ from $W_2$, as defined above. 
More generally, properties of sets of loops
that
involve multiple iterations of a loop
are in general unprovable 
via compositional techniques over $\semsingset{\cdot}$
(e.g., checking that some loop in $W_2$ can map a
$\state$ 
in which
$x \mapsto 0$ to $\state\subs{2}{x}$ requires executing the loop 
$\Ewhile{\neg(x==2)}{x := x + 1}$ for two iterations). 
Put another way, $\semsingset{\cdot}$ cannot ``express multiple executions'' 
as explained in \Cref{sec:intro}, 
so verification techniques built on 
$\semsingset{\cdot}$ cannot check many properties 
that require repeated execution of the same code.

\section{Vector-State Semantics}
\label{Se:modestgoal}

Having proved that 
$\semsingsetyel{\cdot}$ and $\semsingsetgrn{\cdot}$ 
do not admit compositional characterizations, 
the natural question to ask is:
What is the coarsest \emph{compositional} semantics that is at least as fine as $\semsingsetyel{\cdot}$
(or $\semsingsetgrn{\cdot}$)?
In this section, we answer this question by proposing the 
\emph{vector-state semantics} $\semweaksetyel{\cdot}$ 
(and $\semweaksetgrn{\cdot}$), the coarsest compositional semantics 
that is at least as fine-grained as $\semsingsetyel{\cdot}$ 
(or as $\semsingsetgrn{\cdot}$).
\subsection{Intuition for Vector-States} \label{sec:weak_vs_intuition}

The fundamental challenge of describing the behavior of sets of programs is ``synchronizing'' programs with inputs 
to express multiple executions, as described in \Cref{sec:intro}.
To enable this synchronization, we lift our semantics from sets of individual input states to sets of sequences of input states. 
This approach allows us to apply the same program to each state in an input sequence, 
so that the corresponding output sequence is the result of executing a single program $\prog \in \progset$.

The noncompositionality of the program-agnostic semantics 
$\semsingsetyel{\cdot}$ and $\semsingsetgrn{\cdot}$ can be attributed to their inability to express multiple executions. 
In particular, for a set of guards $B$ and set of statements $S$, each loop 
$(\Ewhile{b}{s}) \in (\Ewhile{B}{S})$ may execute $b$ and $s$ arbitrarily many times. 
Consequently, to answer questions about the program-agnostic semantics of $\Ewhile{B}{S}$, we need to be able to answer questions about multiple executions
of specific pairs $(b, s) \in B \times S$.
For example, to know that
$p_n =~ \statelist{\stateentry{x}{10}}$ is a possible final state of 
$\semsingsetyel{\Ewhile{B}{S}}(\{\statelist{\stateentry{x}{0}}\})$, 
we would need to know that there is a guard $b \in B$, a loop body $s \in S$, and a sequence of states $[p_1, \cdots, p_n]$ 
for which
$b$ is true on all $p_j$ except $p_n$, and $s$ maps each $p_j$ to $p_{j+1}$.

More generally, given a set of programs $\progset$ and sequences $[p_1,\cdots, p_n]$ and $[q_1,\cdots, q_n]$ of inputs and outputs, 
one must be able to ask whether there
exists a \emph{single} program $\prog \in \progset$ such that $\semsingprog{\prog}(p_j) = q_j$ for all $j$. 
Questions of this sort cannot be answered from $\semsingsetyel{\progset}$, 
yet they are necessary to understand the program-agnostic behavior of sets of loops defined in terms of $\progset$.

The goal of this section is to introduce
two new, \emph{vector-state semantics}, denoted by $\semweaksetyel{\cdot}$ and $\semweaksetgrn{\cdot}$, which
capture exactly this kind of information about $\progset$
(with and without the ability to capture nontermination, respectively).
These vector-state semantics admit compositional characterizations over 
\textsf{while}, and they are at least as fine as $\semsingsetyel{\cdot}$ and $\semsingsetgrn{\cdot}$, respectively. 
In addition, the vector-state semantics are the \emph{coarsest} such semantics, 
and thus they represent the minimum amount of information about sets of programs one must track to enable compositional reasoning.

\subsection{\texorpdfstring{$\semweaksetyel{\cdot}$}{}: \yellowCaps Vector-State Semantics}
\label{sec:weak-vector-sem}
We begin by developing 
$\semweaksetyel{\cdot}$---a \yellow vector-state semantics.
We first provide
some intuition about how $\semweakset{\cdot}$ should operate on 
input vector-states, and formalize the definition of $\semweaksetyel{\cdot}$.
Then, we prove that
\rone $\semweaksetyel{\cdot}$ has a compositional characterization, and 
\rtwo $\semweaksetyel{\cdot}$ is
the coarsest-grained compositional semantics that is 
finer than $\semsingset{\cdot}$.
As discussed,
we will use 
vector-states to capture the behavior of a set of programs 
(e.g., a set of loop bodies) under multiple executions.

Following this intuition, $\semweaksetyel{\progset}$ 
should let each $\prog \in \progset$ operate over 
each entry in the vector
to generate an output vector-state.
For example, given the set 
$S = \set{\Eassign{x}{x + 2}, \Eassign{x}{x + 10}}$, 
the vector-state semantics
$\semweaksetyel{S}([\statelist{\stateentry{x}{2}}, \statelist{\stateentry{x}{4}}])$ yields 
$\{[\statelist{\stateentry{x}{4}}, \statelist{\stateentry{x}{6}}], 
   [\statelist{\stateentry{x}{12}}, \statelist{\stateentry{x}{14}}]
\}$.

We define $\semweaksetyel{\cdot}$ so that, when a program diverges on 
\emph{any} state in a vector-state, 
the \emph{entire} output vector is discarded. This is
similar to how Hoare logic returns $\Efalse$ 
as the postcondition of a diverging loop. For example, for the singleton set 
$S = \set{\Ewhile{x < 2}{\Eassign{x}{x - 1}}}$, the vector-state semantics
$\semweaksetyel{S}([\statelist{\stateentry{x}{2}}, \statelist{\stateentry{x}{4}}])$ yields $\{[\statelist{\stateentry{x}{2}}, 
\statelist{\stateentry{x}{4}}]\}$ 
because the loop does not diverge on any input in the vector, 
but $\semweaksetyel{S}([\statelist{\stateentry{x}{2}}, \statelist{\stateentry{x}{4}}, \statelist{\stateentry{x}{1}}])$ 
yields $\emptyset$: 
the \emph{entire output vector-state} is eliminated, due to the existence of a diverging 
element in the input vector-state.

\subsubsection{Formalization of $\semweaksetyel{\cdot}$}
In the following, $\State^*$ denotes the set of finite sequences over $\State$.

\begin{definition} [\yellowCaps Vector-State Semantics of a Program] \label{def:weak_sem_prog_yel}
    The \emph{vector-state semantics} for programs $\semweakprogyel{\cdot}$ 
    over the signature $\weak(2^{\State^*} \rightarrow 2^{\State^*})$ is defined as follows.
    We first define the semantics for singleton vector states.
    For a program $\prog \in T$ and finite vector state $[v_1,\cdots,v_n] \in \State^*$, the semantics 
    $\semweakprogyel{\prog}(\{[v_1,\cdots,v_n]\}))$ is defined as follows:
    \begin{enumerate}
        \item If for every $v_i$ the program $\prog$ does not diverge---i.e., $\semsingprogyel{\prog}(v_i)\neq \emptyset$---then 
        \[\semweakprogyel{\prog}(\{[v_1,\ldots,v_n]\}) = \{[\semsingprogyel{\prog}(v_1), \cdots, \semsingprogyel{\prog}(v_n)]\}.\]
        \item If for some $v_i$ the program $\prog$ does diverge---i.e., $\semsingprogyel{\prog}(v_i) = \emptyset$---then
        \[\semweakprogyel{\prog}(\{v\}) = \emptyset.\]
    \end{enumerate}

    \noindent
    Note that entries of the output vectors are technically singleton sets; we unpack them as states.

    \noindent
    Then for a set of vector states $V \subseteq \State^*$, we define 
    $\semweakprogyel{\prog}(V) = \bigcup\limits_{u \in V}\semweakprogyel{\prog}(\{u\})$.
\end{definition}

Intuitively, $\semweakprogyel{\prog}(\{v\})$ (often abbreviated as $\semweakprogyel{\prog}(v)$) 
gives the output of $\prog$ on $v$ for each entry of $v$ unless $\prog$ diverges on some state in $v$. 
If $\prog$ diverges along $v$, then no vector-state is returned. 
Thus, $\semweakprogyel{\prog}(V)$ will show no trace of vectors $v \in V$ 
for which
$\semweakprogyel{\prog}(v) = \emptyset$. 
Generalizing $\semweakprogyel{\cdot}$ to sets of programs is done by unioning the semantics of all programs in the set,
similar to \Cref{def:sing_state_sem_set}.

\begin{definition} [\yellowCaps Vector-State Semantics of a Set of Programs] \label{def:weak_sem_set_yel}
    The \textit{vector-state semantics} for sets of programs, denoted $\semweaksetyel{\cdot}$,
    over the signature
    $\setweak(2^{\State^*} \rightarrow 2^{\State^*})$
    is defined 
    for an input set $V \subseteq \State^*$ as follows:
    \[\semweaksetyel{\progset}(V) = \bigcup\limits_{\prog \in \progset} \semweakprogyel{\prog}(V).\] 
\end{definition}

\begin{example} 
\label{ex:vector-states}
    Consider $S = \{\Eassign{x}{x + n} \mid n \in \N\}$.
    Given $v = [\statelist{\stateentry{x}{1}}, \statelist{\stateentry{x}{2}}, \statelist{\stateentry{x}{3}}]$, we get 
    $\semweaksetyel{S}(v) = \{[\statelist{\stateentry{x}{1+n}}, \statelist{\stateentry{x}{2+n}}, \statelist{\stateentry{x}{3+n}}] \mid n \in \N\}$.
\end{example}

\subsubsection{Properties of $\semweaksetyel{\cdot}$}
$\semweaksetyel{\cdot}$ admits a compositional characterization
(Theorem~\ref{thm:weak_compositional_yel}), and
is the coarsest compositional semantics 
that is finer-grained than $\semsingsetyel{\cdot}$ (Theorem~\ref{thm:weak_minimality_yel}).

\begin{restatable} [Compositionality of $\semweaksetyel{\cdot}$]{thm}{yelwkcomp} \label{thm:weak_compositional_yel}
    $\semweaksetyel{\cdot}$ admits a compositional characterization.
\end{restatable}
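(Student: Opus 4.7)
The plan is to exhibit, for each constructor of $\gimp$, a function $f^{v}_{\alpha}$ that composes the vector-state semantics of a production's operands into the vector-state semantics of the production's right-hand side, together with an aggregator $f^{v}_{\cup}$ that unions denotations across the productions of a nonterminal. The aggregator is just pointwise set union: if $N ::= R_1 \mid \cdots \mid R_k$, then $\semweaksetyel{L(N)}(V) = \bigcup_i \semweaksetyel{L(R_i)}(V)$, and \Cref{def:weak_sem_set_yel} makes this immediate.

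For every loop-free constructor (constants, variables, negation, binary operators, assignment, sequential composition, and if-then-else), the compositional characterization lifts the corresponding single-program rule of \Cref{fig:single_state_individual_semantics} to vector-states by operating entry-wise. For instance, $f^{v}_{:=}$ applies the vector-state semantics of $E$ to the input and updates the target variable in each entry; $f^{v}_{\odot}$ pairs outputs drawn from the two operands on a common input vector and combines them entry-wise; $f^{v}_{;}$ is functional composition; and $f^{v}_{\text{ITE}}$ partitions each entry of the input vector according to the Boolean guard's entry-wise truth values and recombines the two branches' outputs. Correctness of these definitions follows because, by \Cref{def:weak_sem_prog_yel}, $\semweaksetyel{\progset}$ applies each $\prog \in \progset$ entry-wise and unions the results over $\progset$, so entry-wise lifting commutes with each constructor.

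The substantive obstacle is characterizing $\semweaksetyel{\Ewhile{B}{S}}$ from $\semweaksetyel{B}$ and $\semweaksetyel{S}$ alone: a single loop $\Ewhile{b}{s}$ must reuse the \emph{same} $b$ and $s$ across arbitrarily many iterations, whereas $\semweaksetyel{B}$ exposes only what each $b \in B$ does to one vector at a time. My plan is to exploit the following observation: a single output vector of $\semweaksetyel{B}(u)$ is by construction the result of applying a single $b \in B$ to every entry of $u$, so by operating on \emph{extended} input vectors that concatenate states drawn from multiple prospective iterations, the within-application consistency of $\semweaksetyel{B}$ (and of $\semweaksetyel{S}$) transfers to across-iteration consistency. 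Concretely, I will define $f^{v}_{\text{While}}(\semweaksetyel{B}, \semweaksetyel{S})$ as the least fixed point of an operator that, at each step, applies $\semweaksetyel{B}$ to the currently-live sub-vector to split entries into exiting and continuing, then applies $\semweaksetyel{S}$ to the continuing sub-vector to obtain the next iteration's states, retiring exiting entries into the output; on input vectors whose iteration never stabilizes on some entry, the whole output vector is discarded, matching the divergence treatment of \Cref{def:weak_sem_prog_yel}.

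The hard part will be verifying that this fixed-point construction equals the direct definition of $\semweaksetyel{\Ewhile{B}{S}}$. Completeness is the easier direction: given a witnessing $(b,s) \in B \times S$, the corresponding derivation can be threaded through each iteration, since its guard and body vectors are by definition present in $\semweaksetyel{B}$ and $\semweaksetyel{S}$. Soundness is subtler, because at each iteration the operator independently chooses an output vector from $\semweaksetyel{B}$ and from $\semweaksetyel{S}$, and we must argue these choices are globally realized by a single $(b,s)$. The resolution is that accumulating states across iterations into extended vectors forces every subsequent application of $\semweaksetyel{B}$ (resp.\ $\semweaksetyel{S}$) to commit to a program whose behavior agrees, on the entire accumulated trace, with the previously chosen one; any two programs indistinguishable on that trace are interchangeable as witnesses. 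This is precisely the added precision that vector-states provide over sets of single states, and is what fails for $\semsingsetyel{\cdot}$ in \Cref{thm:sing_noninductive}.
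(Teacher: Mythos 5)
Your proposal is correct and takes essentially the same route as the paper's proof: entry-wise lifting with filter/interleave for the loop-free constructors, and for loops the key synchronization idea of querying $\semweaksetyel{B}$ and $\semweaksetyel{S}$ on vectors that concatenate the states of all iterations (and all input entries), which is exactly the paper's generate-and-test over finite traces, merely recast as an incremental fixed-point construction. Do make sure the while operator is stated so that every query is posed on the full accumulated trace (as your soundness paragraph intends) rather than only the currently-live frontier, since per-iteration queries alone would not force a single witnessing pair $(b,s)$ across iterations.
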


\begin{figure*}
\begin{center}
{ \small
\[
\begin{array}{c}
  \semweakset{1}(V)= \Red(\bigcup_{v \in V} \set{v\subs{1}{e_t}}) \quad
  \\[2.5mm]
  \semweakset{x_j}(V)= \Red(\bigcup_{v \in V} \set{v\subs{v[x_j]}{e_t}})\quad
  \semweakset{\Et}(V)= \Red(\bigcup_{v \in V} \set{v\subs{\Et}{b_t}}) 
  \\[2.5mm]
  \semweakset{\Ef}(V)= \Red(\bigcup_{v \in V} \set{v\subs{\Ef}{b_t}}) \quad
  \semweakset{\neg B}(V)=
    \set{v^{b}\subs{\neg v^{b}[b_t]}{b_t} \mid v^{b} \in \semweakset{B}(V)}
  \\[2.5mm]
  \semweakset{A_1 \odot A_2}(V)=
      \set{v^{2}\subs{v^{1}[q_t] \odot v^{2}[q_t]}{r_t} \mid v^{1} \in \semweakset{A_1}(V), v^{2} \in \semweakset{A_2}(V)}
  \\[2.5mm]
  \semweakset{\Eassign{x}{E}}(V)=\Red(\bigcup_{v \in V} \set{v^{e}\subs{v^{e}[e_t]}{x} \mid v^{e} \in \semweakset{E}(v)}) 
  \\[2.5mm]
  \semweakset{S_1;S_2}(V) = \semweakset{S_2}(\semweakset{S_1}(V)) \quad 
  \\[2.5mm]
  \semweakset{\Eifthenelse{B}{S_1}{S_2}}(V)= \\\left\{\mathit{interleave}(v^{s_1},v^{s_2},v^{b}) 
  \,\middle\vert\,
  v^{b} \in \semweakset{B}(V), v^{s_1} \in \semweakset{S_1}(\filter{V}{v^{b}}), v^{s_2} \in \semweakset{S_2}(\filter{V}{\neg v^{b}})\right\}
  \\[2.5mm]
  \semweakset{\Ewhile{B}{S}}(V) = f^{\setweak}_{while}(\semweakset{B}, \semweakset{S}, V)
  \\[2.5mm]
  \semweakset{\bigcup_{j < n} S_j}(V) = \Red(\bigcup_{j < n}\semweakset{S_j}(V))
\end{array}
\]
}
\end{center}
\vspace{-1mm}
\caption{
  A compositional characterization
  for the \yellow vector-state semantics $\semweaksetyel{\cdot}$ defined in
  \Cref{def:weak_sem_set_yel}.
  The only differences between the characterizations of 
  $\semweaksetyel{\cdot}$ and $\semweaksetgrn{\cdot}$ (the \green vector-state semantics 
   introduced in \Cref{sec:grn_vs}) are that \rone for 
  $\semweaksetyel{\cdot}$, one does not apply $\Red$ 
  and \rtwo the functions $f_{\mathit{while}}$ are different.
  The symbol $\odot$ denotes a binary operator in $\gimp$, and 
  $q_t$ and $r_t$ refer to auxiliary variables of the appropriate type. 
  Substitutions $v\subs{a}{b}$ to vectors are applied elementwise. 
  The functions $f_{\mathit{while}}$, $\mathit{interleave}$, and $\mathit{filter}$ are
  defined in the proofs of Theorems~\ref{thm:weak_compositional_yel} and \ref{thm:weak_inductive}\noappendix{, available in the extended version of the paper~\cite{semarxiv}}.
  Briefly, $\mathit{interleave}(v^{s_1}, v^{s_2}, v^b)$ produces an array using entries of $v^{s_1}$ where $v^b$ is true and $v^{s_2}$ where $v^b$ is false, and $\mathit{filter}(V, v^b)$ deletes entries from vectors in $V$ where $v^b$ is false. On singleton vectors, the function $f_{\mathit{while}}$ produces $\semweakset{\Ewhile{B}{S}}([\state])$ by finding all realizable traces of loop executions on $\state$ according to $\semweakset{B}$ and $\semweakset{S}$.
}
\label{fig:weak_compositional}
\vspace{-4mm}
\end{figure*}
\Cref{fig:weak_compositional} gives compositional characterizations 
of $\semweakset{\cdot}$ over unions and all constructors except
\textsf{while}.
The full proof is given in \Cref{app:weak_compositionality_yel}.
We give a sketch below.

\begin{proofsketch}
    We sketch only the most complex cases---\textsf{if-then-else} 
    and \textsf{while}.
    We restrict our explanation to singleton sets 
    of inputs for simplicity of presentation.

    For $\Eifthenelse{B}{S_1}{S_2}$, 
    we can determine the ways that guards in $B$ can split our inputs by evaluating 
    $\semweaksetyel{B}(v) = \{v^b \mid b \in B\}$. For each such $v^b$, we pass the part of $v$ where $v^b$ is true to $\semweakset{S_1}$ and the false part to $\semweakset{S_2}$. We then reassemble the results according to $v^b$.

    For $\Ewhile{B}{S}$ and vectors $[\state]$, we employ a ``generate and test'' strategy.
    The execution of a loop can be written as the sequence of states reached after
    every iteration. 
    To find all possible loop executions, we consider each finite vector $t$ of states 
    beginning with $\state$ and ask \rone whether any guard can send all
    $t$'s states to true except the final state based on $\semweaksetyel{B}$, 
    and \rtwo whether any body maps each state in $t$ to its successor based on $\semweaksetyel{S}$.
    If both checks pass, the vector $t$ represents a possible loop execution and its final 
    state is a possible output 
    of
    $\Ewhile{B}{S}$ on $\state$. 
    Generalizing to input vectors of length $k$ requires considering concatenations of $k$ 
    guesses as a single vector.
    
\end{proofsketch}

The \yellow vector-state semantics $\semweaksetyel{\cdot}$
is the \emph{coarsest} compositional semantics that captures the 
program-agnostic semantics $\semsingset{\cdot}$.
In other words, $\semweaksetyel{\cdot}$ sets a lower bound on the granularity 
required for a compositional semantics over sets of programs that captures $\semsingsetyel{\cdot}$ -- i.e., 
a lower bound on the complexity, of reasoning about sets of programs 
in a compositional fashion.

\begin{restatable}
    [$\semweaksetyel{\cdot}$ is Coarsest Compositional for $\semsingsetyel{\cdot}$]{thm}{yelwkmin} \label{thm:weak_minimality_yel}
    For every compositional semantics 
    $\abstrs{A}$ such that $\semsingsetyel{\cdot}\preceq \abstrs{A}$, we have that $\semsingsetyel{\cdot} \preceq \semweaksetyel{\cdot}\preceq \abstrs{A}$.
\end{restatable}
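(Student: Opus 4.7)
The theorem splits into two inequalities, $\semsingsetyel{\cdot} \preceq \semweaksetyel{\cdot}$ and $\semweaksetyel{\cdot} \preceq \abstrs{A}$. The first inequality is straightforward: given $\semweaksetyel{C}$, I would recover $\semsingsetyel{C}(X)$ for any $X \subseteq \State$ via the reconstruction function $f(\semweaksetyel{C})(X) = \{\sigma' \mid [\sigma'] \in \semweaksetyel{C}(\{[\sigma] \mid \sigma \in X\})\}$, which is correct by inspection of \Cref{def:weak_sem_prog_yel} and \Cref{def:sing_state_sem_set}.

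For the harder inequality $\semweaksetyel{\cdot} \preceq \abstrs{A}$, my plan is to prove the contrapositive: if $\semweaksetyel{C_1} \neq \semweaksetyel{C_2}$ then $\abstrs{A}(C_1) \neq \abstrs{A}(C_2)$. Because $\semweaksetyel{C}$ distributes over unions of singleton vectors, the disagreement gives a single vector $v = [v_1, \ldots, v_n]$ and another vector $v' = [v'_1, \ldots, v'_n]$ such that, WLOG, $v' \in \semweaksetyel{C_1}(\{v\}) \setminus \semweaksetyel{C_2}(\{v\})$. Unfolding \Cref{def:weak_sem_prog_yel}, this means that some $c \in C_1$ satisfies $\semsingprogyel{c}(v_j) = v'_j$ for every $j$, while no $c \in C_2$ does. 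To turn this vector-state witness into a single-state witness detectable by $\semsingsetyel{\cdot}$, I would construct a one-hole context
\[
  E[C] \;=\; \Eassign{i}{0};\; \Eassign{\mathit{bad}}{0};\; \Ewhile{i < n}{\; B_i ;\; C ;\; A_i ;\; \Eassign{i}{i+1}}
\]
where $B_i$ is a cascade of $\mathsf{if}$-tests on $i$ that, on iteration $i$, overwrites the finitely many variables of $U = \varfun(C_1) \cup \varfun(C_2)$ with the constants recorded in $v_{i+1}$ (constants being built from $0, 1, +, -$), and $A_i$ is a similar cascade that sets the fresh variable $\mathit{bad}$ to $1$ if the current state disagrees with $v'_{i+1}$ anywhere on $U$. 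The auxiliary variables $i$ and $\mathit{bad}$ are chosen disjoint from $U$, which is possible because $\abs{\varfun(C_i)} < \infty$ (\S\ref{sec:ind_sets}). Since a while loop commits to a single body $c \in C$ for the entire execution, the loop terminates with $\mathit{bad} = 0$ from a fixed initial state iff there exists one $c \in C$ sending each $v_j$ to $v'_j$ on $U$; and for genuine witnesses $v, v'$, agreement on $U$ is sufficient because $c$ leaves variables outside $U$ untouched. Thus $\semsingsetyel{E[C_1]} \neq \semsingsetyel{E[C_2]}$. Compositionality of $\abstrs{A}$ (\Cref{def:inductivity}) propagates the hypothesis $\abstrs{A}(C_1) = \abstrs{A}(C_2)$ through the context to $\abstrs{A}(E[C_1]) = \abstrs{A}(E[C_2])$, and the assumption $\semsingsetyel{\cdot} \preceq \abstrs{A}$ then forces $\semsingsetyel{E[C_1]} = \semsingsetyel{E[C_2]}$, a contradiction.

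The main obstacle is getting the context construction exactly right and verifying the equivalence ``$\mathit{bad} = 0$ on termination iff a single $c \in C$ realizes $v \mapsto v'$ pointwise.'' This requires arguing that the $B_i, A_i$ cascades can only touch the variables I intend them to (so that they do not contaminate the side of the execution involving $C$), that the initial values of variables outside $U$ are irrelevant to the check, and that the witness $v, v'$ necessarily has $v_j$ and $v'_j$ agreeing outside $U$ (which follows from the realizing $c \in C_1$ only writing to $\varfun(c) \subseteq U$). A secondary wrinkle is that $C_1, C_2$ could be sets of expressions or Boolean expressions rather than statements; in those cases I would first wrap the hole with $\Eassign{y}{\cdot}$ or an $\mathsf{if}$-then-else to lift the witness to a statement-level distinction, then apply the same loop-based detector.
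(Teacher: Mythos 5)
Your proposal is correct and follows essentially the same route as the paper's proof in \Cref{app:weak_coarseness_yel}: the core gadget is the same loop $W_{v,u}$ that iterates over the indices of the witness vector, forces the variables in $\varfun(C_1)\cup\varfun(C_2)$ to $v_j$, runs the hole, and compares against $v'_j$, relying on the fact that a loop commits to a single body across iterations, with expressions and Boolean expressions handled by the same wrapping trick. Your contrapositive, distinguishing-context phrasing is just the well-definedness reading of the paper's explicit reconstruction of $\semweaksetyel{S}$ from $\abstrsapp{A}{S}$ (the two are equivalent under \Cref{def:granularity}), and the \texttt{bad}-flag loop versus the paper's counter-reset loop is a cosmetic variation.
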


\begin{proofsketch}
    Clearly,
    $\semsingsetyel{\cdot} \preceq \semweaksetyel{\cdot}$ as
    $\semweaksetyel{\progset}$ on vectors of length $1$ 
    simulates $\semsingsetyel{\progset}$.

    We must then show that $\semweaksetyel{\cdot}$ is the coarsest compositional semantics at least as fine as $\semsingsetyel{\cdot}$. To do so, we observe that
    any compositional semantics $\abstrs{A}$ such that $\semsingsetyel{\cdot} \preceq \abstrs{A}$ 
    must determine sets of statements $S$ well enough to reason about the program-agnostic semantics of sets of loops in which $S$ appears in the body. Given a set of statements $S$, for each pair $(v,u)$ of input and output vector-states, we construct sets of loops $W_{v,u}$ that loop over the indices of our vectors and check the behavior of $S$ on each entry (i.e., $\Ewhile{b}{s_1;S;s_2}$, where $s_1$ sets the input state to $v[i]$, $s_2$ checks if the output state is $u[i]$ and breaks if not, and $b$ iterates through $v$). The proof concludes by showing the program-agnostic semantics of such sets of loops determines whether $u \in \semweaksetyel{S}(v)$. 
    Thus, $\semweaksetyel{\cdot} \preceq \abstrs{A}$ on sets of statements. A similar argument carries over to expressions.
\end{proofsketch}
    The full proof of \Cref{thm:weak_minimality_yel} is given
    \Cref{app:weak_coarseness_yel}.

\vspace{-2mm}
\paragraph{Multiple-Input Extensional Properties}
The vector-state semantics $\semweaksetyel{\cdot}$ extends the ability of 
$\semsingsetyel{\cdot}$ to cover $k$-input extensional properties for unbounded $k$, 
which is useful for, e.g., 
capturing programming-by-example (PBE) program-synthesis problems~\cite{sketch,flashfill}.
Specifications in PBE are given as a set of input-output examples 
$\set{(\mathit{in}_1, \mathit{out}_1), \cdots, 
(\mathit{in}_k, \mathit{out}_k)}$; 
whether or not 
such a PBE specification 
has a solution in a set of programs $C$
can be checked by checking whether $[\mathit{out}_1, \cdots, \mathit{out}_k] \in 
\semweaksetyel{C}([\mathit{in}_1, \cdots, \mathit{in}_n])$.

\subsection{\texorpdfstring{$\semweaksetgrn{\cdot}$}{}: Divergence-Aware Vector-State Semantics}\label{sec:grn_vs}

We now introduce $\semweaksetgrn{\cdot}$, 
the \green vector-state semantics, 
which is the coarsest compositional semantics at least as fine as the 
\green program-agnostic semantics $\semsingsetgrn{\cdot}$.
Whereas $\semweaksetyel{\cdot}$ was a straightforward vectorization of $\semsingsetyel{\cdot}$ (\Cref{def:weak_sem_prog_yel}), nontermination requires more careful treatment:
na\"ively extending $\semsingsetgrn{\cdot}$ to vector-states will 
result in a semantics that is more fine-grained than necessary.
In this section, we provide 
intuition about
why nontermination imposes significant 
challenges on the construction of $\semweaksetgrn{\cdot}$.
\Cref{app:vec_grn} formalizes $\semweaksetgrn{\cdot}$.

\subsubsection{The Na\"ive Approach: \texorpdfstring{$\semweaksetgrnbad{\cdot}$}{}}\label{sec:vs_grn_naive}
The natural approach to handle nontermination is to extend $\semweaksetyel{\cdot}$ to operate over infinite vectors and record divergence on each state in a vector as follows:

\begin{example}[$\semweaksetgrnbad{\cdot}$ Semantics: Too Fine] \label{ex:bad_vs_grn}
    Consider a semantics for single programs $\semweakproggrnbad{\cdot}$ that extends 
    $\semsingproggrn{\cdot}$ towards vector-states, such that 
    $\semweakproggrnbad{\prog}(\set{[v_1, \cdots, v_n]}) = \set{[\semsingproggrn{\prog}(v_1), \cdots, \semsingproggrn{\prog}(v_n)]}$, 
    for a program $\prog$ and an input vector-state $[v_1, \cdots, v_n]$.
    Define $\semweaksetgrnbad{\progset}(V) = \bigcup_{\prog \in \progset} \semweakproggrnbad{\prog}(V)$ for a set of programs $\progset$ and 
    a set of input vector-states $V$.
    Then for a set of programs $S = \{w_1, w_2\}$ where 
    $w_1 = \Ewhile{x = 1}{\Eassign{x}{x}}$ and $w_2 = \Ewhile{x = 2}{\Eassign{x}{x}}$,
    
    {
     \small
    \begin{align*}
        \semweaksetgrnbad{S}(
          \{[\statelist{\stateentry{x}{1}}, \statelist{\stateentry{x}{2}}]\}) &= 
          \{[\sem{w_1}\statelist{\stateentry{x}{1}}, 
             \sem{w_1}\statelist{\stateentry{x}{2}}], 
            [\sem{w_2}\statelist{\stateentry{x}{1}}, 
             \sem{w_2}\statelist{\stateentry{x}{2}}]\} \\ 
        &= \{[\divg, \statelist{\stateentry{x}{2}}], 
             [\statelist{\stateentry{x}{1}}, \divg]\}\\
        \semweaksetgrnbad{S}(
          \{[\statelist{\stateentry{x}{2}}, 
             \statelist{\stateentry{x}{3}}, \cdots]\}) &= 
          \{[\sem{w_1}\statelist{\stateentry{x}{2}}, 
             \sem{w_1}\statelist{\stateentry{x}{3}}], 
            [\sem{w_2}\statelist{\stateentry{x}{2}}, 
             \sem{w_2}\statelist{\stateentry{x}{3}}]\} \\ 
        &= \{[\statelist{\stateentry{x}{2}}, \statelist{\stateentry{x}{3}}, \cdots], 
             [\divg, \statelist{\stateentry{x}{3}}, \cdots]\}
    \end{align*}}
    Observe how $\divg$ may appear both before, after, and in between other entries.
\end{example}
Although $\semweaksetgrnbad{\cdot}$ is compositional and finer than $\semsingsetgrn{\cdot}$, 
it is not the \emph{coarsest} such semantics.
  An alternative, coarser approach might be
  $\semweaksetgrnnc{\cdot}$, where
  an input vector containing at least one diverging entry is mapped entirely to $\divg$: 
  e.g., 
  $\semweaksetgrnnc{\Ewhile{x < 1}{\Eassign{x}{x - 1}}}
  ([\statelist{\stateentry{x}{0}}, 
    \statelist{\stateentry{x}{1}}]) = \: \divg$ instead of 
    $[\divg, \statelist{\stateentry{x}{1}}]$.
  However, $\semweaksetgrnnc{\cdot}$ turns out to be too \emph{coarse}:
  $\semweaksetgrnnc{\cdot}$ is non-compositional, for reasons 
  similar to \Cref{thm:sing_noninductive} 
  (
  see Appendix~\ref{app:small-proofs}
  for details).
  The coarsest compositional \green semantics for 
  vector-states lies in between these two approaches.

\subsubsection{The Right Approach: \texorpdfstring{$\semweaksetgrn{\cdot}$}{}}
To define a semantics with the appropriate granularity,
we adopt the idea of \Cref{ex:bad_vs_grn}, but simplify the output of $\semweaksetgrnbad{S}(P) =  X$ 
to remove unnecessary information about $S$.
Given the output vector-state set $X$,
our approach
boils down to first \rone truncating all vectors in $X$ at the first occurrence of $\divg$, denoted by $\mathit{truncate}(X)$, 
and second \rtwo using the following two rules to reduce $X$ (denoted by $\Red(X)$):
    \begin{enumerate}
        \item If $[a_1, \cdots, a_n, \divg] \in X$ and $[a_1, \cdots, a_n, a_{n+1}, \cdots, a_m, \divg] \in X$, then $[a_1, \cdots, a_n, a_{n+1}, \cdots, a_m, \divg]$ does not appear in the reduction of $X$.
        \item If $[a_1, \cdots, a_n, \divg] \in X$ and an infinite vector $[a_1, \cdots, a_n, a_{n+1}, \cdots] \in X$, then $[a_1, \cdots, a_n, a_{n+1}, \cdots]$ does not appear in the reduction of $X$.
    \end{enumerate}
The coarsest compositional divergent-aware semantics $\semweaksetgrn{\cdot}$ can then be defined, for a set of programs $\progset$ and a set of input vector-states $V$,
$\semweaksetgrn{\progset}(V) = \Red(\mathit{truncate}(\semweaksetgrnbad{\progset}(V)))$.
\begin{example}[Correct $\semweaksetgrn{\cdot}$ Semantics] \label{ex:good_vs_grn}
We illustrate the behavior of $\mathit{truncate}(X)$, $\Red(X)$, and $\semweaksetgrn{\cdot}$, 
by illustrating their application to the 
set of programs and inputs from \Cref{ex:bad_vs_grn}.
{
 \small
    \begin{align*}
        \semweaksetgrn{S}(\{[\statelist{\stateentry{x}{1}}, \statelist{\stateentry{x}{2}}]\}) &= \Red(\mathit{truncate}(\semweaksetgrnbad{S}(\{[\statelist{\stateentry{x}{1}}, \statelist{\stateentry{x}{2}}]\}) ))\\
        &= \Red(\mathit{truncate}(\{[\divg, \statelist{\stateentry{x}{2}}], [\statelist{\stateentry{x}{1}}, \divg]\})\\ 
        &= \Red(\{[\divg], [\statelist{\stateentry{x}{1}}, \divg]\})\\
        &= \{[\divg]\}\\ 
        \semweaksetgrn{S}(\{[\statelist{\stateentry{x}{2}}, \statelist{\stateentry{x}{3}}, \cdots]\}) &= \Red(\mathit{truncate}(\semweaksetgrnbad{S}(\{[\statelist{\stateentry{x}{2}}, \statelist{\stateentry{x}{3}}, \cdots]\})))\\
        &= \Red(\mathit{truncate}(\{[\statelist{\stateentry{x}{2}}, \statelist{\stateentry{x}{3}}, \cdots], [\divg, \statelist{\stateentry{x}{3}}, \cdots]\})) \\
        &= \Red(\{[\statelist{\stateentry{x}{2}}, \statelist{\stateentry{x}{3}}, \cdots], [\divg]\})) \\
        &= \{[\divg]\}
    \end{align*}
    }
\end{example}
Truncation and reduction correspond to removing all information about $S$ that could not be determined from the 
\green program-agnostic semantics of sets of loops $W$ defined in terms of $S$. 
This idea is clarified in 
\Cref{app:vec_grn}.
Removing this unnecessary information makes $\semweaksetgrn{\cdot}$ the coarsest-grained compositional semantics that is finer than $\semsingsetgrn{\cdot}$.

\begin{restatable}[Compositionality of $\semweaksetgrn{\cdot}$]{thm}{grnwkcomp}\label{thm:weak_inductive}
    The \green vector-state semantics over sets of programs $\semweaksetgrn{\cdot}$ admits a compositional characterization.
\end{restatable}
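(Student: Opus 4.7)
The plan is to extend the compositional characterization of $\semweaksetyel{\cdot}$ from \Cref{fig:weak_compositional} to the divergence-aware setting and then verify the result by structural induction on the grammar. Since $\semweaksetgrn{\progset}(V) = \Red(\mathit{truncate}(\semweaksetgrnbad{\progset}(V)))$ and $\semweaksetgrnbad{\cdot}$ is compositional by direct per-entry vectorization of $\semsingproggrn{\cdot}$, the key technical question is whether $\Red \circ \mathit{truncate}$ can be pushed inside each constructor, so that the compositional rule operates only on already-reduced denotations.

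For every constructor other than $\mathsf{while}$, I would reuse essentially the same clause as in \Cref{fig:weak_compositional}, wrapping the result in $\Red \circ \mathit{truncate}$ to maintain the canonical form of \Cref{sec:grn_vs}. For constants, variables, Boolean literals, negation, binary operators, and assignment, divergence is handled pointwise and truncation is trivial, so these cases reduce immediately to the \yellow ones. Sequencing $\Eseq{S_1}{S_2}$ relies on the convention $\semsingproggrn{s}(\divg) = \divg$: any vector in $\semweaksetgrn{S_1}(V)$ ending in $\divg$ is passed through unchanged by $\semweaksetgrn{S_2}$, while finite non-diverging outputs are fed into $\semweaksetgrn{S_2}$ as usual. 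For $\Eifthenelse{B}{S_1}{S_2}$ and for finite union, the \yellow clauses lift directly, again followed by an outer $\Red \circ \mathit{truncate}$.

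The main obstacle is designing the new $f_{\mathit{while}}$ for the \green case. For $\Ewhile{B}{S}$ on a singleton input $[\sigma]$, I would extend the generate-and-test strategy used in the proof of \Cref{thm:weak_compositional_yel}: enumerate candidate traces $t = [\sigma, \sigma_1, \sigma_2, \ldots]$, where a trace is either a finite sequence whose last state satisfies no guard in $B$ (normal termination), or a truncated sequence ending in $\divg$ because either some iteration of the body diverges or the loop itself is nonterminating. A trace is realizable exactly when there is a common $b \in B$ whose vectorized behavior on the prefix without the last state yields $\Et$ everywhere, and a common $s \in S$ whose vectorized behavior on the trace maps each state to its successor (possibly ending in $\divg$). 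These existential checks can be expressed using $\semweaksetgrn{B}$ and $\semweaksetgrn{S}$ alone because $\Red$ discards only information about suffixes after the first $\divg$, which is never consulted when building such a trace. On input vectors of length $k$, realizable per-coordinate traces must share the \emph{same} $b$ and $s$, which I would enforce by looking up a single vector in $\semweaksetgrn{B}$ and $\semweaksetgrn{S}$ over the concatenation of the per-coordinate sub-traces; the output is assembled coordinatewise from the last entry of each sub-trace ($\divg$ for diverging coordinates) and fed through $\Red \circ \mathit{truncate}$.

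Correctness is then proved by structural induction, with only the while case requiring care. Soundness reduces to checking that the generate-and-test procedure enumerates exactly the traces produced by least-fixed-point iteration of a single $(b,s)$ pair; completeness reduces to showing that no information destroyed by $\Red$ on $\semweaksetgrn{B}$ or $\semweaksetgrn{S}$ can ever be probed by the loop semantics, which holds because the reduction rules of \Cref{sec:grn_vs} were designed to eliminate precisely the information that cannot affect subsequent execution (namely, what happens after a $\divg$). The overall argument should mirror the \yellow case of \Cref{thm:weak_compositional_yel} closely, with the bookkeeping of divergent suffixes being the only substantive addition.
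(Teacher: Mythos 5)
Your overall strategy matches the paper's: keep the non-while clauses of \Cref{fig:weak_compositional} (modulo reduction and truncation) and handle $\Ewhile{B}{S}$ by a generate-and-test argument over traces, concatenating per-coordinate traces so that a single lookup in $\semweaksetgrn{B}$ and $\semweaksetgrn{S}$ enforces a common $(b,s)$. The genuine gap is in how you detect divergence of the loop itself. You classify candidate traces as either finite converging traces or ``truncated sequences ending in $\divg$,'' and claim realizability of all of them can be checked on such traces. That covers divergence caused by a diverging body, but not divergence caused by infinitely many converging iterations: there the witnessing trace is a genuinely infinite sequence of states on which the same guard stays $\Et$ and the same body maps each state to its successor, and no finite prefix certifies it. Indeed, it can happen that for every $n$ some pair $(b_n,s_n) \in B \times S$ realizes a length-$n$ prefix with the guard still true, while no single $(b,s)$ realizes an infinite trace (cf.\ the finite-range set of programs used in \Cref{app:vec_compare}); so this form of divergence of $\Ewhile{B}{S}$ cannot be decided from finite queries to $\semweaksetgrn{B}$ and $\semweaksetgrn{S}$ alone. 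The paper's proof closes this by exploiting the fact that $\semweaksetgrn{\cdot}$ is defined over infinite \emph{computable} vector-states: it introduces sets of infinite candidate traces on which a fixed guard valuation stays true, queries $\semweaksetgrn{S}$ on those infinite vectors (divergence shows up either as a $\divg$-truncated output or as the full shifted infinite output, possibly hidden behind an occluder), and argues that restricting to computable traces is sound because any realized trace can be computed by running the loop that produces it. None of this machinery appears in your sketch, and it is precisely the substantive difference between the \green and \yellow while cases.

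Two smaller slips: normal termination requires the \emph{chosen} guard $b$ to be false on the final state (i.e., $[\Et,\ldots,\Et,\Ef] \in \semweaksetgrn{B}(\cdot)$), not that the last state ``satisfies no guard in $B$,'' and your stated realizability condition omits the final-$\Ef$ requirement altogether; also, for multi-coordinate inputs the paper must do explicit casework on which coordinate diverges first (and on occluders appearing in the output of $\semweaksetgrn{S}$), rather than placing a $\divg$ in each diverging coordinate and reducing afterwards. These are fixable, but the infinite-trace treatment is the missing idea.
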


The proof of \Cref{thm:weak_inductive} very closely mirrors the proof of \Cref{thm:weak_compositional_yel}, 
with the slight differences that \rone as a shortcut, we can determine the semantics of guards in $B$ directly 
by checking $\semweaksetgrn{B}(x)$ where $x$ is an enumeration of $\State\restriction_{\varfun(B)}$, and 
\rtwo we must handle infinite (and diverging) traces in addition to finite, nondiverging traces.

\begin{restatable}[$\semweaksetgrn{\cdot}$ is Coarsest Compositional for $\semsingsetgrn{\cdot}$]{thm}{grnwkmin} \label{thm:weak_minimality}
    For every compositional semantics 
    $\abstrs{A}$ such that $\semsingsetgrn{\cdot}\preceq \abstrs{A}$,
    we have that $\semsingsetgrn{\cdot} \preceq \semweaksetgrn{\cdot}\preceq \abstrs{A}$.
\end{restatable}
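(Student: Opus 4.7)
The plan is to mirror the structure of the proof of \Cref{thm:weak_minimality_yel}, splitting the argument into two directions: first, the easy inclusion $\semsingsetgrn{\cdot} \preceq \semweaksetgrn{\cdot}$, and second, the hard minimality claim that any compositional $\abstrs{A}$ refining $\semsingsetgrn{\cdot}$ must also refine $\semweaksetgrn{\cdot}$. The \yellow case gives us a template, but we must re-examine every step to ensure the treatment of $\divg$, truncation, and the two reduction rules defining $\semweaksetgrn{\cdot}$ (from \Cref{sec:grn_vs}) is handled correctly.

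For the easy direction, I would observe that $\semweaksetgrn{\progset}$ restricted to singleton vector-states recovers $\semsingsetgrn{\progset}$: evaluating on $[\state]$ produces either $\{[\semsingproggrn{\prog}(\state)]\}$ for some $\prog \in \progset$ or $\{[\divg]\}$ when $\prog$ diverges on $\state$, and these outcomes are exactly the contents of $\semsingsetgrn{\progset}(\{\state\})$ after the trivial projection from length-one vectors to states. So the witness function $f_T$ from \Cref{def:granularity} is just projection, establishing $\semsingsetgrn{\cdot} \preceq \semweaksetgrn{\cdot}$.

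The main work is the minimality direction. Following the template from \Cref{thm:weak_minimality_yel}, I would fix an arbitrary compositional $\abstrs{A}$ with $\semsingsetgrn{\cdot} \preceq \abstrs{A}$, and for each set of statements $S$ show that $\abstrsapp{A}{S}$ determines $\semweaksetgrn{S}$ by constructing, for each candidate input-output pair $(v,u)$, a set of loops $W_{v,u}$ whose body incorporates $S$. As before, $W_{v,u}$ uses a counter-style iteration that, on the $i$-th step, sets the program state to $v[i]$, applies a program from $S$, then checks the result against $u[i]$ and aborts (or enters an infinite loop) otherwise. By compositionality of $\abstrs{A}$, the value $\abstrsapp{A}{W_{v,u}}$ is determined by $\abstrsapp{A}{S}$ (and by the fixed guard/sequencing pieces); by $\semsingsetgrn{\cdot} \preceq \abstrs{A}$, that value in turn determines $\semsingsetgrn{W_{v,u}}$ on a chosen family of input states; and the latter will witness whether $u \in \semweaksetgrn{S}(v)$.

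The hard part will be calibrating the ``abort vs. diverge'' branches in $W_{v,u}$ so that the induced membership test respects exactly the reductions in \Cref{sec:grn_vs}: a candidate $u$ with a trailing $\divg$ must be deemed an output of $S$ on $v$ precisely when it is the \emph{shortest} diverging prefix, and a converging $u$ must be deemed an output precisely when no shorter diverging prefix already accounts for it. Concretely, I plan to encode ``$u[i]$ matched'' by incrementing an index variable and ``$u[i]$ mismatched'' by entering $\Ewhile{\Et}{\Eskip}$, so that $\semsingsetgrn{W_{v,u}}$ on the canonical start state yields a terminating outcome iff some $\prog \in S$ produces $u$ entry-wise on $v$, and yields $\divg$ iff the corresponding prefix-diverging trace in $\semweaksetgrn{S}(v)$ exists. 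I then have to argue that the information recoverable this way matches the post-truncation, post-$\Red$ object $\semweaksetgrn{S}(v)$, and not the overly fine $\semweaksetgrnbad{S}(v)$; this amounts to showing that the two reduction rules correspond exactly to the identifications forced by any loop-based test harness. A parallel construction handles sets of Boolean and integer expressions by embedding them inside assignment statements whose outputs are read via the reserved variables $e_t$ and $b_t$, completing the argument that $\semweaksetgrn{\cdot} \preceq \abstrs{A}$.
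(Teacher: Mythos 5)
Your overall strategy---reducing membership questions about $\semweaksetgrn{S}(v)$ to program-agnostic facts about loop harnesses $W_{v,u}$ built around $S$, exactly as in the \yellow case---is the same as the paper's, and your easy direction and expression-embedding step are fine. However, your concrete harness is miscalibrated in a way that breaks the divergence side of the test. You propose to handle a mismatch at entry $i$ by entering $\Ewhile{\Et}{\Eskip}$, and then to read off ``the prefix-diverging trace $[u_1,\ldots,u_k,\divg]$ exists in $\semweaksetgrn{S}(v)$'' from $\divg \in \semsingsetgrn{W_{v,u}}(\state)$. With that encoding, $\divg$ also appears whenever \emph{any converging} program in $S$ merely produces the wrong value (e.g., $S=\{\Eassign{x}{1},\Eassign{x}{2}\}$ with $u$ matching only $\Eassign{x}{1}$ already makes the harness diverge), so divergence of the harness conflates ``some program mismatches'' with ``some program matches a prefix of $u$ and then diverges,'' and the diverging entries of $\semweaksetgrn{S}(v)$ cannot be recovered. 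The paper avoids this by making a mismatch \emph{terminate} the harness (resetting the counter so the guard fails): then a terminating output with counter $n{+}1$ witnesses $u \in \semweaksetgrn{S}(v)$, while $\divg$ in the harness output can only come from the body itself diverging after matching a prefix, and repeating the test over prefixes of $v$ pins down the shortest diverging prefixes, which is exactly what survives truncation and $\Red$.

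The second gap is that you never treat infinite vector-states, which are part of $\vsvars$ and are essential to the \green semantics. The paper's proof needs a separate construction there: for computable infinite $v$ and a computable infinite candidate $u$ with no divergent prefix already detected, it builds a set of loops that checks every entry in turn and never stops, so that $\divg$ in the program-agnostic semantics of that harness holds \emph{iff} some single $s \in S$ realizes the entire infinite trace; this is also precisely where the restriction of $\vinf$ to computable sequences is used (non-computable candidates are rejected outright). Your proposal is silent on both the infinite case and on computability, and it leaves the calibration against the two reduction rules of $\Red$ as an acknowledged open step rather than an argument, so as written the minimality direction does not go through.
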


The proof of \Cref{thm:weak_minimality}
mimics the proof of \Cref{thm:weak_minimality_yel} with the extra challenge of dealing with divergence.
We refer the reader to 
\Cref{app:vec_grn}
for the full proofs of 
Theorems~\ref{thm:weak_inductive} and \ref{thm:weak_minimality}.

Furthermore, we observe that $\semweaksetgrn{\cdot}$ is strictly finer than
$\semweaksetyel{\cdot}$. 
This relation is depicted in the lattice in \Cref{fig:granularity-lattice}, 
and proven in 
\Cref{app:vec_compare}.

\section{Compositional Hoare-style Rules} \label{sec:proof_systems}

We have argued that when a semantics is compositional, 
it is easier to design formal systems for reasoning about sets of programs.
To validate this claim, we return to the idea of designing a Hoare-style proof system for inductively defined sets of programs. 
First, we show that in the same way there does not exist
a compositional characterization for program-agnostic semantics of sets of programs, 
there is no sound and relatively complete compositional Hoare-style While-rule for 
{automatically} proving properties of sets of programs with program-agnostic triples (\Cref{thm:no-compo-prog-ag-while}).
%
Next, we show that for the two vector-state semantics 
$\semweaksetyel{\cdot}$ and $\semweaksetgrn{\cdot}$,
which both admit compositional characterizations,
there \emph{do} exist sound and relatively complete compositional Hoare-style While-rules (\Cref{sec:vs_have_while_rules}). 
Combined with previous work on Unrealizability Logic~\cite{uls,ulw}---a 
relatively complete, compositional Hoare-style proof system for sets of loop-free programs---these 
rules provide the first
sound, relatively complete, compositional Hoare-style logics for reasoning about sets of programs.

We start by defining what it means for a while-loop rule to be compositional 
for a Hoare-style proof system for sets of programs under a specific semantics.
%
%
Given a semantics $\abstrsnovars{A}$ where the signature 
$A(\tau)$ has type $\tau = X \rightarrow X$ for some $X$, 
an \emph{unrealizability triple} 
$\utripleof{P}{S}{Q}{A}$ denotes that $\abstrsapp{A}{S}(P) \subseteq Q$ holds.
When writing proofs by induction (i.e., for inductively defined sets of programs), 
one uses a set $\Gamma$ to hold their inductive hypotheses.
When $\Gamma$ is a set of such triples, we 
use $\Gamma \vdash \utripleof{P}{S}{Q}{A}$ 
to mean that $\utripleof{P}{S}{Q}{A}$ holds, assuming that
all triples in $\Gamma$ hold.

\begin{definition} [Compositional While-Rule] \label{Def:compositional_while_rule}
    Given a semantics $\abstrsnovars{A}$, a While-rule is compositional if it has the following form, where
    $\varphi(v_1, \cdots, v_n)$ is
    a formula over $v_1, \cdots, v_n$:
    \begin{prooftree}
        \AxiomC{$\Gamma \vdash \utripleof{P_B}{B}{Q_B}{A}$}
        \AxiomC{$\Gamma \vdash \utripleof{P_S}{S}{Q_S}{A}$}
        \BinaryInfC{$\Gamma \vdash
        \utripleof{P}{(\Ewhile{B}{S})}{\varphi(\varfun(B),\varfun(S),P,P_B,P_S,Q_B,Q_S)}{A}$}
    \end{prooftree}
\end{definition}
\Cref{Def:compositional_while_rule} disallows one from referring directly to $B$ and $S$ 
in the postcondition; only information that can be captured by a compositional semantics (e.g., 
the input and output sets $P_B$ and $Q_B$ for $B$) can be referenced.
This restriction captures compositionality for inference rules.

Hoare-style logics often also contain ``structural'' rules that do not correspond to a single constructor (e.g., the weakening rule in standard Hoare logic~\cite{hoare}).
An important structural rule
we will consider in this paper is the $\grmdisj$ rule 
from Unrealizability Logic~\cite{uls, ulw}, 
which allows one to partition a set of programs into a finite number of subsets.

\begin{definition} [The Grammar-Disjunction Rule~\cite{uls}] \label{def:grmdisj}
    Given a semantics $\abstrsnovars{A}$, the grammar-disjunction rule $\grmdisj$ is defined as follows:
    \[
      \infer[\grmdisj]
      {\Gamma \vdash \utripleag{P}{(\progset_1 \cup \progset_2)}{Q}}
      {
        \Gamma \vdash \utripleag{P}{\progset_1}{Q} \quad
        \Gamma \vdash \utripleag{P}{\progset_2}{Q}
      }
    \]
\end{definition}

$\grmdisj$ allows a proof system more power than a strictly compositional 
characterization would: a proof system with $\grmdisj$ allows
one to partition sets of loops into smaller sets, while a proof system without 
$\grmdisj$ is denied this ability.
When proving or disproving the existence of a compositional While-rule, we wish to 
account for this additional power.
Thus, we show that 
there exists no While-rule even if $\grmdisj$
(and the standard weakening rule)
are
allowed---which in turn shows that 
Unrealizability Logic over a program-agnostic semantics is fundamentally incomplete.


%
%
\begin{restatable} [No Program-Agnostic While-Rule]{thm}{nocompoprogagwhile}
\label{thm:no-compo-prog-ag-while}
    There is no sound and relatively complete compositional While-rule 
    over triples $\utripleag{P}{\progset}{Q}$, even when finitely many applications of $\grmdisj$ (and the standard weakening rule) are allowed.
\end{restatable}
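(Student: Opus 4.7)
The plan is to derive a contradiction from the assumption that a sound, relatively complete, compositional While-rule exists by extending the non-compositionality argument of \Cref{thm:sing_noninductive} so that it withstands $\grmdisj$-based partitioning. Concretely, I would exhibit an infinite grammar-defined set of loops $\Ewhile{B}{S}$ with a true triple whose proof would require distinguishing two sets of guards that the proof system provably cannot tell apart.

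First, I would observe that \emph{every} inference step in such a system --- the compositional While-rule (by \Cref{Def:compositional_while_rule}), $\grmdisj$ (\Cref{def:grmdisj}), and weakening --- has the property that its conclusion is determined entirely by the logic-based semantics $\semulagset{\cdot}$ of the program sets appearing in the premises (together with the pre/postconditions and the sets $\varfun(\cdot)$). Because $\semulagset{\cdot}$ and $\semsingset{\cdot}$ carry the same information (\Cref{subsec:setlogic}), any finite derivation of a triple about $\Ewhile{B}{S}$ depends only on $\semsingset{\cdot}$ applied to finitely many grammar-definable subsets of $B$ and $S$ that arise from the finitely many $\grmdisj$ splits in the proof.

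Next, I would construct two infinite, grammar-definable guard sets $B$ and $B'$ together with a shared body $S$ such that: (i) for every finite grammar-expressible partition $B = B_1 \sqcup \cdots \sqcup B_k$ that could arise from a $\grmdisj$-based proof about $\Ewhile{B}{S}$, there is a corresponding partition $B' = B'_1 \sqcup \cdots \sqcup B'_k$ with $\semsingset{B_j} = \semsingset{B'_j}$ for every $j$; and (ii) a specific triple $\utripleag{P}{\Ewhile{B}{S}}{Q}$ is true while $\utripleag{P}{\Ewhile{B'}{S}}{Q}$ is false. This generalizes the finite witness $B_1 = \{x{=}1, \neg(x{=}1)\}$ versus $B_2 = \{x{=}2, \neg(x{=}2)\}$ from \Cref{thm:sing_noninductive}: $B$ and $B'$ are built so that each guard belongs to a sufficiently populated semantic equivalence class that no finite grammar refinement can isolate a single guard, yet the loop-level behaviors diverge once the same guard and body are required to iterate together --- a property invisible to $\semsingset{\cdot}$. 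Given such $B$ and $B'$, the contradiction follows by proof translation: a relative-completeness-provided proof $\pi$ of $\utripleag{P}{\Ewhile{B}{S}}{Q}$ is mapped to a proof $\pi'$ of $\utripleag{P}{\Ewhile{B'}{S}}{Q}$ by replacing each occurrence of a subset $B_j$ arising in $\pi$ with its matching partner $B'_j$, leaving $S$ and all pre/postconditions fixed. Every premise remains valid by the observation in the first paragraph, and the final conclusion contradicts soundness since the $B'$-triple is false by (ii).

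The main obstacle is constructing $B$ and $B'$ satisfying (i) and (ii) simultaneously. Achieving (i) requires that every finite grammar-refinement of $B$ have an isomorphic refinement of $B'$ with piecewise-equal program-agnostic semantics, despite (ii) demanding a loop-level distinction. My plan is to take a parameterized construction in which each guard $b_n \in B$ has a ``semantic twin'' $b'_n \in B'$, and the grammars for $B$ and $B'$ are arranged so that any sub-grammar of one induces a matching sub-grammar of the other with equal fiberwise $\semsingset{\cdot}$; the loop-level divergence is then engineered by making the guard's behavior on iterates of $S$ (e.g., successor states) depend on information that is not observable through $\semsingset{\cdot}$ applied to any single input. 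Verifying property (i) uniformly across all possible $\grmdisj$-schedules, and proving the existence of a distinguishing triple in (ii), is the delicate combinatorial heart of the argument.
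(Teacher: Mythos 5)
Your overall strategy---swap semantically indistinguishable guard sets inside a hypothetical proof and contradict soundness---is in the same spirit as the paper's argument, but your plan hinges on a construction you explicitly leave open, and that construction is precisely the heart of the theorem. Worse, in the strong, universally quantified form you state it, property (i) is in real tension with property (ii). For (ii) you need $B'$ to realize a guard \emph{function} $g^*$ realized by no member of $B$ (otherwise $\semfullset{\Ewhile{B'}{S}} \subseteq \semfullset{\Ewhile{B}{S}}$ and every true containment triple about the $B$-loops transfers to the $B'$-loops). But then, in any matching partition, the block $B'_j$ hosting $g^*$ must satisfy $\semsingset{B'_j}=\semsingset{B_j}$, which forces $B_j$ to cover $g^*$ pointwise on every state. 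Since $\grmdisj$ lets the prover isolate finitely many guards into their own (even singleton) blocks, and the partition is chosen adversarially per proof, you must guarantee that \emph{every} finite grammar-expressible partition of $B$ leaves some block that pointwise covers $g^*$. For threshold-style guards of the kind used in \Cref{thm:sing_noninductive} this already fails: put the small-threshold guards into singleton blocks and let the one remaining infinite block have a large minimum threshold; then no block pointwise covers an intermediate-threshold $g^*$, so no matching partition of $B'$ exists. So the ``semantic twin'' pair is not a routine refinement of \Cref{thm:sing_noninductive}; it may not exist at all, and nothing in your proposal certifies it.

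The paper sidesteps the need for any such uniform pair. It fixes one concrete true triple, $\utripleag{\state[x_1]=0}{\Ewhile{(x_1<N)}{\Eassign{x_1}{x_1+1}}}{\state[x_1]\in 2\N}$ with $N$ the even constants, and first proves a lemma by a padding argument: if no While application in the proof were applied to an infinite, inductively definable subset $\widetilde{W}$ whose conclusion becomes false after completing the bound set $\widetilde{N}$ to all integers at or above $\min E_{\widetilde{N}}$, then one could enlarge the guard bounds beyond every finite piece appearing in the proof and replay the proof verbatim to derive a false triple, contradicting soundness. The swap is then performed locally and \emph{post hoc}, at that single While application: $\semsingset{x_1<\widetilde{N}} = \semsingset{x_1<\widetilde{N}'}$, so every premise about $\widetilde{B}$ is also true of $\widetilde{B}'$, and relative completeness plus the compositional rule would let one derive the (false) conclusion about $\widetilde{W}'$. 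Choosing the swapped set after seeing the proof and its $\grmdisj$ schedule is exactly what removes the need for your property (i). To salvage your outline you would need either to prove an analogue of that lemma (some nontrivial While application on an infinite piece whose ``completed'' conclusion is false) or to actually exhibit the uniform $(B,B')$ pair; as it stands, the proposal has a genuine gap at its decisive step.
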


The proof,
given in \Cref{app:ul_proofs},
is similar to the proof of \Cref{thm:sing_noninductive}, except that it considers 
an infinite family of sets of guards to account for  the $\grmdisj$ rule.



\subsection{Hoare While-Rules for Vector States} \label{sec:vs_have_while_rules}

In contrast with program-agnostic semantics, the compositionality of vector-state semantics makes 
building compositional While-rules easy---we just encode the compositional characterizations.


We begin with a While-rule for the \yellow
vector-state semantics $\semweaksetyel{\cdot}$.

\begin{restatable}
     [\yellowCaps Vector-State While-Rule]{thm}{vsyelwhilerule}
     There is a sound and relatively complete compositional While-rule for Unrealizability Logic for the semantics $\semweaksetyel{\cdot}$.
\end{restatable}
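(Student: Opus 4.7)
The plan is to construct the While-rule directly from the compositional characterization of $\semweaksetyel{\cdot}$ established in \Cref{thm:weak_compositional_yel}. That characterization (\Cref{fig:weak_compositional}) defines $\semweaksetyel{\Ewhile{B}{S}}(V)$ via a ``generate and test'' function $f^{\setweak}_{while}$ that enumerates candidate loop traces $[\sigma_1, \sigma_2, \ldots, \sigma_n]$ starting from each vector-state in $V$, and checks whether there exist witnesses in $\semweaksetyel{B}$ and $\semweaksetyel{S}$ certifying that the trace is realizable by some single $(b,s) \in B \times S$. I would lift this into a Hoare-style rule whose premises $\utriplevs{P_B}{B}{Q_B}$ and $\utriplevs{P_S}{S}{Q_S}$ provide an over-approximation of the behavior of $B$ and $S$ on the inputs the loop needs, and whose postcondition $\varphi$ is a syntactic formulation of $f^{\setweak}_{while}$ that references only $P, P_B, Q_B, P_S, Q_S, \varfun(B),$ and $\varfun(S)$---not $B$ or $S$ themselves---so that the rule satisfies \Cref{Def:compositional_while_rule}.

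Concretely, I would define $\varphi$ as the set of vector-states $u$ obtained by decomposing each input vector $v \in P$ into contiguous subvectors representing individual loop traces, together with witnesses drawn from $(P_B, Q_B)$ and $(P_S, Q_S)$, such that the concatenation of the final iterates of each trace yields $u$. Soundness would follow by a direct induction: if $\semweaksetyel{B}(P_B) \subseteq Q_B$ and $\semweaksetyel{S}(P_S) \subseteq Q_S$, then any trace realized by a specific pair $(b,s)$ on an input in $P$ whose intermediate inputs lie in $P_B$ and $P_S$ has its guard and body witnesses landing in $Q_B$ and $Q_S$, so its final iterate is captured by $\varphi$.

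For relative completeness, given a valid triple $\utriplevs{P}{(\Ewhile{B}{S})}{Q}$, I would instantiate $P_B$ and $P_S$ as the sets of all vector-states that can arise as inputs to $B$ and $S$, respectively, during executions of the loop on inputs in $P$, and then take $Q_B = \semweaksetyel{B}(P_B)$ and $Q_S = \semweaksetyel{S}(P_S)$---the strongest postconditions. With these choices, applying $f^{\setweak}_{while}$ reproduces exactly $\semweaksetyel{\Ewhile{B}{S}}(P)$, so the derived postcondition is contained in $Q$ and the proof closes via the standard weakening rule. This step leverages the constructive proof of \Cref{thm:weak_compositional_yel} and is essentially a Cook-style reduction of completeness of the While-rule to completeness of the ambient assertion logic.

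The main obstacle I anticipate is defining $\varphi$ precisely enough to capture all realizable traces without implicitly appealing to information about $B$ or $S$ beyond what $(P_B, Q_B, P_S, Q_S)$ prescribe. In particular, the enumeration inside $f^{\setweak}_{while}$ requires that a guessed trace vector $[\sigma_1, \ldots, \sigma_{n-1}]$ appear as a subvector of some element of $P_B$ (paired with a matching all-true output in $Q_B$), and similarly that the body's action on $[\sigma_1, \ldots, \sigma_{n-1}]$ match a subvector of some pair in $P_S \times Q_S$ whose outputs equal $[\sigma_2, \ldots, \sigma_n]$. Expressing this ``subvector-consistency'' condition as a formula of the underlying assertion language---so that relative completeness is genuine rather than a sleight of hand---requires the usual expressiveness assumption on the assertion language, analogous to the one used to prove relative completeness of classical Hoare logic.
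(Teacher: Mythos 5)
Your core idea is the paper's: the rule is obtained by transcribing the constructive function $f^{\setweak}_{while}$ from the proof of \Cref{thm:weak_compositional_yel} into the postcondition, with soundness and relative completeness inherited from that theorem. Where you diverge is in the premise instantiation used for completeness, and this is where your argument has a real gap. You propose premises $\utriplevs{P_B}{B}{Q_B}$ and $\utriplevs{P_S}{S}{Q_S}$ where $P_B,P_S$ are only the vector-states ``that can arise as inputs to $B$ and $S$ during executions of the loop on inputs in $P$,'' with $Q_B,Q_S$ the strongest postconditions. But a compositional While-rule (\Cref{Def:compositional_while_rule}) fixes $\varphi$ as a function of $(P,P_B,Q_B,P_S,Q_S,\varfun(B),\varfun(S))$ alone, so soundness requires $\varphi$ to contain $\semweaksetyel{\Ewhile{B'}{S'}}(P)$ for \emph{every} pair $(B',S')$ satisfying those premises, not just the actual $(B,S)$. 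Your soundness sketch only covers traces whose intermediate states are covered by $P_B,P_S$; executions of a consistent $(B',S')$ may visit vectors on which the premises say nothing, and then the only sound choice for $\varphi$ is to admit those behaviors, which threatens the containment $\varphi \subseteq Q$ needed for completeness when $Q$ is tight. One can try to argue that, with a sufficiently careful ``reachable trace-vector'' closure for $P_B,P_S$ (including concatenations across all entries of each input vector, so that realizability is synchronized by a single guard and a single body), every consistent $(B',S')$ is forced to track some true execution---but that argument is delicate and is exactly what your proposal leaves unproved.

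The paper avoids this entirely by taking the \emph{most general} premises, $\Gamma \vdash \utriplevs{x=z}{B}{Q_B}$ and $\Gamma \vdash \utriplevs{x=z}{S}{Q_S}$, where $z$ is a symbolic vector-state ranging over all inputs; then $Q_B$ and $Q_S$ determine $\semweaksetyel{B}$ and $\semweaksetyel{S}$ exactly, the postcondition literally encodes the trace decomposition of $f^{\setweak}_{while}$ (existentially quantified traces $t_1,\dots,t_n$ whose starts satisfy $P$, whose guard pattern is all-true-then-false per $Q_B$, whose body steps are per $Q_S$, and whose final states form the output vector), and both soundness and relative completeness are immediate from \Cref{thm:weak_compositional_yel}. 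If you adopt that instantiation, your construction goes through; your closing remark about needing an expressive assertion language to state the trace conditions is correct and matches the usual Cook-style caveat, which the paper's rule also implicitly assumes.
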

\begin{proofsketch}
    We can adapt $f^\setweak_{\textit{while}}$---the function that determines $\semweaksetyel{\Ewhile{B}{S}}$ from $\semweaksetyel{B}$ and $\semweaksetyel{S}$ in the proof of \Cref{thm:weak_compositional_yel}---to construct an inference rule of the following form:
    \[
    \infer[\mathsf{While_{\yellow}}]{\Gamma \vdash \utriplevs{P}{(\Ewhile{B}{S})}{Q}}{\Gamma \vdash \utriplevs{x=z}{B}{Q_B} \quad \Gamma \vdash \utriplevs{x=z}{S}{Q_S}}
    \]
    The full proof 
    is available in \Cref{app:ul_proofs}.
\end{proofsketch}

In this way, compositionality is a useful property in defining sound and complete compositional systems of reasoning. 
A similar rule works for 
$\semweaksetgrn{\cdot}$, as stated in \Cref{thm:green-vector-rule}.

\begin{restatable}[\greenCaps Vector-State While-Rule]{thm}{vsgrnwhilerule}
\label{thm:green-vector-rule}
    There is a sound and relatively complete compositional While-rule for Unrealizability Logic for the semantics $\semweaksetgrn{\cdot}$.
\end{restatable}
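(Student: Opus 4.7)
The plan is to follow the same blueprint as the yellow vector-state While-rule, but to replace $f^{\setweak}_{\mathit{while}}$ with the corresponding green characterization $f^{\setweakgrn}_{\mathit{while}}$ delivered by the compositionality proof of \Cref{thm:weak_inductive}. Concretely, I would introduce the rule
\[
\infer[\mathsf{While_{\green}}]{\Gamma \vdash \utripleof{P}{(\Ewhile{B}{S})}{Q}{\setweakgrn}}{\Gamma \vdash \utripleof{x=z}{B}{Q_B}{\setweakgrn} \qquad \Gamma \vdash \utripleof{x=z}{S}{Q_S}{\setweakgrn}}
\]
where $z$ is a vector of fresh auxiliary variables, so that the premise triples record the full vector-state input/output behaviour of $B$ and $S$, and $Q$ is expressed as the result of applying $f^{\setweakgrn}_{\mathit{while}}$ to $Q_B$, $Q_S$ and $P$, followed by the truncation and $\Red$ operations from \Cref{sec:grn_vs}. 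The rule respects \Cref{Def:compositional_while_rule} because its postcondition mentions only $\varfun(B)$, $\varfun(S)$, $P$, $Q_B$ and $Q_S$, never $B$ or $S$ directly.

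Soundness would follow directly from \Cref{thm:weak_inductive}: whenever the premises hold, $Q_B$ and $Q_S$ over-approximate $\semweaksetgrn{B}$ and $\semweaksetgrn{S}$ on every vector-state input, so the compositional characterization guarantees that $f^{\setweakgrn}_{\mathit{while}}(Q_B, Q_S, P)$ over-approximates $\semweaksetgrn{\Ewhile{B}{S}}(P)$, and weakening to $Q$ closes the argument. Relative completeness would be shown in the standard strongest-postcondition style: given a valid $\utripleof{P}{(\Ewhile{B}{S})}{Q}{\setweakgrn}$, I choose $Q_B = \semweaksetgrn{B}(x=z)$ and $Q_S = \semweaksetgrn{S}(x=z)$, which are expressible under the usual expressivity assumption on the assertion language, and then apply $\mathsf{While_{\green}}$ followed by weakening; the exactness of the compositional characterization ensures the derived postcondition equals $\semweaksetgrn{\Ewhile{B}{S}}(P)$, which by hypothesis is contained in $Q$.

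The main obstacle I expect is the divergence bookkeeping inside the postcondition. Unlike the yellow case, where it suffices to collect final states of finite, non-diverging loop traces, the green case must express \rone truncation of every output trace at the first $\divg$, \rtwo the removal of longer diverging vectors that are subsumed by a shorter diverging prefix, and \rthree the removal of infinite non-diverging extensions of already-diverging finite prefixes. Verifying that these operations can be combined into a single formula over $Q_B$, $Q_S$, $P$ and $\varfun$, and that this formula stays within the syntactic restrictions of \Cref{Def:compositional_while_rule}, is the technical heart of the argument. Once this encoding is in hand, soundness and relative completeness reduce to invocations of \Cref{thm:weak_inductive} and the standard weakening rule, exactly as in the yellow proof.
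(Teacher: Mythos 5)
Your overall strategy---encode the compositional characterization $f^{\setweakgrn}_{\mathit{while}}$ from \Cref{thm:weak_inductive} into the postcondition---is the same as the paper's, but the shape of your premises hides a genuine gap. You copy the yellow premises $\utripleof{x=z}{B}{Q_B}{\setweakgrn}$ and $\utripleof{x=z}{S}{Q_S}{\setweakgrn}$ with $z$ ``a vector of fresh auxiliary variables,'' i.e., your premises record the behaviour of $B$ and $S$ only on \emph{finite} vector-states. In the divergence-aware setting this is not enough information: $f^{\setweakgrn}_{\mathit{while}}$ must query $\semweaksetgrn{S}$ (and, via the enumeration shortcut, $\semweaksetgrn{B}$) on \emph{infinite computable} vector-states, because divergence of a loop caused by infinitely many iterations of a never-diverging body is visible only there and is not determined by finite-vector behaviour. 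The paper's own example makes this concrete: the nondiverging sets $S_1$ and $S_2 = S_1 \cup \{\Eassign{x_1}{x_2}\}$ from \Cref{app:vec_compare} agree on every finite vector-state, yet loops built from them (as in the proof of \Cref{thm:weak_minimality}) have different $\semsingsetgrn{\cdot}$, hence different $\semweaksetgrn{\cdot}$: one set of loops can diverge, the other cannot. With your premises the two rule instances are indistinguishable, so no postcondition $\varphi(\varfun(B),\varfun(S),P,P_B,P_S,Q_B,Q_S)$ can be sound for one and still complete for the other. This is exactly where the paper's $\mathsf{While_{\green}}$ departs from the yellow rule: it introduces an index $i \in \N$ with premises $\utriplevs{P_S(i)}{S}{Q_S(i)}$ whose preconditions pin down the $i$-th computable infinite vector-state (simulating countably many queries to $\semweaksetgrn{S}$), and it fixes $P_B$ to a single infinite enumeration of the states over the relevant variables. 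Any correct proof must supply premises of this stronger form; your plan never addresses how behaviour on infinite inputs enters the rule.

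A secondary weakness is your soundness argument. You claim that if $Q_B$ and $Q_S$ over-approximate $\semweaksetgrn{B}$ and $\semweaksetgrn{S}$, then $f^{\setweakgrn}_{\mathit{while}}(Q_B,Q_S,P)$ over-approximates $\semweaksetgrn{\Ewhile{B}{S}}(P)$. In the yellow case this monotonicity is immediate, but in the green case $\Red$ is not monotone: slack in the premises can introduce spurious \emph{shorter} diverging vectors that occlude genuine longer outputs, so the reduced postcondition need not contain the true semantics. The paper's rule copes with this by forcing the premise preconditions to characterize exact input vectors (the biconditional clauses in its postcondition $Q$); you would need either the same exactness device or a separate argument that your encoding stays sound under weakened premises.
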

\begin{proofsketch}
The full rule is given in 
\Cref{app:ul_proofs},
    which merely applies $f^\setweakgrn_{\textit{while}}$ from the proof of \Cref{thm:weak_inductive}. The rule differs from the rule $\mathsf{While_{\yellow}}$ in that it
    introduces a variable $i \in \N$ to simulate making countably many queries to $\semweaksetgrn{S}$:
    $$\infer[\mathsf{While_{\green}}]{\Gamma \vdash 
    \utriplevs{P}{\Ewhile{B}{S}}{Q}
    }{\Gamma \vdash \utriplevs{P_B}{B}{Q_B(x)} \quad \Gamma \vdash \utriplevs{P_S(i)}{S}{Q_S(i)}}$$
\end{proofsketch}

Both rules can be augmented with rules like those in \cite{uls} or \cite{ulw} 
to yield the first sound and relatively complete proof 
systems for inductively defined sets of programs in $\gimp$. 
We observe that, 
although complex predicates are required in the 
$\mathsf{While}$ rules to achieve 
completeness,
in practice one can often
find a proof
using simpler predicates 
as invariants.
A similar phenomenon can be found in standard Hoare logic, 
which requires complex invariants to achieve completeness in theory,
but where in practice one can often 
find a proof
using simpler invariants.

\section{Related Work} \label{Se:related}

\paragraph{Program Synthesis and Unrealizability}
The inputs to the semantics discussed in this paper align 
well with
program-synthesis problems, which often 
assume a syntactic search space constrained by 
a regular tree grammar 
or some other recursively defined language
(\cite{sygus, semgus, rosette, conflict},
to single out just a few).
As discussed in \Cref{sec:weak-vector-sem}, 
the results on single-input 
properties discussed 
in this paper generalize almost immediately to 
multiple-input properties, which in turn capture 
PBE specifications used in many 
synthesizers~\cite{sketch, semgus, flashfill, flashmeta}.

In particular, our work provides a formal result
about
the difficulty of proving
\emph{unrealizability}
of synthesis problems (i.e., the non-existence of a solution).
Two recent publications \cite{uls,ulw} introduced 
\textit{Unrealizability Logics}, 
Hoare-style logics for reasoning about inductively defined sets of programs. 
Both logics are sound, but not relatively complete when loops are allowed (see \Cref{note1}).
The main semantics for sets of programs used in~\citet{ulw} is coarser than 
$\semweakset{\cdot}$; 
\Cref{thm:weak_minimality_yel} proves that a compositional rule for while loops 
under such semantics does not exist. 
In~\citet{uls} (and in the appendices of~\citet{ulw}), 
a vector-state semantics finer than $\semweaksetyel{\cdot}$ is used. 
That semantics is 
compositional, 
but the inference rules are written with the assumption that a program-agnostic 
rule like the one disallowed by \Cref{thm:no-compo-prog-ag-while} exists. 
Vector-states are only added to 
discuss $k$-input extensional properties; 
their role in enabling compositionality is not 
exploited. 

\citet{ulw} designed a proof synthesizer for an incomplete Unrealizability Logic for infinite vector-states.
Their implementation suggests that reasoning over infinite
vector-states is tractable, at least in simple cases,
giving hope 
for tractable verification techniques to be derived from $\semweaksetyel{\cdot}$.

\paragraph{Outcome Logic and Hyper Hoare Logic}
\label{sec:hhl}\label{sec:ol}
An Outcome Logic is a program logic for uninterpreted nondeterministic programs expressed in a Kleene algebra. In an Outcome Logic, program semantics are monadic, and the monad is equipped with the operations of a partial commutative monoid to split monadic objects.
\citet{outcomelogic} give an instantiation of Outcome Logic for the set monad equipped with a set-union monoid operation.
Hyper Hoare Logic is a program logic for imperative nondeterministic programs in which pre- and postconditions are second-order predicates over sets of states, rather than first-order predicates over states~\cite{hyperhoarelogic}. 

The logic from \Cref{sec:proof_systems} is not subsumed by existing logics, although Outcome
Logic
and Hyper Hoare Logic share some similarities.
Because both logics reason about sets of states explicitly, and a vector-state is simply an indexed set of states, triples about a single program that are expressible in our \yellow vector-state logic are expressible in both Outcome and Hyper Hoare Logic. Triples about a single program in \green logic cannot be expressed in Hyper Hoare Logic because there is no treatment of divergence there; they may be expressible in an Outcome Logic if one can define an $\vsvars$ monad, where $\vsvars$ is the set of vector-states over which 
$\semweaksetgrn{\cdot}$ operates.
Note that neither logic expresses properties of sets of programs. 

Because Outcome Logic is expressed over an uninterpreted Kleene algebra, 
one might wonder whether we can simply interpret elements of the Kleene algebra 
as sets of programs to obtain a complete Outcome Logic for sets of programs \
revision {(i.e., can we represent a set of programs as a single nondeterministic program 
and use an Outcome Logic to reason about its behavior on vector-states?)}. 
Unfortunately, this reduction is not possible in general:
the traces of infinite sets of while loops, such as the one in our proof of 
\Cref{thm:no-compo-prog-ag-while}, do not always form regular languages and thus are not expressible by
Kleene algebras 
(e.g., by the Myhill-Nerode Theorem~\cite{myhillnerodetheorem}).

\paragraph{Incompleteness of Hoare-Style logics.}
Clarke~\cite{clarkel4} studies programming languages for which one cannot obtain 
a relatively complete ``Hoare-style logic'', 
which Clarke 
(and 
others, most notably~\citet{lipton})
interprets 
as a \emph{syntax-directed proof system} involving axiomatic rules that 
capture the semantics of a program~\cite{hoarecharacterization, 50years}.
Clarke proves that such logics cannot exist for certain 
combinations of language features 
(e.g., coroutines with parameterless recursive procedures).

Such ``Hoare-style logics'' 
are not necessarily compositional as defined in this paper.
In particular, such ``Hoare-style logics'' may contain as 
part of their proof system a separate procedure that reduces deriving a triple 
into some other kind of decidable problem (e.g., enumeration in the case of finite interpretations), 
instead of constructing a proof tree.
Clarke observes that these ``Hoare-style logics''
do not necessarily lead to 
practical proof systems~\cite[p11]{hoarecharacterization}.

Our work defines compositionality as an additional desideratum to support 
practical reasoning; 
one might view compositionality 
as a formal criterion of 
``natural Hoare-style logics''
mentioned by Clarke~\cite{hoarecharacterization}.
We are unaware of any non-compositional logics that see major use.

\paragraph{Variational Analysis.}
Variational analysis (VA) encompasses a broad set of techniques 
for tracking the behavior of finite sets of programs,
where each program is the result of some combination of `tags.'
A combination of tags encodes
the usage of different features, permissions, etc. 
For example, in a set of differently configured email clients, 
the tags $(\mathtt{encryption} = \mathtt{RSA}, \mathtt{spam}=\Et)$ would denote
the unique configuration that uses RSA encryption and a spam filter. 
The main aim of VAs
is to construct a map from tag combinations (i.e., specific programs)
to program properties or behaviors.

For example, the semantics of the choice calculus, 
used in the analysis of variable software, 
builds a map from
\rone 
combinations of finitely many tags describing choices about program syntax 
to 
\rtwo
program behavior~\cite{choicecalculus}.
Model checking-based VA approaches, useful in product-line analysis, 
build more compact maps from
\rone
propositional formulas
that represent
finite sets of configurations to
\rtwo
program behaviors~\cite{plccs}.
Faceted execution, used in dynamic enforcement of privacy policies, 
builds maps from
\rone
users to
\rtwo
program states
to 
model
different users executing the same code~\cite{facetedexecution}. 

We emphasize that VA is a family of specific 
techniques including the above, and does not provide a framework for semantic analysis or derivation of our key results 
(e.g., \Cref{thm:weak_minimality_yel}). 
One might wonder
whether a VA technique (extended to operate over infinite sets) could, 
at least in theory,
efficiently verify extensional properties for sets of programs.
Our theory suggests that the answer is no, for two reasons. 
First, 
a complete, compositional VA technique capable of 
verifying extensional properties will induce a semantics strictly 
finer than $\semweaksetyel{\cdot}$ due to the need to track 
tag combinations, which correspond to 
syntactic information.\footnote{To be precise, to construct
a semantics for VA techniques, one must extend $\gimp$ to include tags. 
This extension is necessary because VA techniques construct 
different maps for differently tagged sets of programs.}
Second, 
expressing maps from tag combinations to semantic behaviors 
requires non-Cartesian predicates that explicitly relate program syntax and semantics; these predicates are complex and best avoided for infinite sets of programs
(see \Cref{subsec:interpreter}).

Finally, we remark that \citet{va_semantic_analysis} define
a notion of
denotational semantics for variability languages
to compare
the expressive power of different VA formalisms. Both our work and theirs use denotational semantics to describe the expressivity of verification objectives, albeit for very different purposes. We believe that this strategy 
has potential to advance the analysis and design of verification techniques, as both works evidence.

\paragraph{
Compilers and Self-Modifying Code.
}
A compiler is
the canonical example of a tool that creates a set of programs.
Generally, ``compiler correctness'
is concerned with whether each (single) source program is 
translated correctly to an appropriate target program.
The scheme for establishing 
compiler correctness
is due to Burstall 
\cite{DBLP:journals/cj/Burstall69}
and followed in most of the 14 passes in 
the CompCert compiler, 
which translated CLight to PowerPC assembly language
\cite{DBLP:journals/cacm/Leroy09}.

A compiler-correctness problem closer to the questions addressed 
in this paper would be, e.g., 
to prove that a compiler can only emit target programs 
that satisfy a memory-safety property 
(i.e., the \emph{set} of programs emittable
by the compiler's translation function are memory-safe).
Myreen \cite{myreen2010verified} addressed a version of this problem: 
he developed a machine-code Hoare logic to support 
reasoning about self-modifying x86 machine code, 
which was applied to the verification of JIT compilers.


Other work on self-modifying code includes Gerth \cite{YCS:Gerth91} and Cai et al.\ \cite{PLDI:CSV07}.
The latter is based on an extension of Hoare logic, but to support self-modification, it treats program instructions as any other mutable data structure---an issue that we do not face in our work.
Gerth's earlier work also adopts a data-centric view of the world: there is no code, only data.
Gerth's proof system was based on Manna and Pnueli \cite{DBLP:conf/popl/MannaP83} 
and proves properties expressed in linear temporal logic.

While these problems are beyond the scope and capabilities of our work, the semantics and Hoare-style rules studied in our paper provide a new way of reasoning about sets of programs, including languages (domain-specific or produced by compilers) and self-modifying code.

\section{Conclusion} \label{Se:conclusion}

This paper set out to understand why existing compositional program logics for sets of programs~\cite{uls, ulw}
fail to be relatively complete, and whether such logics can be modified to achieve completeness.
By formalizing verification relative to a denotational semantics for sets of programs, we were able to prove that no complete compositional proof system exists for the kind of logical triples proposed in prior work.
In light of this result,
we define a more expressive kind of triple
and give \textit{the first (two) sound and relatively complete Hoare-style logics} for inductively defined sets of programs (even for programs that contain loops), solving an open problem in the literature~\cite{uls, ulw}.

\begin{acks}
This work was supported, in part, by
a \grantsponsor{00002}{Microsoft Faculty Fellowship}{},
a gift from \grantsponsor{00001}{Rajiv and Ritu Batra}{},
\grantsponsor{00003}{NSF}{https://www.nsf.gov/}
under grants
\grantnum{00003}{CCF-1918211},
\grantnum{00003}{CCF-2023222},
\grantnum{00003}{CCF-2211968},
\grantnum{00003}{CCF-2212558},
\grantnum{00003}{CCF-2402833},
\grantnum{00003}{CCF-2422214}, 
and \grantnum{00003}{CCF-2446711}, 
and a grant from the Korea Foundation of Advanced Studies.
This material is based upon work supported by the National Science Foundation Graduate 
Research Fellowship Program under Grant No. \grantnum{00003}{DGE-2038238}.
Any opinions, findings, and conclusions or recommendations
expressed in this publication are those of the authors,
and do not necessarily reflect the views of the sponsoring
entities.
\end{acks}

\section*{Data-Availability Statement}
Proofs of the paper's theorems are available in the 
full version of this paper~\cite{semarxiv}. 
Our paper presents only theoretical results, so there is no accompanying artifact.

\newpage
\bibliography{reference}

\clearpage

\appendix
\section{Proofs}\label{app:proofs}
In this appendix, we include the proofs of theorems that were stated but not included in the main text.

\subsection{Compositionality of \yellowCaps Weak Vector-State Semantics}\label{app:weak_compositionality_yel}

\yelwkcomp*

\begin{proof}
    Figure~\ref{fig:weak_compositional} gives a compositional characterization of $\semweaksetyel{\cdot}$ over unions and all constructors except while. The only cases worth discussing are if-then-else and while. 

    Note that $\semweakset{\progset}$ is determined by its behavior on singleton sets of vector-states, so we restrict our focus to inputs of the form $\{v\} \subset \vsvars$ when outlining the next two rules. The versions over arbitrary sets of vector-states $V$ are given in \Cref{fig:weak_compositional}.

    \paragraph{If-Then-Else} 
        
        Consider sets of statements $S_1$ and $S_2$ and a set of guards $B$. We would like $\semweaksetyel{\Eifthenelse{B}{S_1}{S_2}}(v)$ to, for each guard $b$, evaluate $\semweaksetyel{S_1}$ on a vector of the entries of $v$ where $b$ is true, and evaluate $\semweaksetyel{S_2}$ on a vector of the entries of $v$ where $b$ is false. Then, we'd like to stitch these results back together. 
        For example, $\semweaksetyel{\Eifthenelse{x_1 \neq 1}{S_1}{S_2}}([\statelist{\stateentry{x_1}{1}},\statelist{\stateentry{x_1}{4}}])$ contains $[\state_a, \state_b]$ iff
        $[\state_a] \in \semweaksetyel{S_2}([\statelist{\stateentry{x_1}{1}}])$
        and
        $[\state_b]$ $\in \semweaksetyel{S_1}([\statelist{\stateentry{x_1}{4}}])$.

        For a conditional $\Eifthenelse{B}{S1}{S2}$, each vector state
        $v^b \in \semweaksetyel{B}(v)$ splits $v$ into two subarrays:
        one array over the indices on which $v^b$ is true
        ($v^{1} = \filter{v}{v^b}$), and one over the indices on which $v^b$ is false ($v^{2} = \filter{v}{\neg v^b}$).
        Because $v$ and $v^b$ are finite, $v_1$ and $v_2$ are also finite, so we can determine $\semweaksetyel{S1}(v^{1})$ and $\semweaksetyel{S2}(v^{2})$. For each $u^{2} \in \semweaksetyel{S1}(v^{1})$ and $u^{2} \in \semweaksetyel{S2}(v^{2})$, we can assemble a finite vector $u$ by combining $u^{1}$ and $u^{2}$ according to $v^b$ (we say $u = \mathit{interleave}(u^{1}, u^{2}, v^b)$).
        The set of such $u$ is precisely $\semweaksetyel{\Eifthenelse{B}{S1}{S2}}(v)$.

    \paragraph{While}
        We describe 
        $f^{\setweak}_{while}$ as follows:
        
        Let a set of programs $\Ewhile{B}{S}$ and a vector state $v \in \State^*$ be given. Note that when $a$ is a vector, we will use $\subvec{a}{i}{j}$ to denote $[a_i, \cdots, a_j]$. Vectors like $[\subvec{a}{i}{j}, b]$ will be understood to be flattened---$[a_i, \cdots, a_j, b_1, \cdots, b_n]$.

        For each possible semantics of the loop guard, we identify traces through the body of the loop that would cause the loop to converge ($T_{\state, \state'}$) from start state $\state$ to an end state $\state'$, and we ask $\semweaksetyel{S}$ whether any $s \in S$ can produce these traces.
    
        For example, suppose that we want to determine $\semweaksetyel{\Ewhile{x_1 > 3}{S}}([\statelist{\stateentry{x_1}{4}}])$, given a set of statements $S$ where $\varfun(S) = \{x_1\}$. We can consider all finite ``traces'' of executions of a body of this loop. Thus, $[\statelist{\stateentry{x_1}{4}}, \statelist{\stateentry{x_1}{12}}, \statelist{\stateentry{x_1}{1}}]$ would represent a loop body that, on the first iteration, sent $\statelist{\stateentry{x_1}{4}}$ to $\statelist{\stateentry{x_1}{12}}$ and, on the second iteration, sent $\statelist{\stateentry{x_1}{12}}$ to $\statelist{\stateentry{x_1}{1}}$. We can determine $\semweaksetyel{\Ewhile{x_1 > 3}{S}}([\statelist{\stateentry{x_1}{12}}])$ by querying $\semweaksetyel{S}$ on all such traces.
           
        To start, suppose that we want $\semweaksetyel{\Ewhile{B}{S}}([\sigma])$ for $\sigma \in \State$.
        
        For each $\sigma' \in \State$, call $T_{\state, \state'}$ the set of vectors of states (``traces'') $[t_1, \cdots, t_n] \in \State^*$ so that:
        \begin{itemize}
            \item $t_1 = \state$
            \item $t_n = \state'$
        \end{itemize}
        The set $T_{\state, \state'}$ represents potential converging traces of the loop from the start state $\state$ to the end state $\sigma'$. In order for the trace $[t_1, \cdots, t_n] \in T_{\state, \state'}$ to be realizable by some loop in the set, it must be the case that:
        \begin{itemize}
            \item $[\Et, \Et, \cdots, \Et, \Ef] \in \semweaksetyel{B}([t_1, \cdots, t_n])$ -- there is a guard which permits the trace.
            \item $[t_2, \cdots, t_n] \in \semweaksetyel{S}([t_1, \cdots, t_{n-1}])$ -- there is a loop body which permits the trace.
        \end{itemize}
        Thus, $[\state'] \in \semweaksetyel{\Ewhile{B}{S}}([\sigma])$ if and only if there exists a trace $[t_1, \cdots, t_n] \in T_{\state, \state'}$ satisfying both conditions (which are checkable from $\semweaksetyel{B}$ and $\semweaksetyel{S}$).
    
        Given $\semweaksetyel{B}$ and $\semweaksetyel{S}$, we can determine $\semweaksetyel{\Ewhile{B}{S}}$ on vectors of length 1. We can extend our construction to determine $\semweaksetyel{\Ewhile{B}{S}}$ on finite vectors $[\state_1, \cdots, \state_m]$ of arbitrary length by concatenating the traces from $T_{\state_j, \state'_j}$ in our queries to $\semweaksetyel{B}$ and $\semweaksetyel{S}$.

        For each potential output vector $[\state'_1, \cdots, \state'_m]$, define $T = \times_{j \leq m} T_{\state_j, \state_j'}$. Each element in $T$ is a concatenation of finitely many finite traces $[\subvec{t^{1}}{1}{n_1}, \subvec{t^{2}}{1}{n_2}, \cdots, \subvec{t^{2}}{1}{n_m}]$. This collection of traces is realizable by some loop in the set if and only if:
        \begin{itemize}
            \item $[\Et, \Et, \cdots, \Et, \Ef, \cdots, \Ef] \in \semweaksetyel{B}([\subvec{t^{1}}{1}{n_1-1}, \cdots, \subvec{t^{2}}{1}{n_m - 1}, \sigma'_1, \cdots, \sigma'_m])$ -- there is a guard which permits all the traces, accepting all non-final states and rejecting all final states $\sigma'_j$.
            \item $[\subvec{t^{1}}{2}{n_1}, \cdots, \subvec{t^{2}}{2}{n_m}] \in \semweaksetyel{S}([\subvec{t^{1}}{1}{n_1-1}, \cdots, \subvec{t^{2}}{1}{n_m - 1}])$ -- there is a loop body which permits the trace.
        \end{itemize}
        Again, $[\state'_1, \cdots, \state'_m] \in \semweaksetyel{\Ewhile{B}{S}}([\state_1, \cdots, \state_m])$ if and only if there exists a concatenation of traces $[\subvec{t^{1}}{1}{n_1}, \cdots, \subvec{t^{2}}{1}{n_m}] \in T$ satisfying both conditions (which are again checkable from $\semweaksetyel{B}$ and $\semweaksetyel{S}$).

        Thus, from $\semweaksetyel{B}$ and $\semweaksetyel{S}$, we can uniquely identify $\semweaksetyel{\Ewhile{B}{S}}$.
\end{proof}

\subsection{Coarseness of \yellowCaps Weak Vector-State Semantics}\label{app:weak_coarseness_yel}

\yelwkmin*

\begin{proof}
    Clearly, $\semweaksetyel{\progset}$ determines $\semsingsetyel{\progset}$ because $\semsingsetyel{\progset}(\sigma) = \{u[1] \mid u \in \semweaksetyel{\progset}([\sigma])\}$. 
    This gives a function mapping each $\semweaksetyel{\progset}$ to $\semsingsetyel{\progset}$. So $\semweaksetyel{\cdot}$ is at least as fine as $\semsingsetyel{\cdot}$.

    Let us show $\semweaksetyel{\cdot}$ is the coarsest such semantics. 
    Suppose an inductively defined set of statements
    $S$ and a compositional semantics $\abstrs{A}$ at least as fine as program-agnostic semantics $\semsingset{\cdot}$ are given. 
    We will construct set of loops $W_\alpha$ so that $\abstrsapp{A}{S}$ gives each $\abstrsapp{A}{W_\alpha}$, 
    each $\abstrsapp{A}{W_\alpha}$ gives $\semsingset{W_\alpha}$, 
    and the $\semsingset{W_\alpha}$ together give $\semweakset{S}$. This will show $\abstrsapp{A}{S}$ gives $\semweakset{S}$.
    
    Given $v \in \State^*$, we can determine $\semweaksetyel{S}(v)$ as follows.\\

    Recall $\varfun(S)$ is provided by $\abstrsapp{A}{S}$. Let $j$ be a variable not in $\varfun(S)$.

    Consider $v = [v_1, \cdots, v_n] \in \State^*$, and consider a desired vector of outputs $u = [u_1, \cdots, u_n] \in \State^*$. Clearly, if $u$ does not agree with $v$ off of $\varfun(S)$, then $u \notin \semweakset{S}(v)$. So suppose they do agree.
    
    We can write a set of loops $W_{v,u}$ {that contains one loop for each $s \in S$. For each $s \in S$, the corresponding loop in $W_{v,u}$ runs $s$} on every $v_j$, breaking when the wrong output is produced and incrementing a counter ($j \in \vars \setminus \varfun(S)$) for each $v_j$ correctly mapped to $u_j$. {
    Thus,
    each member of $W_{v,u}$ has the following form:}
\begin{lstlisting}[linewidth=\linewidth]
j := 1;
while (0 < j <= n) {
    (*@$\Eassign{\state}{v_j\restriction_{\varfun(S)}}$; @*) 
    s;
    if (*@$\sigma == {u_j}\restriction_{\varfun(S)} $@*) {
        j := j+1
    } else {
        j := 0
    }
}
\end{lstlisting}

    Then $u \in \semweaksetyel{S}(v)$ if and only if there is a state $\sigma'$ with $j = n+1$ in $\semsingsetyel{W_{v,u}}(\state)$ for arbitrary $\state \in \State$. By investigating various $u$, we can fully determine $\semweaksetyel{S}(v)$ from $\semsingsetyel{W_{v,\cdot}}$. Moreover, because $\abstrs{A}$ is compositional, $\abstrsapp{A}{S}$ determines each $\abstrsapp{A}{W_{v,u}}$. Since $\abstrs{A}$ is at least as fine as $\semsingsetyel{\cdot}$, $\abstrsapp{A}{W_{v,u}}$ determines $\semsingsetyel{W_{v,u}}$. Thus, $\abstrsapp{A}{S}$ determines $\semweaksetyel{S}$.\\

    For integer expressions $E$ (or Boolean expressions $B$), we can replace $S$ with $x_1 := E$ (or $\Eifthenelse{B}{x_1:=0}{x_1:=1}$) and the argument proceeds identically.

    So, $\abstrs{A}$ is at least as fine as $\semweaksetyel{\cdot}$.
\end{proof}

\subsection{Proofs for \Cref{sec:proof_systems}}\label{app:ul_proofs}

\nocompoprogagwhile*

\begin{proof}
    {Rather than the proof sketched in \Cref{subsec:setlogic}, we give a more constructive proof following the 
    approach used in the
    proof of \Cref{thm:sing_noninductive}. This
    approach
    will allow us to account for the $\mathsf{GrmDisj}$ rule.
    
    Formally, we demand that, with the exception of $\mathsf{GrmDisj}$ and the $\mathsf{While}$ rule, 
    any rule that can prove conclusions of the form $\Gamma \vdash \utripleag{P}{W}{Q}$ 
    for a set of loops $W$ can only use $W$ by copying it 
    in into triples in the rule's premises. 
    That is, such rules' hypotheses can use other triples about $W$, 
    but they cannot not ``unbox'' $W$ into $B$ and $S$, 
    encode the syntax of $W$ into a first-order formula, 
    or otherwise circumvent the need for the $\mathsf{While}$ rule to 
    prove nontrivial properties of sets of loops. 
    We will prove that, if the only rules that can ``see'' 
    the syntax of $W$ are $\mathsf{GrmDisj}$ and a compositional $\mathsf{While}$ rule, 
    then the proof system cannot be both sound and relatively complete. 
    
    Our proof 
    {will proceed via proof by contradiction as follows: 
    First, we construct a true triple $\utripleag{P}{W}{Q}$ 
    over a specific infinite set of loops $W$. 
    Second, we show that, if the proof system is sound and relatively complete, then
    any proof of $\emptyset \vdash \utripleag{P}{W}{Q}$ 
    must contain at least one (nontrivial) application of the $\mathsf{While}$ rule to an 
    inductively definable, infinite subset 
    $\widetilde{W} \subseteq W$ of loops (\Cref{lem:inf_while_app}). 
    Third, we show that no sound compositional $\mathsf{While}$ rule can be applied in such a way, 
    so a proof of $\emptyset \vdash \utripleag{P}{W}{Q}$ cannot exist. 
    We conclude that the proof system cannot be sound and relatively complete.
    As a matter of notation, when $N$ is a set of expressions, 
    we write $E_N$ to denote the set of integers to which terms in $L(N)$ evaluate.}} 

    Consider $B ::= x_1 < N$ where $N ::= 0 \mid N + 2$, and say $W ::= \Ewhile{B}{x_1 := x_1 + 1}$ as before. 
    When $\state[x_1] = 0$, {then} $\semsingset{W}(\state) = 
    \{\state\subs{n}{x_1} \mid n \in {2\N}\}$ 
    (i.e., $\utripleag{\state[x_1] = 0}{W}
    {\state[x_1] \in {2\N}}$ 
    is a valid triple, where ${2\N}$ denotes the set of even numbers). 
    If our logic is sound and relatively complete, 
    there must be a proof deriving 
    ${\emptyset \vdash}
    \utripleag{\state[x_1] = 0}{W}{\state[x_1] \in {2\N}}$, 
    {and the proof must obey \Cref{lem:inf_while_app} below.}
    
    {
    \begin{lemma} [Nontrivial While-rule Application to an Infinite Set of Loops] \label{lem:inf_while_app}
        Suppose we have a sound and relatively complete proof system for judgments $\Gamma \vdash \utripleag{P}{\progset}{Q}$ in which only the GrmDisj rule and a compositional While rule can ``see'' the syntax of sets of loops as described above. Then
        any proof of 
        $\emptyset \vdash \utripleag{\state[x_1] = 0}{W}
        {\state[x_1] \in 2\N}$ in the proof system
        must contain at least one application of the $\mathsf{While}$ rule whose conclusion 
        $\Gamma \vdash \utripleag{P}{\widetilde{W}}{Q}$ meets the following conditions:
        \begin{itemize}
            \item $\widetilde{W}$ is a set of loops $\widetilde{W} ::= \Ewhile{x_1 < \widetilde{N}}{\Eassign{x_1}{x_1+1}}$ for some infinite, inductively definable $\widetilde{N} \subseteq N$.
            \item Let $\widetilde{n} = \min(E_{\widetilde{N}})$, let $\widetilde{N}' ::= \widetilde{n} \mid \widetilde{N}' + 1$, and let $\widetilde{W}' ::= \Ewhile{x_1 < \widetilde{N}'}{\Eassign{x_1}{x_1+1}}$. Then the judgment $\Gamma \vdash \utripleag{P}{\widetilde{W}'}{Q}$ is false. Equivalently (because the logic is sound and relatively complete), $\Gamma \vdash \utripleag{P}{\widetilde{W}'}{Q}$ cannot be proven. 
        \end{itemize} 
    \end{lemma}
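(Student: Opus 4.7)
The plan is to split the argument into two halves: first establish the existence of a While-rule application to an infinite inductively definable $\widetilde{W} \subseteq W$, then argue that the conclusion of that application semantically falsifies the shrunk triple, which by soundness makes it unprovable.

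For the first half, I would inspect the finite proof tree for $\emptyset \vdash \utripleag{\state[x_1]=0}{W}{\state[x_1] \in 2\N}$. By hypothesis, every rule except While and GrmDisj treats a set of loops opaquely---simply copying it from premise to conclusion---so such rules cannot by themselves yield any semantic distinction between infinite sets of loops. GrmDisj is the only rule that refines an inductively definable set of loops, partitioning it into two inductively definable pieces. Since a finite proof tree contains only finitely many applications of GrmDisj, these applications induce a finite family of inductively definable subsets whose union covers $W$, and each member of the family must be the subject of some While-rule application in order to contribute nontrivial semantic content. Because $W$ is infinite, pigeonhole forces at least one of these subsets---call it $\widetilde{W}$---to be infinite. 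The grammar $W ::= \Ewhile{x_1 < N}{\Eassign{x_1}{x_1+1}}$ has the property that any inductively definable subset produced by GrmDisj preserves the outer loop shape, so $\widetilde{W} = \Ewhile{x_1 < \widetilde{N}}{\Eassign{x_1}{x_1+1}}$ for some infinite inductively definable $\widetilde{N} \subseteq N$. Denote the conclusion of the corresponding While-rule application by $\Gamma \vdash \utripleag{P}{\widetilde{W}}{Q}$.

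For the second half, I would trace the path from the root of the proof tree upward to this While application. Only GrmDisj and structural (weakening-style) rules appear on this path. GrmDisj preserves $P$ and $Q$ exactly; weakening read upward can only enlarge $P$ and shrink $Q$. Hence the original precondition and postcondition propagate up to give $\{\sigma \mid \sigma[x_1]=0\} \subseteq P$ and $Q \subseteq \{\sigma \mid \sigma[x_1] \in 2\N\}$ at this node. Now consider the integer $\widetilde{n}+1$, which lies in $E_{\widetilde{N}'}$ by construction of $\widetilde{N}'$. Since $\widetilde{n} \in E_N \subseteq 2\N$ is even, $\widetilde{n}+1$ is odd. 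The loop $\Ewhile{x_1 < \widetilde{n}+1}{\Eassign{x_1}{x_1+1}}$ therefore belongs to $\widetilde{W}'$, and when started from any $\sigma \in P$ with $\sigma[x_1]=0$ it terminates with $\sigma[x_1] = \widetilde{n}+1$, a state not in $Q$. Because $\utripleag{P}{\widetilde{W}'}{Q}$ abbreviates $\semsingset{\widetilde{W}'}(P) \subseteq Q$, this witnesses that the shrunk triple is false; by soundness and relative completeness it is also unprovable.

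The hard part will be rigorously justifying the syntactic closure claim used in the first half: every inductively definable subset of $W$ that arises through repeated GrmDisj applications has the form $\Ewhile{x_1 < \widetilde{N}}{\Eassign{x_1}{x_1+1}}$ for some $\widetilde{N} \subseteq N$. This requires a grammar-theoretic argument that GrmDisj, being the only rule that can refine a set of loops, cannot introduce loops with a different body or a guard of a different syntactic form, and that the pigeonhole step selects a leaf whose associated set still admits an inductive definition with an infinite $\widetilde{N}$. A secondary but unavoidable difficulty is cataloging the structural rules of the system precisely enough to verify that none of them interferes with the monotone propagation of $P$ and $Q$ that the second half relies on.
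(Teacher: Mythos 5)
Your overall strategy is genuinely different from the paper's, and the difference matters: the paper does \emph{not} try to locate a well-behaved $\mathsf{While}$ application inside the proof tree. Instead it argues by contradiction with a global substitution: assuming \emph{no} $\mathsf{While}$ application satisfies the two conditions, it replaces every infinite set of loops $W_a$ occurring anywhere in the proof by its ``upward-closed'' counterpart $W_a'$ (leaving all finite subsets untouched), observes that every rule application remains legitimate---finite pieces are unchanged, copying rules cannot distinguish the substituted sets, and each $\mathsf{While}$ conclusion over $W_a'$ is provable by the assumed failure of the lemma---and thereby obtains a proof of the false triple $\emptyset \vdash \utripleag{\state[x_1]=0}{(\Ewhile{x_1 < N'}{\Eassign{x_1}{x_1+1}})}{\state[x_1]\in 2\N}$, contradicting soundness. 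This sidesteps exactly the bookkeeping your proof leans on.

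That bookkeeping is where your argument has a genuine gap, and it is not the ``secondary'' difficulty you call it---it is the crux. The hypotheses of \Cref{thm:no-compo-prog-ag-while} allow \emph{arbitrary} sound rules (and axioms) whose conclusions concern a set of loops, provided they merely copy the set into their premises; they are not limited to $\grmdisj$ plus the standard weakening rule. Hence two of your steps are unjustified: (i) the claim that the only rules on the path from the root to the chosen $\mathsf{While}$ application are $\grmdisj$ and weakening, so that $P$ only grows and $Q$ only shrinks going upward---a sound copying rule such as a case split on states (from $\utripleag{P\cap A}{\progset}{Q}$ and $\utripleag{P\setminus A}{\progset}{Q}$ infer $\utripleag{P}{\progset}{Q}$) shrinks $P$ going upward and can strip out the $x_1=0$ states before the $\mathsf{While}$ application is ever reached, in which case the triple $\utripleag{P}{\widetilde{W}'}{Q}$ at that node may well be true (e.g.\ if $P=\emptyset$), defeating your falsification step; and (ii) the claim that every infinite $\grmdisj$ piece ``must be the subject of some While-rule application''---a piece could in principle be discharged entirely by copying rules and axioms, and ruling this out requires a genericity argument (any derivation that never unboxes the set would remain valid when $\widetilde{W}$ is swapped for $\widetilde{W}'$), which itself presupposes the very $P$/$Q$ tracking from (i). Your second half (evenness of $\widetilde{n}$, the odd bound $\widetilde{n}+1\in E_{\widetilde{N}'}$, and the resulting violation of $\semsingset{\widetilde{W}'}(P)\subseteq Q$) is fine \emph{given} those bounds on $P$ and $Q$, but as stated the first half does not deliver them under the generality the lemma is supposed to cover. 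If you want to keep a direct argument, you would have to restrict the rule set far more than the theorem does; otherwise the paper's substitution-and-contradiction route is the robust one.
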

    

\noindent
{
\begin{proof}
    Assume, for the sake of deriving a contradiction, that
    no application of the $\mathsf{While}$ rule satisfied these conditions.
    We will show that one could then modify a proof of $\emptyset \vdash \utripleag{\state[x_1] = 0}{W}
    {\state[x_1] \in 2\N}$ to prove a false claim.

    Let $W_1, \cdots, W_k$ be all the finite subsets of $W$ appearing in triples in the proof, 
    and let $N_1, \cdots, N_k$ be the corresponding sets of expressions in their guards. 
    Observe that $E_{\mathit{fin}} = \bigcup_{j=1}^k E_{N_j}$ is bounded. 
    Let $n = \min(E_N \cap (\max(E_{\mathit{fin}}), \infty))$ be the least integer in $E_N$ 
    greater than every integer in $E_{\mathit{fin}}$ (or $0$ if $E_{\mathit{fin}}=\emptyset$). 
    Intuitively, we will add loops to $W$ whose guard's bound is greater than $n$.
    Because
    only the loops in $W_1, \cdots, W_k$ are checked ``closely'', 
    adding these loops will not affect the correctness of the proof-rule 
    applications, but will result in an invalid conclusion.  
    
    Set $N'$ as an inductively defined set of expressions for which
    $E_{N'} = E_N \cup [n, \infty)$. 
    Now, we can modify our proof of $\emptyset \vdash \utripleag{\state[x_1] = 0}{W}{\state[x_1] \in 2\N}$ 
    to prove the false claim 
    $\emptyset \vdash \utripleag{\state[x_1] = 0}{\Ewhile{x_1 < N'}{\Eassign{x_1}{x_1+1}}}{\state[x_1] \in 2\N}$. 
    Recall that our proof may partition $W$ into many pieces using the GrmDisj rule. We will keep every finite set of loops the same. However, we will replace every infinite set of loops appearing in the proof as follows: If $W_a ::= \Ewhile{x_1 < N_a}{\Eassign{x}{x+1}}$ is an infinite set of loops, where    $L(W_a) \subseteq L(W)$, replace it with the set of loops $W_a' ::= \Ewhile{x_1 < N_a'}{\Eassign{x}{x+1}}$,
    where $N_a'$ is an inductively defined set of expressions
    for which $E_{N_a'} = E_{N_a} \cup [\max(\min(E_{N_a}), n), \infty)$. 
    Because every finite set of loops remains the same and every judgment 
    proven over infinite sets of loops ($W_a$) by the $\mathsf{While}$ rule can be proven over the modified sets ($W_a'$) by assumption, 
    the proof of 
    $\emptyset \vdash \utripleag{\state[x_1] = 0}{\Ewhile{x_1 < N'}{\Eassign{x_1}{x_1+1}}}{\state[x_1] \in 2\N}$ 
    goes through.
    But we have assumed that our proof system is sound, so no proof of the false judgment $\emptyset \vdash \utripleag{\state[x_1] = 0}{\Ewhile{x_1 < N'}{\Eassign{x_1}{x_1+1}}}{\state[x_1] \in 2\N}$ can exist. We have arrived at a contradiction.
    We conclude that any proof of $\emptyset \vdash \utripleag{\state[x_1] = 0}{W}{\state[x_1] \in 2\N}$ 
    must contain an application of the $\mathsf{While}$ rule
    that meets the conditions given in the statement of the lemma.
\end{proof}}

    Now consider the application of the $\mathsf{While}$ rule described in \Cref{lem:inf_while_app}. Call $\widetilde{B} ::= x_1 < \widetilde{N}$, and call $\widetilde{B}' ::= x_1 < \widetilde{N}'$. Both sets are inductively defined, and $\semsingset{\widetilde{B}} = \semsingset{\widetilde{B}'} = \lambda P. \{\state\subs{\Et}{b_t} \mid \state \in P\} \cup \{\state\subs{\Ef}{b_t} \mid \state \in P \land \state[x_1] \geq \widetilde{n}\}$. Thus, any hypothesis we write about $\widetilde{B}$ is also true of $\widetilde{B}'$! In particular, if our logic is relatively complete, then we can prove the conclusion $\Gamma \vdash \utripleag{P}{\widetilde{W}'}{Q}$. But this conclusion is false from the statement of \Cref{lem:inf_while_app}!
    So, the logic cannot be both sound and relatively complete.}

\end{proof}

\vsyelwhilerule*

\begin{proof}
    We can adapt $f_{while}$ -- the function which determines $\semweaksetyel{\Ewhile{B}{S}}$ from $\semweaksetyel{B}$ and $\semweaksetyel{S}$ in the proof of \Cref{thm:weak_compositional_yel} -- to construct the following inference rule:
    \[
    \infer[\mathsf{While_{\yellow}}]{\Gamma \vdash \utriplevs{P}{(\Ewhile{B}{S})}{Q}}{\Gamma \vdash \utriplevs{x=z}{B}{Q_B} \quad \Gamma \vdash \utriplevs{x=z}{S}{Q_S}}
    \]
    The postcondition $Q$ is defined as:
    {\small
    \begin{align*}
        Q &= \exists t_1, ..., t_n, y, y' \in \State^*. (x' = [t_{1,1}, t_{2,1}, \cdots, t_{n, 1}] \rightarrow P[x'/x])\\
        & \bigwedge (Q_B[y/x] \land z = t_1{\pend}t_2{\pend}\cdots{\pend}t_n \land (\bigwedge\limits_{0 < i \leq n} (\bigwedge\limits_{j < \abs{t_i}} y_{(\sum\limits_{k < i} \abs{t_k}) + j}[b_t] = \Et) \land y_{(\sum\limits_{k \leq i} \abs{t_k})} = \Ef)) \\
        & \bigwedge (Q_S[y'/x] \land z = t_1{\pend}t_2{\pend}\cdots{\pend}t_n \land (\bigwedge\limits_{0 < i \leq n;\,\,\,0 < j < \abs{t_i}} y'_{(\sum\limits_{k < i} \abs{t_k}) + j} = z_{(\sum\limits_{k < i} \abs{t_k}) + j + 1})) \\
        & \bigwedge x = [t_{1,\abs{t_1}}, t_{2,\abs{t_2}}, \cdots, t_{n, \abs{t_n}}]\\
    \end{align*}
    }

    In the definition of $Q$, ${\pend}$ denotes concatenation of finite vectors. 
    $Q$ expresses that a vector-state $x$ is a possible outcome if there are $n$ traces 
    ($t_1, \cdots, t_n$) whose start states satisfy the precondition $P$ 
    (1) so that each trace is realizable by a guard (2) and a body (3) and the traces end in final states given by $x$ (4).
    Soundness and relative completeness of this rule follow immediately from the proof of Theorem~\ref{thm:weak_compositional_yel}. 



\end{proof}

\vsgrnwhilerule*

\begin{proof}
    The rule is given below. It differs from the rule $\mathsf{While_{\yellow}}$ in that it
    introduces a variable $i \in \N$ to simulate making countably many queries to $\semweaksetgrn{S}$, and it uses a shortcut to check for admissible guards by collecting the semantics of all guards in $B$:
    $$\infer[\mathsf{While_{\green}}]{\Gamma \vdash 
    \utriplevs{P}{\Ewhile{B}{S}}{Q}
    }{\Gamma \vdash \utriplevs{P_B}{B}{Q_B} \quad \Gamma \vdash \utriplevs{P_S(i)}{S}{Q_S(i)}}$$
    The postcondition $Q$ is defined as:
    \begin{align*}
        Q &= (\forall y. y \in P_B \leftrightarrow y = [\state_1, \state_2, \cdots]) \\
        &\land x \in \Red(\{x' \mid \exists i \in \N, [\state'_1, \cdots] \in VS_{vars}, v^b \in Q_B, p \in P.\\
        &\quad(\forall y'. (y' \in P_S(i) \leftrightarrow y' = [\state'_1, \state'_2, \cdots]) \land Z)\}).
    \end{align*}

    $Z$ is shorthand for a formula that indicates that one of the countably many computable vectors 
    $[\state'_1, \cdots]$ gives a concatenation of traces over the input vector $p$ that results in $v^s \in Q_S$, a concatenation of output traces, that yields $x'$.\footnote{
      We refrain from explicitly constructing $Z$ here because the construction is very complex and tedious.
    }
    The structure of the rule $\mathsf{While_{\green}}$
    follows exactly the proof of compositionality of \green vector-state semantics (Theorem~\ref{thm:weak_inductive}).
    Intuitively, we capture $\semweaksetgrn{S}(v)$ for all the countably many $v \in \vsvars$ in $Q_s(i)$, and (as in Theorem~\ref{thm:weak_inductive}) we evaluate $\semweaksetgrn{S}$ on finite and infinite traces to determine $\semweaksetgrn{\Ewhile{B}{S}}(p)$. 
    
    %

    The soundness and relative completeness of this rule follow from the proof of compositionality 
    of $\semweakset{\cdot}$ (\Cref{thm:weak_inductive}).     

\end{proof}

\subsection{Proofs of Additional Results}
\label{app:small-proofs}
Here, we provide a few short proofs elided from the main text.




\yellessthangrn*

\begin{proof}
To see that $\semsingsetgrn{\cdot} \succeq\semsingsetyel{\cdot}$,
define $g : 2^\DState \rightarrow 2^\State$ as $g(X) = \begin{cases}
    X & \divg \notin X\\
    X \setminus \{\divg\} & \divg \in X
\end{cases}$. Then take $f(\semsingsetgrn{C}) = g \circ \semsingsetgrn{C}$.

To see that $\semsingsetgrn{\cdot} \not \preceq \semsingsetyel{\cdot}$, consider the set 
of programs $\progset = \{\Ewhile{\Et}{\Eassign{x}{x}}\}$. Then
$\semsingsetyel{\progset} = \semsingsetyel{\emptyset}$ because $\semsingsetyel{\progset}(X) = \semsingsetyel{\emptyset}(X) = \emptyset$. However, 
$\semsingsetgrn{\progset} \neq \semsingsetgrn{\emptyset}$ because $\semsingsetgrn{\progset}(\state) = \{\divg\}$ and $\semsingsetgrn{\emptyset}(X) = \emptyset$. Thus, there cannot exist $f$
where
$f_T(\semsingsetgrn{\progset}) = \semfullset{\progset}$ for all $\progset$.
\end{proof}

\begin{theorem}[$\semweaksetgrnnc{\cdot}$ is Noncompositional]\label{thm:nc_noncomp}
    The semantics $\semweaksetgrnnc{\cdot}$ introduced in \Cref{sec:vs_grn_naive} does not admit a compositional characterization.
\end{theorem}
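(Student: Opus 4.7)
My plan is to prove this by contradiction, using \Cref{thm:weak_minimality}, which states that $\semweaksetgrn{\cdot}$ is the coarsest compositional semantics finer than $\semsingsetgrn{\cdot}$. First I would observe that $\semsingsetgrn{\cdot} \preceq \semweaksetgrnnc{\cdot}$: given $\semweaksetgrnnc{\progset}([\state])$ on a singleton input, one recovers $\semsingsetgrn{\progset}(\state)$ by simply unpacking each singleton output vector into its sole state (while preserving $\divg$). Assuming for contradiction that $\semweaksetgrnnc{\cdot}$ admits a compositional characterization, \Cref{thm:weak_minimality} then forces $\semweaksetgrn{\cdot} \preceq \semweaksetgrnnc{\cdot}$, so there must exist a map $f_{\Stmtvars}$ with $f_{\Stmtvars}(\semweaksetgrnnc{S}) = \semweaksetgrn{S}$ for every $S \in \Stmtset$. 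I would then finish the proof by exhibiting $S_1, S_2 \in \Stmtset$ with $\semweaksetgrnnc{S_1} = \semweaksetgrnnc{S_2}$ but $\semweaksetgrn{S_1} \neq \semweaksetgrn{S_2}$, contradicting the existence of any such $f_{\Stmtvars}$.

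For the witness pair, I would take three statements (each expressible in $\gimp$ using an $\Eifthenelse{\cdot}{\cdot}{\cdot}$ guarding $\Ewhile{\Et}{\Eassign{x}{x}}$): $s_a$, which diverges when $x = 10$ and otherwise executes $\Eassign{x}{1}$; $s_b$, which diverges when $x = 20$ and otherwise executes $\Eassign{x}{2}$; and $s_c$, which diverges when $x \in \set{10, 20}$ and otherwise executes $\Eassign{x}{1}$. Setting $S_1 = \set{s_a, s_b}$ and $S_2 = S_1 \cup \set{s_c}$, a case split on whether an input vector $v$ contains states with $x=10$, $x=20$, both, or neither shows that $s_c$'s contribution to $\semweaksetgrnnc{S_2}(v)$ is always redundant: on vectors avoiding both values, $s_c$'s output coincides with $s_a$'s; on vectors containing $x=10$ (resp.\ $x=20$), the $\divg$ that $s_c$ contributes is already supplied by $s_a$ (resp.\ $s_b$). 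Hence $\semweaksetgrnnc{S_1} = \semweaksetgrnnc{S_2}$.

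To show $\semweaksetgrn{S_1} \neq \semweaksetgrn{S_2}$, I would compute both on $v = [\statelist{\stateentry{x}{5}}, \statelist{\stateentry{x}{20}}]$. The raw traces from $\semweaksetgrnbad{\cdot}$ are $[\statelist{\stateentry{x}{1}}, \statelist{\stateentry{x}{1}}]$ from $s_a$, $[\statelist{\stateentry{x}{2}}, \divg]$ from $s_b$, and (only for $S_2$) $[\statelist{\stateentry{x}{1}}, \divg]$ from $s_c$. Truncation does nothing here, and since the two diverging traces have distinct one-element prefixes $[\statelist{\stateentry{x}{2}}]$ and $[\statelist{\stateentry{x}{1}}]$, neither rule of $\Red$ can eliminate the $s_c$-trace. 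Thus $[\statelist{\stateentry{x}{1}}, \divg]$ appears in $\semweaksetgrn{S_2}(v)$ but not in $\semweaksetgrn{S_1}(v)$, completing the contradiction. The main obstacle is crafting this pair: a naive construction either fails to match $\semweaksetgrnnc{\cdot}$ on non-diverging vectors (since $\semweaksetgrnnc{\cdot}$ still distinguishes non-diverging behavior program-by-program) or has its distinguishing $\semweaksetgrn{\cdot}$-trace collapsed by $\Red$ (e.g., once $[\divg]$ alone appears, it dominates every other diverging trace with the same empty shared prefix). The construction above is engineered so that $s_c$'s pre-divergence prefix differs from both $s_a$'s and $s_b$'s, allowing the distinguishing trace to survive reduction.
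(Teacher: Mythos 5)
Your proposal is correct, but it takes a genuinely different route from the paper's own proof. The paper argues directly from the definition of compositionality, in the same style as \Cref{thm:sing_noninductive}: it exhibits two sets of loop bodies (built from programs that diverge exactly on the state $x_1=1$ and otherwise assign constants, plus non-diverging counterparts) with identical $\semweaksetgrnnc{\cdot}$ denotations, embeds them as the body of a fixed bounded loop $\Eassign{x_1}{0};\Eassign{x_2}{0};\Ewhile{x_2<5}{\cdots}$, and shows the resulting sets of loops have different $\semweaksetgrnnc{\cdot}$ denotations, so no $f_{\mathsf{While}}$ can exist. You instead argue indirectly: you check $\semsingsetgrn{\cdot}\preceq\semweaksetgrnnc{\cdot}$ by unpacking singleton vectors, invoke \Cref{thm:weak_minimality} to conclude that compositionality would force $\semweaksetgrn{\cdot}\preceq\semweaksetgrnnc{\cdot}$, and refute that with $S_1=\set{s_a,s_b}$, $S_2=S_1\cup\set{s_c}$. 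Your witness does check out: on every extended vector state $s_c$'s coarse contribution is subsumed (it yields $\divg$ exactly when $s_a$ or $s_b$ already does, and otherwise coincides with $s_a$), so $\semweaksetgrnnc{S_1}=\semweaksetgrnnc{S_2}$; moreover $\varfun(S_1)=\varfun(S_2)=\set{x}$, so the variable information implicitly carried by the semantics (footnote to \Cref{def:semantics_set}) cannot rescue a distinguishing map; and on $v=[\statelist{\stateentry{x}{5}},\statelist{\stateentry{x}{20}}]$ the trace $[\statelist{\stateentry{x}{1}},\divg]$ indeed survives truncation and $\Red$, since its pre-divergence prefix is not an extension of that of $[\statelist{\stateentry{x}{2}},\divg]$ and $[\divg]$ is absent, so $\semweaksetgrn{S_1}\neq\semweaksetgrn{S_2}$. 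As for what each approach buys: yours is short, reduces the work to two concrete finite sets, and cleanly reuses the coarsest-compositionality machinery (no circularity arises, since the proof of \Cref{thm:weak_minimality} does not depend on this theorem), but it is not self-contained and does not localize the failure; the paper's direct construction is elementary, independent of the harder minimality theorem, and pinpoints exactly where compositionality breaks, namely at the $\mathsf{While}$ aggregation over equi-denoted guard/body sets.
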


\begin{proof}
    Recall the $\semweaksetgrnnc{\cdot}$ defined in \Cref{sec:vs_grn_naive}. Define $s_1$ as the program $\Eassign{x_1}{1}$ and $s_2$ as the program $\Eassign{x_1}{2}$. Define $s_1'$ as the program $(\Ewhile{x_1 = 1}{skip});\Eassign{x_1}{1}$ and $s_2'$ as the program $(\Ewhile{x_1 = 1}{skip});\Eassign{x_1}{2}$. Then define $S = \{s_1, s_2, s_1', s_2'\}$ and $S' = \{s_1, s_1', s_2'\}$.

    One can observe $\semweaksetgrnnc{S} = \semweaksetgrnnc{S'}$. Let $v$ be a vector-state. If all programs in $S$ converge on $v$, then $\semweaksetgrnnc{S}\{v\} = \semweaksetgrnnc{S'}(\{v\} = \{[\statelist{\stateentry{x_1}{1}}, \cdots ], [\statelist{\stateentry{x_1}{2}}, \cdots] \}$. Otherwise, $v$ contains a state where $x_1 = 1$, so $\semweaksetgrnnc{S}\{v\} = \semweaksetgrnnc{S'}(\{v\} = \{\divg, [\statelist{\stateentry{x_1}{1}}, \cdots ], [\statelist{\stateentry{x_1}{2}}, \cdots]\}$.

    Because $\semweaksetgrnnc{\progset}(V) = \bigcup_{v \in V} \semweaksetgrnnc{\progset}(\{v\})$, $\semweaksetgrnnc{S} = \semweaksetgrnnc{S'}$.

    Now consider the set of loops $W_S = \Eassign{x_1}{0}; \Eassign{x_2}{0}; \Ewhile{x_2 < 5}{x_2++; S}$. This loop runs $5$ times. Note that, when $s_1 \in S$ is chosen, the loop $\Ewhile{x_2 < 5}{x_2++; s_1}$ diverges because the first iteration sets $x_1=1$, and $s_1$ diverges on the second iteration. However, the loop $\Ewhile{x_2 < 5}{x_2++; s_2}$ converges with $x_1 = 2$ because the state $\statelist{\stateentry{x_1}{1}}$ is never encountered. Thus, $\divg \in \semweaksetgrnnc{W_S}(\{v\}) \setminus \semweaksetgrnnc{W_{S'}}(\{v\})$ for all vector-states $v$.

    Then despite $\semweaksetgrnnc{S} = \semweaksetgrnnc{S'}$, we see $\semweaksetgrnnc{W_S} \neq \semweaksetgrnnc{W_{S'}}$. Thus, $\semweaksetgrnnc{\cdot}$ cannot be compositional.  
\end{proof}

\section{Granularity Induces a Lattice}
\label{app:lattice}
In this appendix, we prove that the granularity preorder induces a lattice. Most results are obvious if one imagines granularity as a measure of the fineness of the partition $(\abstrs{A})^{-1}$.

\begin{definition} [Granularity]
    Let a semantics $\abstrs{A}$ be given. The granularity of $\abstrs{A}$ is defined as $[\abstrs{A}] = \{\abstrs{B} \mid \abstrs{A} \preceq \abstrs{B} \preceq \abstrs{A}\}$, the set of semantics with the same granularity as $\abstrs{A}$.
\end{definition}

Note that the granularity preorder induces a partial order ($\preceq$) over the set of granularities.

\begin{theorem} [Granularity Partial Order is a Complete Lattice]
    The set of granularities equipped with the partial order $\preceq$ is a complete lattice.
\end{theorem}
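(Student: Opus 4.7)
The plan is to reduce the claim to the classical completeness of the lattice of equivalence relations by identifying each granularity with a triple of equivalence relations. First, I would introduce, for a semantics $\abstrs{A}$ and each type $T \in \{\Stmtvars, \Expvars, \Boolvars\}$, the kernel equivalence $\sim^T_A$ on $\boldsymbol{T}$ defined by $C \sim^T_A C'$ iff $\abstrstapp{A}{C} = \abstrstapp{A}{C'}$. Unpacking \Cref{def:granularity} would then yield two structural facts: (i) $\abstrs{B} \preceq \abstrs{A}$ iff $\sim^T_A \;\subseteq\; \sim^T_B$ for every $T$---the witnessing function $f_T$ exists precisely when collisions of $\abstrstapp{A}{\cdot}$ force collisions of $\abstrstapp{B}{\cdot}$---and therefore (ii) $[\abstrs{A}] = [\abstrs{B}]$ iff the triples $(\sim^T_A)_T$ and $(\sim^T_B)_T$ coincide componentwise. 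Moreover, every triple $(\sim^T)_T$ of equivalence relations is realized as the kernel of some semantics: let $\abstrstapp{A}{C}$ be the $\sim^T$-equivalence class of $C$.

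Next I would invoke the classical fact that the lattice of equivalence relations on any set is a complete lattice, together with the observation that a product of complete lattices is complete. Thus the poset of triples of equivalence relations on $(\Stmtset, \Expset, \Boolset)$ ordered by componentwise reverse inclusion is complete, and by the correspondence above, so is the granularity poset. To make the operations concrete, for any family $\{[\abstrs{A_i}]\}_{i \in I}$ of granularities I would exhibit the join via the product semantics $\abstrstapp{J}{C} = (\abstrstapp{A_i}{C})_{i \in I}$ (whose kernel is $\bigcap_i \sim^T_{A_i}$), and the meet via the quotient semantics that sends $C$ to its class under the transitive closure of $\bigcup_i \sim^T_{A_i}$. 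The universal properties are immediate from (i).

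The main bookkeeping step will be verifying (i) rigorously. The forward direction is easy: the identity $f_T \circ \abstrstapp{A}{\cdot} = \abstrstapp{B}{\cdot}$ forces any two inputs on which $\abstrstapp{A}{\cdot}$ agrees to be inputs on which $\abstrstapp{B}{\cdot}$ agrees. For the reverse direction I would construct $f_T$ by choosing, for each value $v$ in the image of $\abstrstapp{A}{\cdot}$, a representative $C$ with $\abstrstapp{A}{C} = v$ and setting $f_T(v) = \abstrstapp{B}{C}$, extending $f_T$ arbitrarily off this image; the assumption $\sim^T_A \subseteq \sim^T_B$ is precisely what makes this assignment well-defined. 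I expect no deeper obstacle: once the correspondence with the partition lattice is in hand, the theorem is an immediate consequence of a standard lattice-theoretic result.
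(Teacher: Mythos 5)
Your proposal is correct and essentially matches the paper's own proof: the paper likewise constructs the join of a family of granularities as the product semantics (whose kernel is the intersection of the kernels) and the meet as the quotient by the transitive closure of the union of the kernel relations, and its verification of the universal properties is exactly your observation (i) that $\preceq$ corresponds to reverse inclusion of the induced kernel equivalences. Your extra packaging via the complete lattice of equivalence relations is a harmless abstraction; the only inessential wrinkle is the side claim that \emph{every} triple of equivalence relations is realized as a kernel, which sits uneasily with the paper's convention that a semantics implicitly carries $\varfun(\cdot)$, but your explicit join and meet constructions render that claim unnecessary.
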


\begin{proof} We show that upper and lower bounds always exist.
    \begin{itemize}
        \item Given an index set $\mathcal{A}$ and granularities $[\abstrs{A_\alpha}]$ for each $\alpha \in \mathcal{A}$, define $\abstrs{\mathcal{A}_\lor}$ so that, for each $\progset$, $\abstrsapp{\mathcal{A}_\lor}{\progset} = \times_{\alpha \in \mathcal{A}} \abstrsapp{A_\alpha}{\progset}$.
        Clearly, $[\abstrs{\mathcal{A}_\lor}]$ is above each $[\abstrs{A_\alpha}]$. Moreover, any semantics $\abstrs{B}$ which determines every $\abstrs{A_\alpha}$ also determines $\times_{\alpha \in \mathcal{A}} \abstrs{A_\alpha}$ and thus $\abstrs{\mathcal{A}_\lor}$. So, $[\abstrs{\mathcal{A}_\lor}]$ is the least upper bound of $\{[\abstrs{A_\alpha}] \mid \alpha \in \mathcal{A}\}$.

        \item Given an index set $\mathcal{A}$ and granularities $[\abstrs{A_\alpha}]$ for each $\alpha \in \mathcal{A}$, define a relation $R_{\mathcal{A}_\land}$ so that, for all $\progset_1$ and $\progset_2$, $\progset_1 R_{\mathcal{A}_\land} \progset_2$ if and only if $\exists \alpha \in \mathcal{A}. \abstrsapp{A_\alpha}{\progset_1} = \abstrsapp{A_\alpha}{\progset_2}$. Define $\equiv_{\mathcal{A}_\land}$ as the transitive closure of $R_{\mathcal{A}_\land}$. Now, define the semantics $\abstrs{\mathcal{A}_\land}$ so that $\abstrsapp{\mathcal{A}_\land}{\progset} = [\progset]_{\equiv_{\mathcal{A}_\land}}$.

        Then $[\abstrs{\mathcal{A}_\land}]$ is below every $[\abstrs{A_\alpha}]$ because $\abstrsapp{A_\alpha}{\progset_1} = \abstrsapp{A_\alpha}{\progset_2}$ implies $\abstrsapp{\mathcal{A}_\land}{\progset_1} = \abstrsapp{\mathcal{A}_\land}{\progset_2}$. 
        
        To prove $[\abstrs{\mathcal{A}_\land}]$ is a greatest lower bound, consider some $[\abstrs{B}]$ below every $[\abstrs{A_\alpha}]$. Define $\equiv_B$ as the equivalence relation for which $\progset_1 \equiv_B \progset_2$ if and only if $\abstrsapp{B}{\progset_1} = \abstrsapp{B}{\progset_2}$. For each $\alpha$, whenever $\abstrsapp{A_\alpha}{\progset_1} = \abstrsapp{A_\alpha}{\progset_2}$, it must be the case that $\abstrsapp{B}{\progset_1} = \abstrsapp{B}{\progset_2}$ (and hence $\progset_1 \equiv_B \progset_2$) since $[\abstrs{B}]$ is below $[\abstrs{A_\alpha}]$. Then $R_{\mathcal{A}_\land} \subseteq \equiv_B$ as relations. Since $\equiv_B$ is transitively closed, $\equiv_{\mathcal{A}_\land} \subseteq \equiv_B$. Then, for all $\progset_1$ and $\progset_2$, if $\abstrsapp{\mathcal{A}_\land}{\progset_1} = \abstrsapp{\mathcal{A}_\land}{\progset_2}$ then $\abstrsapp{B}{\progset_1} = \abstrsapp{B}{\progset_2}$. So, $\abstrsapp{B}{\progset}$ is determined by $\abstrsapp{\mathcal{A}_\land}{\progset}$ for all $\progset$. Thus, $[\abstrs{B}] \preceq [\abstrs{\mathcal{A}_\land}]$.
    \end{itemize}
\end{proof}
Note also that:
\begin{itemize}
    \item  The minimal granularity is $[\abstrs{\bot}]$ where $\abstrsapp{\bot}{\progset} = \emptyset$ for all $\progset$. Given any semantics $\abstrs{A}$, we can always determine $\abstrsapp{\bot}{\progset}$ from $\abstrsapp{A}{\progset}$ because $\abstrs{\bot}$ is constant.

    \item The maximal granularity is $[\abstrs{\top}]$ where $\abstrsapp{\top}{\progset} = \progset$ for all $\progset$. Given any semantics $\abstrs{A}$, we can always determine $\abstrsapp{A}{\progset}$ from $\abstrsapp{\top}{\progset}$ by applying $\abstrs{A}$ to $\abstrsapp{\top}{\progset} = \progset$.
\end{itemize}
\section{Verfiying Sets of Programs by Analyzing an Interpreter}
\label{app:interpreter}

In \Cref{subsec:interpreter} we explore Hoare-style analysis of an interpreter over a set of input programs $\progset$ as a technique to analyze properties of $\progset$. On the basis of our semantic theory, we assert that, for any correct recursive implementation of $\itrp: \Prog \times \State \rightarrow \State$, there exist Cartesian triples  of the form $\triple{t \in \progset \land x \in P}{y = \itrp(t,x)}{y \in Q}$ which cannot be proven from a Hoare-style analysis of $\itrp$ using only Cartesian triples. Moreover, we claim the need for non-Cartesian triples is alleviated by lifting the signature to $\itrp_v: \Prog \times \State \rightarrow \State$. In this section, we prove both results.

\paragraph{Recursive Interpreters}
First, we define what is meant by a recursive interpreter. A recursive interpreter $\itrp(t,x)$ is one which the only accesses to $\mathtt{t}$ or its subprograms are:
\begin{itemize}
    \item On the first line of the implementation, a match statement on the top-level program constructor of $\mathtt{t}$.
    \item Recursively calling $\itrp(\prog, ?)$ where $\prog$ is a program constructed from $\mathtt{t}$ or its available subprograms  (e.g., if $t = \Ewhile{b}{s}$ is determined by a match, then the call $\itrp(\Eifthenelse{b}{s;t}{\Eskip}, x)$ would be allowed).
\end{itemize}
Examples include \Cref{fig:interpreter} and \Cref{fig:interpreter2}.

\paragraph{Cartesian Triples}

If a Hoare-style analysis of an interpreter is to be tractable, we should avoid reasoning about the relationship between program syntax and semantics inside our preconditions. After all, this is the relationship we are performing the analysis to determine! Thus, we do not want a precondition to a recursive call to $\itrp(t', x')$ to look like \Cref{eq:interpret_precondition_two} or the examples below (where $S$ is an inductively defined set of programs):
\begin{align*}
    (t' \in S \land \sem{t}\statelist{\stateentry{x_1}{0}}[x_1] = 0) \land (x' \in \State)\\
    (t' \in S) \land (x' \in \State \land \sem{t}\statelist{x'}[x_1] = 0)
\end{align*}

In these examples, the predicate has to track both the syntax of $s$ and the semantics of $s$. In other words, the predicate is already reasoning about interpretation! To forbid such predicates in our Hoare-style analysis, we formalize the idea of a ``Cartesian'' precondition so that a proof using only Cartesian preconditions before recursive calls to interpret does not relate syntax and semantics explicitly:
\begin{definition}[Cartesian Precondition]
    Suppose one is conducting a Hoare-style analysis of $\itrp(t,x)$ with the initial condition $t \in \progset \land x \in X$ for some inductively defined set of programs $\progset$ and $X \subseteq \State$. A Cartesian precondition of a recursive invocation $\itrp(t', x')$ is a predicate of the form 
    \begin{equation}
        t' = f(t_1, \cdots, t_j) \land t = f'(t_1, \cdots, t_n) \in \progset \land P(x', x, t_{j+1}, \cdots, t_n) \land R(a_1, \cdots, a_m, x, t_{j+1}, \cdots, t_n)
    \end{equation} where $t_1, \cdots, t_n$ are interpreter program variables denoting subprograms of $t$, $f$ and $f'$ are compositions of program constructors, $a_1, \cdots, a_m$ are program variables of the interpreter, and $P$ and $R$ are predicates.
\end{definition}
Examples of Cartesian preconditions include:
\begin{align*}
    t' &= \Ewhile{b}{\Eassign{x_1}{1};s} \land t = \Ewhile{b}{s} \in \progset \land x' = x \land a_1=5\\
    t' &= \Eifthenelse{b}{s_1;\Ewhile{b}{s_1}}{\Eskip} \land t = \Ewhile{b}{s_1};s_2 \in \progset 
    \land x' = \sem{s_2}(x) \land s_2=\Eskip
\end{align*}

Note that a Cartesian precondition of the top-level $\itrp(t, x)$ is simply a predicate of the form $t \in \progset \land x \in X$ for some inductively defined set of programs $\progset$ and $X \subseteq \State$ (because no other program variables exists).

Next, we should clarify what it means for a Hoare-style analysis of $\itrp$ to use only Cartesian triples. Consider a Hoare-style proof for \Cref{fig:interpreter2}, where $W = \Ewhile{B}{S}$ is a set of loops and $P$ and $Q$ are sets of states.
\begin{example}[Singly Unrolled Hoare Analysis]
    \begin{align*}
        \{t \in W \land x \in P\}\\
        \text{match t with ... case While b do s:}\\
        \{b \in B \land s \in S \land x \in P\}\\
        \itrp(\Eifthenelse{b}{s;t}{\Eskip}, x)\\
        \{\textrm{ret} \in Q\}
    \end{align*}
\end{example}
 Here, it seems that we do not need for non-Cartesian preconditions to have a precise proof! However, if we unroll the analysis of the recursive call $\itrp(\Eifthenelse{b}{s;t}{\Eskip}, x)$, we identify non-Cartesian preconditions:
\begin{example} [Doubly Unrolled Hoare Analysis] Note how, in the call $\itrp(s1, x)$, the set of (syntactic) guards $b$ that can appear in $s_1$ is restricted by the guards' (semantic) output on state $x$.
    \begin{align*}
        \intertext{$\{t \in W \land x \in P\}$}
        &\qquad\text{match t with ... case While b do s}
        \intertext{$\{b \in B \land s \in S \land x \in P\}$}
        &\qquad\itrp(\Eifthenelse{b}{s;t}{\Eskip}, x)
        \intertext{$\{t' = \Eifthenelse{b}{s;t}{\Eskip} \land b \in B \land s \in S \land t = \Ewhile{b}{s} \land x \in P\}$}
        &\qquad\quad \text{match t' with ... case if b then s1 else s2}
        \intertext{$\{b \in B \land s \in S \land t = \Ewhile{b}{s} \land s1 = s;t \land s2=\Eskip \land x \in P\}$}
        &\qquad\qquad b_1 = \itrp(b, x)
        \intertext{$\{b \in B \land s \in S \land t = \Ewhile{b}{s} \land s1 = s;t \land s2=\Eskip \land x \in P \land b_1 = \sem{b}(x)\}$}
        &\qquad\qquad \textrm{if } (b_1.b_t)
        \intertext{$\{{\color{red}b} \in B \land s \in S \land t = \Ewhile{{\color{red}b}}{s} \land s1 = s;t \land s2=\Eskip \land x \in P \land {\color{red}\mathsf{true} = \sem{b}(x)}\}$}
        &\qquad\qquad\quad \itrp(s1, x)\\
        &\qquad\qquad\quad\cdots
        \intertext{$\{\textrm{ret} \in Q\}$}
    \end{align*}
\end{example}
Thus, the non-Cartesian preconditions are still necessary, but they only appear when unrolling the recursive invocation of $\itrp$. In our first ``Cartesian'' proof, we simply offloaded the difficulty to whatever technique analyzes our recursive calls (e.g., the function summary in recursive Hoare-logic~\cite{nipkow}).

Thus, when we say that a Hoare-style analysis of $\itrp(t,x)$ uses only Cartesian triples, we mean that, no matter how much recursive applications of $\itrp$ are unrolled as above, the precondition preceding a recursive call $\itrp(t', x')$ will always be Cartesian. We also demand that, for every triple that is not over a recursive call to $\itrp$, the triple holds if we replace $\progset$ with any other set of programs $\progset'$ (i.e., we disallow triples $\triple{t \in \progset \land x \in X}{\Eskip}{t \in \progset \land x \in X \land \sem{t}(x) \in Q}$ that use the form of $\progset$ to extract semantic properties at the logical level).

\begin{figure}
  \vspace{-2mm}
  \begin{lstlisting}[linewidth=\linewidth]
State interpret (t: Prog, x: State) {
  match t with 
    ...
    case While b do s: {
      return interpret((*@if@*) b (*@then@*) s;t (*@else@*) skip, x);
    }
    case (*@if@*) b (*@then@*) s1 (*@else@*) s2: {
      State b1 = interpret(b,x);
      if (b1.b_t) {
        return interpret(s1, x);
      }
      else {
        return interpret(s2, x);
      }
    }
}
  \end{lstlisting}
  \vspace{-2mm}
  \caption{We give an alternate implementation of $\itrp: \Prog \times \State \rightarrow \State$ that suppresses non-Cartesian analysis to the second unrolling of $\itrp$.
  }
  \label{fig:interpreter2}
\end{figure}

\paragraph{Incompleteness of Cartesian Triples}
As stated in \Cref{subsec:interpreter}, we will show that, for any correct recursive implementation of $\itrp: \Prog \times \State \rightarrow \State$, there are Cartesian triples that cannot be proven via a Hoare-style analysis of $\itrp$ using only Cartesian triples. 

To prove this fact, we first define a semantics induced by the true Cartesian triples of $\itrp$. Understanding this semantics will be key to proving the necessity of non-Cartesian triples.

\begin{definition}[Interpreter Semantics $\abstrs{\itrp_a}$]
    Recall from \Cref{subsec:interpreter} that our analysis is based on a semantics that assigns to each $\progset$ the set of true Cartesian triples about it, eliding $\progset$ in the triple:
    {\small
\begin{equation*}
    \abstrsapp{\itrp_a}{\progset} = \set{\triple{\mathtt{t} \in\ ? \wedge P(\mathtt{x})}{y = \itrp(t, x)}{Q} \mid \text{the triple is true when $\progset$ is substituted for $?$}}
\end{equation*}
}
\end{definition}

\begin{proposition}[$\abstrs{\itrp_a} \equiv \semsingsetyel{\cdot}$ Are Noncompositional]\label{prop:itrp_noncomp}
    The semantics $\abstrs{\itrp_a}$ and $\semsingsetyel{\cdot}$ are equivalently granular. Moreover, both semantics are noncompositional over while and compositional over all other constructors.
\end{proposition}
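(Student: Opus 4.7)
The plan is to reduce this proposition to the compositionality results already established for $\semsingsetyel{\cdot}$ in Theorems~\ref{thm:ag_loop_free_comp} and~\ref{thm:sing_noninductive}. First, I would establish that $\abstrs{\itrp_a}$ and $\semsingsetyel{\cdot}$ have equivalent granularity; second, I would argue via a small general lemma that granularity equivalence transfers (non)compositionality from one semantics to the other, so the known results about $\semsingsetyel{\cdot}$ immediately carry over to $\abstrs{\itrp_a}$.

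For the granularity equivalence, the key observation is that the recursive interpreter $\itrp$ faithfully implements the big-step semantics, so a Cartesian triple $\triple{t \in \progset \wedge x \in P}{y = \itrp(t,x)}{y \in Q}$ is true if and only if $\semsingsetyel{\progset}(P) \subseteq Q$. This means $\abstrsapp{\itrp_a}{\progset}$ is exactly $\{(P,Q) \mid \semsingsetyel{\progset}(P) \subseteq Q\}$, so $\semsingsetyel{\progset}$ uniquely determines $\abstrsapp{\itrp_a}{\progset}$, giving $\abstrs{\itrp_a} \preceq \semsingsetyel{\cdot}$. Conversely, $\semsingsetyel{\progset}(P)$ can be recovered as $\bigcap \{Q \mid (P,Q) \in \abstrsapp{\itrp_a}{\progset}\}$ (the smallest postcondition implied by the triples at precondition $P$), yielding $\semsingsetyel{\cdot} \preceq \abstrs{\itrp_a}$.

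The next step is to prove a general lemma: if $\abstrs{A} \equiv \abstrs{B}$ via translation functions $f$ (with inverse $g$) per Definition~\ref{def:granularity}, then $\abstrs{A}$ admits a compositional characterization over $\opset$ if and only if $\abstrs{B}$ does. One direction is witnessed by defining $f^B_\op(\vec{d}) = f(f^A_\op(g(d_1), \ldots, g(d_n)))$ and $f^B_\cup(\vec{e}) = f(f^A_\cup(g(e_1), \ldots, g(e_n)))$, which mechanically converts any compositional characterization of $\abstrs{A}$ into one for $\abstrs{B}$; the reverse direction is symmetric. Applying this lemma together with Theorems~\ref{thm:ag_loop_free_comp} and~\ref{thm:sing_noninductive} to $\semsingsetyel{\cdot}$ immediately delivers both the compositionality of $\abstrs{\itrp_a}$ over non-while constructors and its noncompositionality over while.

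The main obstacle is being precise about bookkeeping: I would need to verify that the translation functions $f$ and $g$ work uniformly at each type $T \in \{\Stmtvars, \Expvars, \Boolvars\}$ even though $\abstrs{\itrp_a}$ has a different signature than $\semsingsetyel{\cdot}$ (sets of $(P,Q)$ pairs versus functions $2^{\State} \to 2^{\State}$), and I would need to ensure the correspondence between Cartesian triples and $\semsingsetyel{\cdot}$ truly holds when restricted to just the top-level precondition $t \in \progset \wedge x \in P$---which requires making precise in the definition of $\abstrs{\itrp_a}$ that only the top-level triple (not intermediate preconditions arising from unrolling recursive calls to $\itrp$) enters the semantics.
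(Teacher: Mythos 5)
Your proposal is correct and follows essentially the same route as the paper: establish that the true Cartesian triples are exactly the pairs $(P,Q)$ with $\semsingsetyel{\progset}(P) \subseteq Q$ (so each semantics determines the other), and then transfer the known compositionality/noncompositionality results for $\semsingsetyel{\cdot}$ across the granularity equivalence. The only difference is that you spell out the transfer lemma (with explicit translation functions) that the paper merely asserts as ``compositionality over each constructor is preserved between equivalently granular semantics,'' which is a reasonable amount of extra rigor rather than a different argument.
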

\begin{proof}
Because a triple $\triple{\mathtt{t} \in \progset \wedge \mathtt{x} \in P}{y = \itrp(t, x)}{Q}$ is equivalent to $\semsingsetyel{\progset}(P) \subseteq Q$, $\abstrs{\itrp_a} \preceq \semsingsetyel{\cdot}$. Because $\triple{\mathtt{t} \in \progset \wedge \mathtt{x} \in P}{y = \itrp(t, x)}{\semsingsetyel{\progset}(P)}$ is a true triple for every $P$, $\abstrs{\itrp_a} \succeq \semsingsetyel{\cdot}$. So $\abstrs{\itrp_a}$ and $\semsingsetyel{\cdot}$ are equivalently granular. Since compositionality over each constructor is preserved between equivalently granular semantics, the noncompositionality over while of $\semsingsetyel{\cdot}$ implies the noncompositionality over while of $\abstrs{\itrp_a}$. Similarly, the compositionality of $\semsingsetyel{\cdot}$ over all other constructors implies the compositionality of $\abstrs{\itrp_a}$ over all non-while constructors.
\end{proof}

We are now ready to state and prove the necessity of non-Cartesian triples:
\begin{theorem}[Cartesian Reasoning is Incomplete]\label{thm:itrp_noncart}
    Let $\itrp: \Prog \times \State \rightarrow \State$ be a correct recursive interpreter. Then there exists a Cartesian triple $\triple{t \in \progset \land x \in P}{y = \itrp(t, x)}{y \in Q}$ which cannot be proven by any Hoare-style analysis that uses only Cartesian triples.
\end{theorem}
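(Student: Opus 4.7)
}

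The plan is to prove this by contradiction, leveraging the noncompositionality of $\abstrs{\itrp_a}$ (equivalently, of $\semsingsetyel{\cdot}$) established in \Cref{prop:itrp_noncomp}. Suppose, for contradiction, that every true Cartesian triple of the form $\triple{t \in \progset \land x \in P}{y = \itrp(t,x)}{y \in Q}$ is provable in some Hoare-style proof system $\mathcal{H}$ that uses only Cartesian triples. I will show that this would yield a compositional characterization of $\abstrs{\itrp_a}$ over $\mathsf{while}$, contradicting \Cref{prop:itrp_noncomp}.

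First, I will re-use the sets of guards $B_1 = \{x == 1, \neg(x == 1)\}$ and $B_2 = \{x == 2, \neg(x == 2)\}$ from the proof of \Cref{thm:sing_noninductive}, with $S = \{\Eassign{x}{x+1}\}$, giving $W_1 = \Ewhile{B_1}{S}$ and $W_2 = \Ewhile{B_2}{S}$. As shown there, $\semsingsetyel{B_1} = \semsingsetyel{B_2}$ and $\semsingsetyel{S} = \semsingsetyel{S}$, yet $\semsingsetyel{W_1} \neq \semsingsetyel{W_2}$. Concretely, the triple $\triple{t \in W_1 \land x = 0}{y = \itrp(t,x)}{y[x] \in \{0, 1\}}$ is true, whereas $\triple{t \in W_2 \land x = 0}{y = \itrp(t,x)}{y[x] \in \{0, 1\}}$ is false (since $W_2$ can produce $x = 2$).

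The key step is a \emph{syntactic substitution lemma}: any derivation in $\mathcal{H}$ of the Cartesian triple for $W_1$ can be mechanically transformed into a derivation of the Cartesian triple for $W_2$ by replacing every occurrence of $B_1$ with $B_2$ (and every loop/subterm correspondingly), while preserving the validity of every intermediate triple. This substitution works because Cartesian preconditions constrain $t$ only syntactically (through the RTG structure inherited from $\progset$) and constrain $x$ only through predicates that do not reference $\sem{t}$; hence the validity of any intermediate Cartesian triple about a subprogram drawn from $B_1$ or $S$ depends only on $\semsingsetyel{B_1}$ and $\semsingsetyel{S}$, which coincide with $\semsingsetyel{B_2}$ and $\semsingsetyel{S}$. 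Since $\itrp$ is a recursive interpreter, every recursive invocation $\itrp(t', x')$ it emits is a syntactic composition of subprograms of $t$; the substitution $B_1 \mapsto B_2$ commutes with these syntactic compositions, so every Cartesian precondition along the derivation is preserved, as is the induced postcondition on $y$.

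The final step will be to derive the contradiction: by the substitution lemma, $\mathcal{H}$ proves the false triple $\triple{t \in W_2 \land x = 0}{y = \itrp(t,x)}{y[x] \in \{0, 1\}}$, contradicting the soundness of $\mathcal{H}$. Hence $\mathcal{H}$ cannot be complete for true Cartesian triples. The main obstacle will be making the substitution lemma precise: we must verify that the definition of Cartesian precondition genuinely forbids any ``back-channel'' through which the proof could witness the difference between $B_1$ and $B_2$ (for instance, via auxiliary variables $a_1,\ldots,a_m$ assigned by $\itrp$ before the recursive call, or via the postconditions of non-recursive triples that mention $\progset$). Handling these requires a careful structural induction on the derivation, using the restriction that non-recursive triples remain valid when $\progset$ is replaced by any $\progset'$; this restriction, together with the form of Cartesian preconditions, ensures that syntactic replacements of semantically-equivalent subprograms propagate uniformly through the proof.
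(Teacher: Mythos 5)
Your route differs from the paper's in mechanics, though it rests on the same underlying insight. The paper does not transform a concrete derivation for $W_1$ into one for $W_2$; instead it shows that if every true Cartesian triple were provable with only Cartesian triples, then one could \emph{decide} $\abstrsapp{\itrp_a}{\Ewhile{B}{S}}$ from $\abstrsapp{\itrp_a}{B}$ and $\abstrsapp{\itrp_a}{S}$ alone, by unrolling every recursive call of $\itrp$ down to calls on $b$ and $s$ and observing that the correctness of the (possibly infinite) unrolled proof tree can be certified using only those two denotations (leaves are either calls on $b$ or $s$, calls on constant programs, or non-interpreter code that must hold for every $\progset$). That yields compositionality of $\abstrs{\itrp_a}$ over $\mathsf{while}$, contradicting \Cref{prop:itrp_noncomp}, which in turn encapsulates the $B_1/B_2$ counterexample you invoke directly. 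Your version buys concreteness (an explicit false triple and a contradiction with soundness rather than with compositionality); the paper's version buys generality and sidesteps the need to reason about how a specific derivation is rewritten.

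That said, your substitution lemma has two points that need repair before the argument goes through. First, it cannot ``preserve the validity of every intermediate triple'': the root triple itself becomes false after substitution (that is the whole point), and so do intermediate triples about programs that contain the entire loop $t$ again---e.g.\ the recursive call $\itrp(\Ewhile{b}{s}, x_1)$ in \Cref{fig:interpreter} or $\itrp(\Eifthenelse{b}{s;t}{\Eskip},x)$ in \Cref{fig:interpreter2}---since $\semsingsetyel{W_1} \neq \semsingsetyel{W_2}$. What your transformation can preserve is \emph{derivability}: validity of the leaf axioms (which concern only $b\in B$, $s\in S$, constant programs, or non-interpreter code, and hence depend only on $\semsingsetyel{B_1}=\semsingsetyel{B_2}$ and $\semsingsetyel{S}$) together with the soundness of each internal rule instance; you then contradict soundness at the root. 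Second, ``a careful structural induction on the derivation'' is not available as stated: the triples about the loop itself are not axioms but are justified by further recursive analysis, so to push the substitution to the leaves you must first unroll the recursion, and the resulting tree is in general infinite; this is precisely the unrolling step the paper makes central, and your argument needs it (or a coinductive formulation) to be well-founded. With those two repairs---restating the lemma as preservation of leaf-axiom validity and rule-instance validity over the unrolled tree---your approach does yield the theorem.
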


\begin{proof}
    Suppose we have a correct recursive interpreter $\itrp: \Prog \times \State \rightarrow \State$ for which every Cartesian triple can be proven using only Cartesian triples. We will show that this implies compositionality of $\abstrs{\itrp}$ over while, a contradiction of \Cref{prop:itrp_noncomp}.

    Let $B$ and $S$ be given, and let $W = \Ewhile{B}{S}$. We will determine $\abstrsapp{\itrp_a}{W}$ from $\abstrsapp{\itrp_a}{B}$ and $\abstrsapp{\itrp_a}{S}$. Let a Cartesian triple $\triple{t \in ? \land x \in P}{y = \itrp(t, x)}{y \in Q}$ be given. 
    
    If the triple is true, then by hypothesis there is a Hoare-style analysis of $\itrp$ using only Cartesian triples that proves it. 
    Suppose we unroll the recursive calls of $\itrp$ in this proof, stopping only at calls of the interpreter on the guard and/or body of $t$ -- i.e., $\itrp(b, \cdot)$ and $\itrp(s, \cdot)$.
    The resulting unrolled proof, which may be infinite, is such that every recursive invocation of $\itrp$ is either:
    \begin{itemize}
        \item Of the form $\itrp(b, \cdot)$ or $\itrp(s, \cdot)$.
        \item Of the form $\itrp(\prog, \cdot)$ for a constant program $\prog$.
        \item Determined by the results of further recursive calls to $\itrp$.  
    \end{itemize}
    This is easy to justify. If there were a call $\itrp(f(b,s), x')$ with non-constant $f$ that does not recursively invoke interpret, then it knows the behavior of the program $f(b,s)$ without looking at said program! But this is impossible, because every non-constant program constructor can produce functions that differ at a given $x'$. 

    It helps to visualize our Hoare-style analysis as a proof tree. 
    Unrolling a proof as in \Cref{fig:interpreter} would lead to an infinite tree of unrolled calls that the proof explores (we abbreviate $\itrp$ as $\mathtt{itrp}$):
    {\tiny 
        \begin{prooftree}
            \AxiomC{$\triple{...}{b_1 = \mathtt{itrp}(b, x)}{...}$\kern-2.5em}
            \AxiomC{$\triple{...}{r = x}{...}$\kern-2.5em}
            \AxiomC{$\triple{...}{x1 = \mathtt{itrp}(s, x)}{...}$\kern-20em}
            \AxiomC{$\triple{...}{b_1 = \mathtt{itrp}(b, x)}{...}$\kern-2.5em}
            \AxiomC{$\triple{...}{r = x}{...}$\kern-2.5em}
            \AxiomC{$\triple{...}{x1 = \mathtt{itrp}(s, x)}{...}$\kern-2.5em}
             \AxiomC{$\vdots$}
             \UnaryInfC{$\triple{...}{\mathtt{itrp}(\Ewhile{b}{s}, x_1)}{...}$}
            \QuaternaryInfC{$\triple{...}{\text{match t with ... While b do s}}{...}$}
             \UnaryInfC{$\triple{...}{\mathtt{itrp}(\Ewhile{b}{s}, x_1)}{...}$}
            \QuaternaryInfC{$\triple{...}{\text{match t with ... While b do s}}{...}$}
            \UnaryInfC{$\triple{t \in W \land x \in P}{y = \mathtt{itrp}(t, x)}{y \in Q}$}
        \end{prooftree}
    }
    
    To check correctness of the proof tree, we must check that all axioms are correct and that all non-axioms follow from their hypotheses. It is easy to check that every non-axiom triple in the proof tree is implied by its hypotheses -- this requires no knowledge of $W$ except that it is of the form $\Ewhile{B}{S}$. Correctness of the axioms can be checked with only $\abstrsapp{\itrp_a}{B}$ and $\abstrsapp{\itrp_a}{S}$ because every axiom is either \rone a Cartesian triple on $\itrp(\prog, \cdot)$ for a constant $\prog$, which requires no knowledge of $W$, \rtwo a triple about a program that is not a recursive call to $\itrp$ and thus holds if we replace $W$ by any $\progset$, requiring no knowledge of $W$, or \rthree a Cartesian triple on $\itrp(b, \cdot)$ or $\itrp(s, \cdot)$ which is determined by $\abstrsapp{\itrp_a}{B}$ or $\abstrsapp{\itrp_a}{S}$.
    
    Thus, if $\triple{t \in W \land x \in P}{y = \itrp(t, x)}{y \in Q}$ is provable using only Cartesian triples, there exists an unrolled proof of this form whose correctness we can certify. Obviously, if the triple is not true then no correct proof will exist. 

    Now, the procedure is simple. Given $\triple{t \in ? \land x \in P}{y = \itrp(t, x)}{y \in Q}$, $\abstrsapp{\itrp_a}{B}$, and $\abstrsapp{\itrp_a}{S}$, we search for a correct unrolled proof over $\itrp$ of the sort described. If we find one, then $\triple{t \in ? \land x \in P}{y = \itrp(t, x)}{y \in Q} \in \abstrsapp{\itrp_a}{\Ewhile{B}{S}}$. If we do not, then $\triple{t \in ? \land x \in P}{y = \itrp(t, x)}{y \in Q} \notin \abstrsapp{\itrp_a}{\Ewhile{B}{S}}$. In this way, we can determine $\abstrsapp{\itrp_a}{\Ewhile{B}{S}}$ from $\abstrsapp{\itrp_a}{B}$ and $\abstrsapp{\itrp_a}{S}$. So, $\abstrs{\itrp_a}$ is compositional over while, a contradiction of \Cref{prop:itrp_noncomp}. 

\end{proof}
While the inability of Cartesian reasoning to prove all Cartesian triples does not mean that conducting a Hoare-style analysis of an interpreter over an inductively defined set of programs containing loops is impossible, it does mean that a certain amount of reasoning about the semantic meaning of syntactic programs will need to occur at the logical level (i.e., in a non-Cartesian fashion).

\paragraph{Completing $\itrp$ for Cartesian Triples}

We have seen that Hoare-style analyses of $\itrp: \Prog \times \State \rightarrow \State$ that use only Cartesian triples cannot prove all Cartesian triples. This insufficiency holds because the signature of $\itrp$ induces a noncompositional semantics. A natural question is whether we can extend the signature of $\itrp$ to induce a compositional semantics, and whether Cartesian reasoning over such an extended interpreter might be complete. As it turns out, we can.

Recall that the \yellow vector-state semantics $\semweaksetyel{\cdot}$ is a compositional semantics finer than the \yellow program-agnostic semantics $\semsingsetyel{\cdot}$. Since the \yellow program-agnostic semantics corresponds to the signature $\itrp: \Prog \times \State \rightarrow \State$, perhaps extending the signature to $\itrp_v: \Prog \times \State^* \rightarrow \State^*$ will induce a semantics equivalent to $\semweaksetyel{\cdot}$. As it turns out, this is the case. 

\begin{definition}[Vector-State Interpreter Semantics $\abstrs{\itrp_v}$]
    A correct interpreter $\itrp_v: \Prog \times \State^* \rightarrow \State^*$ induces the semantics:
\begin{equation*}
    \abstrsapp{\itrp_v}{\progset} = \set{\triple{\mathtt{t} \in\ ? \wedge P(\mathtt{x})}{y = \itrp(t, x)}{Q(y)} \mid \text{the triple is true when $? = \progset$}}
\end{equation*}
where $x$ and $y$ are vector-states, and $P$ and $Q$ are predicates over vector-states.
\end{definition}

The correspondence between $\abstrs{\itrp_v}$ and $\semweaksetyel{\cdot}$ is immediate:
\begin{proposition}[$\abstrs{\itrp_v} \equiv \semweaksetyel{\cdot}$ Are Compositional]
    The semantics $\abstrs{\itrp_v}$ and $\semweaksetyel{\cdot}$ are equivalently granular. Moreover, both semantics are compositional.
\end{proposition}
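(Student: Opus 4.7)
The plan is to mirror the proof of Proposition~\ref{prop:itrp_noncomp} ($\abstrs{\itrp_a} \equiv \semsingsetyel{\cdot}$), replacing single-state inputs/outputs with vector-states. First I will establish granularity equivalence by proving the two inclusions $\abstrs{\itrp_v} \preceq \semweaksetyel{\cdot}$ and $\abstrs{\itrp_v} \succeq \semweaksetyel{\cdot}$. For the first inclusion, note that a Cartesian vector-state triple $\triple{\mathtt{t} \in \progset \wedge \mathtt{x} \in P}{y = \itrp_v(t,x)}{y \in Q}$ where $P, Q \subseteq \State^*$ is true exactly when $\semweaksetyel{\progset}(P) \subseteq Q$, by correctness of $\itrp_v$ viewed as an interpreter over vector-states. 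Thus, given $\semweaksetyel{\progset}$, one can decide membership of any such triple in $\abstrsapp{\itrp_v}{\progset}$, yielding a function $f$ with $f(\semweaksetyel{\progset}) = \abstrsapp{\itrp_v}{\progset}$. For the reverse inclusion, observe that for every $P \subseteq \State^*$, the triple $\triple{\mathtt{t} \in \progset \wedge \mathtt{x} \in P}{y = \itrp_v(t,x)}{y \in \semweaksetyel{\progset}(P)}$ is a true Cartesian triple and therefore belongs to $\abstrsapp{\itrp_v}{\progset}$; scanning through all such $P$ lets us recover $\semweaksetyel{\progset}$ from $\abstrsapp{\itrp_v}{\progset}$.

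Having shown granularity equivalence, the second claim follows immediately from the fact that compositionality (Definition~\ref{def:inductivity}) is preserved between equivalently granular semantics: if $\abstrs{A} \preceq \abstrs{B}$ and $\abstrs{B} \preceq \abstrs{A}$, then a compositional characterization of one can be transported to a compositional characterization of the other by composing with the granularity-bridging functions. Since $\semweaksetyel{\cdot}$ admits a compositional characterization by Theorem~\ref{thm:weak_compositional_yel}, so does $\abstrs{\itrp_v}$.

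The main obstacle I expect is not a technical difficulty in the proof itself (which is essentially a straightforward vectorization of Proposition~\ref{prop:itrp_noncomp}), but rather specifying precisely what a ``correct'' implementation of $\itrp_v: \Prog \times \State^* \rightarrow \State^*$ means, given that $\itrp_v$ must now be capable of returning an arbitrary vector-state in $\semweaksetyel{\prog}(P)$ for nondeterministic sets of inputs. One needs to carefully state that $\itrp_v$ is a (possibly nondeterministic) interpreter whose set of possible outputs on $(\prog, v)$ is exactly $\semweaksetyel{\prog}(\{v\})$, so that Hoare triples about $\itrp_v$ precisely characterize the vector-state semantics. Once this correctness condition is pinned down, both inclusions and the transfer of compositionality go through as sketched above.
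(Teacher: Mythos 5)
Your proof is correct and follows essentially the same route as the paper's: both directions of granularity equivalence come from the observation that a Cartesian vector-state triple holds iff $\semweaksetyel{\progset}(P) \subseteq Q$ (with the triple whose postcondition is exactly $\semweaksetyel{\progset}(P)$ witnessing the reverse direction), and compositionality then transfers from $\semweaksetyel{\cdot}$ (Theorem~\ref{thm:weak_compositional_yel}) because equivalently granular semantics share compositionality. Your closing remark about pinning down the (nondeterministic) correctness condition for $\itrp_v$ matches how the paper handles it in its implementation discussion, so no gap remains.
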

\begin{proof}
Because a triple $\triple{\mathtt{t} \in \progset \wedge \mathtt{x} \in P}{y = \itrp(t, x)}{Q}$ is equivalent to $\semweaksetyel{\progset}(P) \subseteq Q$, $\abstrs{\itrp_v} \preceq \semweaksetyel{\cdot}$. Because $\triple{\mathtt{t} \in \progset \wedge \mathtt{x} \in P}{y = \itrp(t, x)}{\semweaksetyel{\progset}(P)}$ is a true triple for every $P$, $\abstrs{\itrp_v} \succeq \semsingsetyel{\cdot}$. So $\abstrs{\itrp_v}$ and $\semsingsetyel{\cdot}$ are equivalently granular. Since compositionality is preserved between equivalently granular semantics, the compositionality $\semweaksetyel{\cdot}$ implies the compositionality of $\abstrs{\itrp_v}$.
\end{proof}

\begin{figure}\label{fig:interpretervs}
 \begin{lstlisting}[linewidth=\linewidth]
State^* interpret_vs (t: Term, x_1: State, \cdots, x_n: State) {
 match t with 
   ...
   case While b do s: {
     State^n r;
     (State^*)^n executions = (nondet_finite_vec_starting_with(x_1), ..., nondet_finite_vec_starting_with(x_n));
     (State^*)^n guard_executions = (interpret(b, executions_1), ..., interpret(b, executions_n));
     if (\forall j, i. guard_executions[j][i].b_t[:-2] = t && guard_executions[j][i].b_t[-1] = f) {
       (State^*)^n body_executions = (interpret(s, executions_1), ..., interpret(s, executions_n));
       if (\forall j, i. body_executions[j][i] = body_executions[j][i+1]) {
           return (body_executions[1][-1], ..., body_executions[n][-1]);
        }
        else {
            abort;
        }
     }
     else {
       abort;
     }
   }
   ...
}
 \end{lstlisting}
 \caption{Part of an interpreter that interprets $\Ewhile{B}{S}$ 
 for the language described in Figure~\ref{fig:gimp}.
 We assume that $\mathtt{State}$ and $\mathtt{Term}$ that 
 respectively capture the types of states and terms.
 }\label{fig:vs_interpret}
\end{figure}

An implementation of such an interpreter is given in \Cref{fig:interpretervs}. The interpreter's implementation closely follows the proof that $\semweaksetyel{\cdot}$ admits a compositional characterization (\Cref{thm:weak_compositional_yel}). Because the while operator $f_{while}$ in the proof of compositionality checks for the existence of a satisfactory finite vector-state in $\State^*$, our implementation is nondeterministic, choosing a vector-state and returning only if the vector-state is satisfactory. Specifically, for an input $(\Ewhile{b}{s}, [x_1, \cdots, x_n])$, the algorithm nondeterministically chooses $n$ sequences of states that could model the execution of $\Ewhile{b}{s}$ on each $x_j$ (Line 6). Then, the algorithm checks that, for each possible execution, $b$ is true on every state except the last one where the loop converges (Line 7 and 8). If this fails then the executions are incorrect, so we abort. Else, we check whether, for each possible execution, $s$ sends each state in the execution to the next state in the execution (Lines 9 and 10). If this fails then the executions are incorrect, and we abort. If both checks succeed, then the nondeterministically chosen executions do in fact describe how the behavior of the loop on $[x_1, \cdots, x_n]$, and we return the appropriate output.\footnote{Note that the existential nondeterminism can actually be determinized by interleaving computations of each nondeterministic choice and adding a ``number of steps'' parameter to the interpreter to avoid diverging on irrelevant inputs. The construction is straightforward but fairly tedious.}

Over such an interpreter, Cartesian analysis is complete.
\begin{theorem}[Cartesian Reasoning is Complete over $\itrp_v$]\label{thm:itrp_vs_cart}
   Every true Cartesian triple $\triple{t \in \progset \land x \in P}{y = \itrp(t, x)}{y \in Q}$ can be proven by a Hoare-style analysis of $\itrp_v$ that uses only Cartesian triples.
\end{theorem}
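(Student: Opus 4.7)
The plan is to prove Cartesian completeness for $\itrp_v$ by constructively exhibiting, for every true Cartesian triple $\triple{t \in \progset \land x \in P}{y = \itrp_v(t,x)}{y \in Q}$, a Hoare-style proof whose recursive-call preconditions are all Cartesian. The proof will proceed by induction on the regular-tree-grammar derivation of $\progset$, following the compositional characterization of $\semweaksetyel{\cdot}$ given in Figure~\ref{fig:weak_compositional}. Because $\abstrs{\itrp_v}$ is equivalent in granularity to $\semweaksetyel{\cdot}$, the target triple is true iff $\semweaksetyel{\progset}(P) \subseteq Q$, so every true Cartesian triple corresponds exactly to an inclusion in a compositional semantics, and that semantics prescribes how to decompose the inclusion into inclusions over subprograms.

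The inductive step uses the standard machinery of recursive Hoare logic with function summaries~\cite{nipkow}: for any recursive invocation $\itrp_v(b, \cdot)$ or $\itrp_v(s, \cdot)$ on a strict subprogram of $t$, we may assume as inductive hypothesis any true Cartesian triple about the appropriate subset $B$ or $S$ of the relevant RTG nonterminal. For non-\textsf{while} constructors the translation is immediate: Figure~\ref{fig:weak_compositional} expresses $\semweaksetyel{\alpha(\progset_1,\ldots,\progset_k)}$ as an aggregation of $\semweaksetyel{\progset_j}$ values, and the corresponding match-arm of the interpreter dispatches to the recursive calls without ever inspecting the syntax of the subterms, so the Hoare derivation composes Cartesian triples $\utriplevs{P_j}{\progset_j}{Q_j}$ through standard sequential and conditional rules.

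For the \textsf{while} case, the interpreter in Figure~\ref{fig:vs_interpret} operates by (i) nondeterministically guessing a vector of finite traces stored in the local \texttt{executions}, (ii) recursively calling $\itrp_v(b,\cdot)$ and $\itrp_v(s,\cdot)$ on vector-states built from those traces, and (iii) validating the results against the loop semantics with two guarded \texttt{abort}s. The key observation is that each recursive call has the form $\itrp_v(b,\vec v)$ or $\itrp_v(s,\vec v)$ where $\vec v$ is a vector-state constructed from the interpreter's \emph{program variables} rather than from the syntactic subterms $b$ or $s$. Hence the precondition at each recursive call is of the Cartesian form $b \in B \land \vec v \in V$ (respectively $s \in S \land \vec v \in V$), for a set $V$ of vector-states determined by $P$ together with the nondeterministic guess, and the matching inductive hypothesis is itself a Cartesian triple over $B$ or $S$.

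The main obstacle will be formalizing the generate-and-test step so that the postcondition emitted by the \textsf{while} branch coincides exactly with $f^{\setweak}_{while}(\semweaksetyel{B},\semweaksetyel{S},P)$ from the proof of Theorem~\ref{thm:weak_compositional_yel}. Concretely, one must show that every realizable trace of $\Ewhile{B}{S}$ on an input vector in $P$ is among the nondeterministic guesses and survives both validations, and conversely that every surviving guess corresponds to a realizable trace; this is a direct unwinding of the definition of $f^{\setweak}_{while}$. Once that correspondence is established, the Hoare rule for nondeterministic choice lets us existentially quantify over the guess in the postcondition, yielding $y \in f^{\setweak}_{while}(\semweaksetyel{B},\semweaksetyel{S},P) \subseteq Q$ from the inductive hypotheses on $B$ and $S$; combined with the straightforward base cases (constants, variable lookup, and the other constructors), this closes the induction and proves that every true Cartesian triple about $\itrp_v$ admits a proof using only Cartesian triples.
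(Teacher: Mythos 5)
Your proposal is correct and follows essentially the same route as the paper's proof: a structural induction in which the decisive observation is that the recursive calls of $\itrp_v$ are made on the untangled subterms $b$ and $s$ with vector-states built from interpreter program variables, so the most precise preconditions at those calls are Cartesian ($b \in B \land \vec v \in V$), and the \textsf{while} case is discharged by matching the generate-and-test structure of the interpreter against $f^{\setweak}_{\mathit{while}}$ from the compositionality proof of \Cref{thm:weak_compositional_yel}. Your explicit appeal to recursive Hoare logic with function summaries and to the granularity equivalence $\abstrs{\itrp_v} \equiv \semweaksetyel{\cdot}$ only makes precise what the paper leaves implicit, not a different argument.
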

\begin{proof}
    The proof is a simple induction argument. Suppose we want to prove the Cartesian triple $\triple{t \in \progset \land P(x)}{y = \itrp_v(t, x)}{y \in Q}$ where $\progset = \op(C_1, \cdots, C_n)$.
    
    If $\itrp_v$ does not invoke $\itrp_v$ recursively, then one can check that the most precise proof uses only Cartesian triples.
    
    Otherwise, after one unrolling we have some recursive invocations of $\itrp$. One observes that the most precise precondition at each invocation is in fact Cartesian, \emph{so long as the subprograms are independent} (e.g., $b$ and $s$ do not depend on one another in $\Ewhile{b}{s}$). For example, in the While case, each call to $\itrp(b, executions_j)$ on Line 7 is captured precisely by the triple
    \begin{align*}
    \{t = \Ewhile{b}{s} \in \Ewhile{B}{S} \land executions_j \in \State* \land executions_j[0]=x_j\}\\{guard\_executions[j] = \itrp_v(b, executions_j)}\\\{guard\_executions[j] \in \semweaksetyel{B}(executions_j)\}
    \end{align*}
    The most precise triples on Line 9 are similarly Cartesian.
    Critically, the recursive calls in our implementation do not entangle subprograms. Thus, we do not make a recursive call like $\itrp(\Eifthenelse{b}{s;t}{\Eskip})$ like in \Cref{fig:interpreter2} which entangles the if-then-else guard with the then branch. Since the analyses of the recursive calls are Cartesian triples of calls that do not entangle subprograms, the proofs over the recursive calls are Cartesian by our inductive hypothesis, so the entire proof of $\triple{t \in \progset \land P(x)}{y = \itrp_v(t, x)}{y \in Q}$ is Cartesian as desired.
\end{proof}

While this interpreter supports complete Cartesian analysis, one pays a price by reasoning over unbounded vector-states and nondeterminism. Again, the ability to reason compositionally requires an increase in expressivity.
\section{\greenCaps Vector-State Semantics}
\label{app:VectorStateSemantics}

Next, we describe a \green vector-state semantics $\semweaksetgrn{\cdot}$, the minimal compositional semantics that is at least as fine as $\semsingsetgrn{\cdot}$. While similar to $\semweaksetyel{\cdot}$, $\semweaksetgrn{\cdot}$ must be able to reason about nontermination. This means that it must be able to explicitly represent divergence, and it must also be able to reason about infinitely many executions of loop bodies. Unlike $\semweaksetyel{\cdot}$, the vector states $\semweaksetgrn{\cdot}$ operates over may be infinite (but still computable) or contain an occurrence of $\divg$ (i.e., $[\statelist{\stateentry{x_1}{1}}, \statelist{\stateentry{x_1}{2}}, \statelist{\stateentry{x_1}{3}}, \cdots]$ and $[\statelist{\stateentry{x_1}{1}}, \statelist{\stateentry{x_1}{2}}, \divg]$ are vector-states we may wish to consider). 

\subsection{\texorpdfstring{$\semweaksetgrn{\cdot}$}{}: Divergence-Aware Vector-State Semantics} \label{app:vec_grn}

The tricky part the construction of $\semweaksetgrn{\cdot}$ is handling divergence---i.e., nontermination.
Observing divergence of carefully crafted loops ($\divg \in \semsingsetgrn{\Ewhile{b}{\cdots \progset \cdots}}$) will never give enough information to determine whether any $\prog \in \progset$ diverges 
on two distinct inputs. 
Similarly, inspecting the content of $\semweaksetgrn{\progset}$ cannot help us conclude whether a single program diverged on two inputs---i.e., we cannot take a program $\prog$ which diverges on $\statelist{\stateentry{x_1}{1}}$ and $\statelist{\stateentry{x_1}{2}}$ and determine that $\semweaksetgrn{\{\prog\}}([\statelist{\stateentry{x_1}{1}}, \statelist{\stateentry{x_1}{2}}]) = [\divg, \divg]$ by querying the program-agnostic semantics of sets of loops containing $\prog$. This issue inspires the first restriction on our vector-state semantics---outputs are truncated after the first occurrence of $\divg$.

Furthermore, consider a program $s_1$ that diverges only when $x_1 = 1$, 
a program $s_2$ diverges only when $x_1 = 2$, 
and a program $s_3$ that never diverges;  otherwise all three behave as 
identity. Then the program-agnostic semantics of the three sets of loops
$W_j$ ($j = 0, 1, 2$) that each run the set of bodies 
$S = \{s_1, s_2, s_3\}$ on the sequence 
$[\statelist{\stateentry{x_1}{0}}, \cdots, \statelist{\stateentry{x_1}{j}}]$ give us enough information to identify that there is a program in $S$ with the behavior $[\statelist{\stateentry{x_1}{0}}, \divg]$ (i.e., because $\semsingsetgrn{W_0}$ always converges $\statelist{\stateentry{x_1}{0}}$ on the first input state but $\semsingsetgrn{W_j}$ may diverge).
However, these $\semsingsetgrn{W_j}$ ($j = 0, 1, 2$) will never give us enough information to see that
$[\statelist{\stateentry{x_1}{0}}, \statelist{\stateentry{x_1}{1}}, \divg]$ is a possible outcome 
because although $\semsingsetgrn{W_1}$ and $\semsingsetgrn{W_2}$ may both diverge, we cannot determine whether there is divergence in $\semsingsetgrn{W_2}$ from loop bodies which do not cause divergence in $\semsingsetgrn{W_1}$. We say that the divergence of $s_1$ \emph{occludes} the later divergence of $s_2$.
One can see this by observing that the three $\semsingset{W_j}$ 
are unchanged if we remove $s_2$.
This occlusion constitutes half of our second restriction---captured 
formally in \Cref{def:reduction}.

A similar phenomenon of occlusion happens when we want to test the behavior of $\progset$ 
on an infinite sequence of states. 
In this case, checking the program-agnostic semantics of various sets of loops
does not allow us to distinguish divergence due to divergence in the body from 
divergence due to the body being run infinitely many times. 
Thus, if a loop $(\Ewhile{\Et}{s;x_1:=x_1+1})$ diverges on a given input, 
we cannot distinguish divergence of $s$ from convergence of $s$ on infinitely many states 
because the loop diverges in both cases.
This second type of occlusion is the other half of our second restriction, captured formally 
in \Cref{def:reduction}.



Because we are searching for the coarsest semantics satisfying our criteria, the restrictions imposed on $\semweaksetgrn{\cdot}$ naturally limit its expressivity. The cost of the truncation we describe is that we cannot identify when there is a single program in a set that diverges on two or more particular inputs. The cost of our occlusion is that, although we can tell when a set of loops may diverge, we can see only the most immediate cause of divergence. 
It turns out that the ability to see the most immediate cause of divergence is a minimal requirement to compute the semantics of a set of loops in a compositional fashion.
Also, the lost behaviors are not required for a compositional semantics finer than $\semsingsetgrn{\cdot}$.
Those interested in such behaviors should refer to the program-aware semantics for sets of programs defined in \Cref{def:full_sem_set}.


\subsubsection{Formalization of $\semweaksetgrn{\cdot}$}  \label{sec:weak_vs_formalization}
  Building upon the intuition just described, we now 
  formalize vector-states and $\semweaksetgrn{\cdot}$.
In the following definitions, we use $\State^*$ and $\State^\omega$ to denote the set of finite and infinite sequences over $\State$, respectively.
We use  $\State^*\cdot \{\divg\}$ to denote the set of finite sequences where all but the last element are states, and the last element is $\divg$; we sometimes refer to such sequences as \textit{diverging}.
We call an infinite sequence of states $v = [{\state_1, \state_2, \cdots}]$ \emph{computable} if the function mapping $(i, v) \in \N \times \vars$ to the values of $v[i][v]$ is computable.
We use $\vinf \subseteq \State^\omega$ to denote the set of computable infinite sequences of states.

\begin{definition} [$\vsvars$] \label{def:vs_vars}     
    An \textit{extended vector state} is an element of the set $$\vsvars = \State^* \cup (\State^* \cdot \{\divg\}) \cup V_{Inf}$$    
\end{definition}
Informally an extended vector state can be \rone a finite sequence of states, \rtwo a finite sequence of states followed by a diverging state, or \rthree an infinite, but still computable sequence of states.

As in \Cref{sec:weak_vs_intuition}, a vector-state 
$v \in \vsvars$ will sometimes \emph{occlude} another vector-state $u \in \vsvars$.
For example, consider $v = [\statelist{\stateentry{x_1}{1}}, \divg]$, and $u = [\statelist{\stateentry{x_1}{1}}, \statelist{\stateentry{x_1}{2}}, \divg]$.
Intuitively, the interpretation of $v$ is that entries further than $v[0]$ are 
\emph{unknown}, because the semantics have determined that values after 
$v[0]$ may (not must!) diverge.
From this interpretation, it follows that $v$ actually corresponds to many concrete vector-states: 
for example, $[\statelist{\stateentry{x_1}{1}}, \statelist{\stateentry{x_1}{2}}, \divg]$, $[\statelist{\stateentry{x_1}{1}}, \statelist{\stateentry{x_1}{2}}, \statelist{\stateentry{x_1}{3}}, \divg]$, and so on, as well as infinite vectors starting with $\statelist{\stateentry{x_1}{1}}$ (e.g., $[\statelist{\stateentry{x_1}{1}}, \statelist{\stateentry{x_1}{2}}, \cdots]$).
One may also say that $[\statelist{\stateentry{x_1}{1}}, \divg]$ can be interpreted as describing a class of vector-states.

We capture this notion of occlusion in Definition~\ref{def:reduction} with the function $\Red$. Given a set of vector-states $V \subseteq \vsvars$, 
$\Red(V)$ collapses this $V$ into the coarsest classes possible, throwing out information we are not ``allowed'' to see. Thus, the elements in $\Red(V)$ are those elements unoccluded in $V$.


For the purposes of our definition,
we say a finite sequence $[v_1,\cdots,v_k,\divg]$ is an 
occluder
of any sequence $u=[v_1,\cdots,v_k, v_{k+1},\cdots] \in \State\cdot\{\divg\} \cup \vinf$.
For example, $[\statelist{\stateentry{x_1}{1}}, \divg]$ is an 
occluder
of itself and of both $[\statelist{\stateentry{x_1}{1}}, \statelist{\stateentry{x_1}{2}}, \divg]$ and  $[\statelist{\stateentry{x_1}{1}}, \statelist{\stateentry{x_1}{2}}, \statelist{\stateentry{x_1}{3}},\cdots]$, but not of $[\statelist{\stateentry{x_1}{1}}, \statelist{\stateentry{x_1}{4}}, \statelist{\stateentry{x_1}{3}}]$.

\begin{definition} [Reduction of a Set of Vector-States] \label{def:reduction}    
    Given a set of extended vector states $V \subseteq \vsvars$ we define the \textit{reduced set of states} $\Red(V)$ as 
    the set obtained by replacing each sequence $u \in V\cap (\State\cdot\{\divg\} \cup \vinf)$ with its shortest 
    occluder
    in $V$.
\end{definition}
$\Red(V)$ induces an equivalence relation on $V$.
Intuitively, every vector-state in $\State^*$ is its own ``class'', and every vector-state in $\State\cdot\{\divg\} \cup \vinf$ falls into the ``class'' described by its shortest 
occluder in $V$. 

\begin{example}[Reduction]
Consider the following set of vector-states:
  \[
  \begin{split}
  V_1 = \{ &\mathcolor{dgreen}{[\statelist{\stateentry{x_1}{1}}, \divg]}, 
          \mathcolor{dgreen}{[\statelist{\stateentry{x_1}{1}}, \statelist{\stateentry{x_1}{2}}, \divg]}, 
          \mathcolor{cyan}{[\statelist{\stateentry{x_1}{2}}, \divg]} \\
         & \mathcolor{dred}{[\statelist{\stateentry{x_1}{1}}, \statelist{\stateentry{x_1}{2}}]}, 
         \mathcolor{dgreen}{[\statelist{\stateentry{x_1}{1}}, \statelist{\stateentry{x_1}{2}}, \statelist{\stateentry{x_1}{3}}, \cdots]}
         \}
  \end{split}
  \]
  The equivalence classes induced by $\Red(V_1)$ are color-coded.
  $\mathcolor{dgreen}{[\statelist{\stateentry{x_1}{1}}, \divg]}$ is the shortest 
  occluder
  of both $\mathcolor{dgreen}{[\statelist{\stateentry{x_1}{1}}, \statelist{\stateentry{x_1}{2}}, \divg]}$ and $\mathcolor{dgreen}{[\statelist{\stateentry{x_1}{1}}, \statelist{\stateentry{x_1}{2}}, \statelist{\stateentry{x_1}{3}}, \cdots]}$.
  Thus it follows that:
  \[
  \Red(V_1) = \set{\mathcolor{dgreen}{[\statelist{\stateentry{x_1}{1}}, \divg]}, \mathcolor{cyan}{[\statelist{\stateentry{x_1}{2}}, \divg]}, \mathcolor{dred}{[\statelist{\stateentry{x_1}{1}}, \statelist{\stateentry{x_1}{2}}]}}
  \]
\end{example}

We can now give a formal definition of $\semweaksetgrn{\cdot}$.

\begin{definition} [\greenCaps Vector-State Semantics of a Program] \label{def:weak_sem_prog}
    The \emph{\green vector-state semantics} for programs $\semweakproggrn{\cdot}$ over the signature $\weak(2^\vsvars \rightarrow 2^\vsvars)$ is defined as follows.
    We first define the semantics for singleton vector states.
    For a program $\prog \in T$ and extended vector state $[v_1,\ldots] \in \vsvars$, the semantics 
    $\semweakproggrn{\prog}(\{[v_1,\ldots]\}))$ is defined as follows:
    \begin{enumerate}
        \item If for every $v_i$ the program-agnostic semantics does not diverge---i.e., $\semsingproggrn{\prog}(v_i)\neq~\divg$---then 
        \[\semweakproggrn{\prog}(\{[v_1,\ldots]\}) = \{[\semsingproggrn{\prog}(v_1), \cdots]\}.\]
        \item If $k$ is the smallest index such that $\semsingproggrn{\prog}(v_i)=\,\divg$, then 
        \[\semweakproggrn{\prog}(\{v\}) = \{[\semsingproggrn{\prog}(v[1]), \cdots, \semsingproggrn{\prog}(v[k-1]), \divg]\}.\]
    \end{enumerate}

    Note that the entries of the vectors are technically singleton sets; we unpack them as states.
    
    For a set of vector states $V \subseteq \vsvars$, we define $\semweakproggrn{\prog}(V) = \Red(\bigcup\limits_{u \in V}\semweakproggrn{\prog}(\{u\}))$.
\end{definition}

Intuitively, $\semweakproggrn{\prog}(\{v\})$ (often written as $\semweakproggrn{\prog}(v)$) gives the output of $\prog$ on $v$ for each entry of $v$, up to the first instance of divergence and $\semweakproggrn{\prog}(V)$ computes the reduced set of vector states obtained by running every input in $V$.


\begin{example} [\greenCaps Vector-State Semantics is Imprecise]
\label{ex:weak_vs}
  For the program $w = (\Ewhile{x > 3}{\Eassign{x}{x + 1}})$, we have
  $\semweakproggrn{w}([1, 2, 3, 4]) = \{[1, 2, 3, \divg]\}$ 
  and $\semweakproggrn{w}([1, 2, 4, 8]) = \{[1, 2, \divg]\}$.
  Therefore, $\semweakproggrn{w}(\{[1, 2, 3, 4], [1, 2, 4, 8]\}) = \{[1, 2, \divg]\}$.

  As discussed, 
  the application of $\Red$ \emph{loses precision} 
  because it merges both $[1, 2, 3, \divg]$ and $[1, 2, \divg]$ 
  into $[1, 2, \divg]$; 
  The symbol $\divg$ appearing at the end of a vector-state indicates that the behavior beyond this point in the sequence is unknown.
  (indeed, the semantics suggests that, e.g., $[1, 2, 4, \divg]$ also might have been possible).  
\end{example}

The extension to a semantics over sets of programs retains the ideas of stopping at the first instance of divergence and occlusion. 
The semantics takes the usual union across all programs in the set and applies the reduction $\Red$ afterwards.

\begin{definition} [\greenCaps Vector-State Semantics of a Set of Programs] \label{def:weak_sem_set}
    The \textit{\green vector-state semantics over set of programs} $\semweaksetgrn{\cdot}_T$ over the signature
    $\setweak(2^\vsvars \rightarrow 2^\vsvars)$
    is defined as follows.
    For a set $V \subseteq \vsvars$, we have 
    \[\semweaksetgrn{\progset}(V) = \Red(\bigcup\limits_{\prog \in \progset} \semweakproggrn{\prog}(V)).\] 
\end{definition}


\begin{example}
    For all $v \in \vsvars$, $\semweaksetgrn{\{x:=x\}}(\{v\}) = \{v\}$.

    Let $s_j$ denote a program that behaves like the identity on all inputs except when $x_1 = j$, in which case it diverges.
    
    Then $\semweakset{\{s_1,s_2\}}(\{[\statelist{\stateentry{x_1}{4}}, \statelist{\stateentry{x_1}{2}}, \statelist{\stateentry{x_1}{1}}, \statelist{\stateentry{x_1}{5}}]\}) = \{[\statelist{\stateentry{x_1}{4}}, \divg]\}$.
    
    For $S_{id} = \{s_n \mid n \in \N\}$ and for every finite sequence $u$ that does not end in $\divg$, $\semweakset{S_{id}}(u) = \{[\divg], u\}$. 
    When $u$ is infinite or ends in $\divg$, $\semweakset{S_{id}}(u) = \{[\divg]\}$. 
\end{example}


\subsubsection{Properties of $\semweaksetgrn{\cdot}$} \label{sec:weak_vs_properties}

  Next, we show that \green vector-state semantics over sets of 
  programs is compositional~(Theorem \ref{thm:weak_inductive})
  and as coarse as possible~(Theorem \ref{thm:weak_minimality}).



\grnwkcomp*

The proof of this theorem closely mirrors the proof of Theorem~\ref{thm:weak_compositional_yel} with the slight differences that \rone as a shortcut, we can determine the semantics of guards in $B$ directly by checking $\semweaksetgrn{B}(x)$ where $x$ is an enumeration of $\State\restriction_{\varfun(B)}$ and \rtwo we must handle infinite (and diverging) traces in addition to finite, nondiverging traces. 

\begin{proof}
    \Cref{fig:weak_compositional} gives the compositional characterizations of $\semweakset{\cdot}$ over unions and all constructors except while. 
    The only cases worth discussing are if-then-else and while.

    Note that $\semweakset{\progset}$ is determined by its behavior on singleton sets of vector-states, so we restrict our focus to inputs of the form $\{v\} \subset \vsvars$ when outlining the next two rules. The versions over arbitrary sets of vector-states $V$ are given in \Cref{fig:weak_compositional}.

        \paragraph{If-Then-Else} 
        
        Consider sets of statements $S_1$ and $S_2$ and a set of guards $B$. We would like $\semweakset{\Eifthenelse{B}{S_1}{S_2}}(v)$ to, for each guard $b$, evaluate $\semweakset{S_1}$ on a vector of the entries of $v$ where $b$ is true, and evaluate $\semweakset{S_2}$ on a vector of the entries of $v$ where $b$ is false. Then, we'd like to stitch these results back together. 
        For example, $\semweakset{\Eifthenelse{x_1 \neq 1}{S_1}{S_2}}([\statelist{\stateentry{x_1}{1}},\statelist{\stateentry{x_1}{4}}])$ contains $[\state_a, \state_b]$ if
        $[\state_a] \in \semweakset{S_2}([\statelist{\stateentry{x_1}{1}}])$
        and
        $[\state_b]$ $\in \semweakset{S_1}([\statelist{\stateentry{x_1}{4}}])$.
        Similarly, if $[\divg]$ $\in \semweakset{S_2}([\statelist{\stateentry{x_1}{1}}])$, then the output diverges and $[\divg] \in \semweakset{\Eifthenelse{x \neq 1}{S_1}{S_2}}([\statelist{\stateentry{x_1}{1}}, \statelist{\stateentry{x_1}{4}}])$.
        
        For a conditional $\Eifthenelse{B}{S1}{S2}$, each vector state
        $v^b \in \semweakset{B}(v)$ splits $v$ into two subarrays:
        one array over the indices on which $v^b$ is true
        ($v^{1} = \filter{v}{v^b}$), and one over the indices on which $v^b$ is false ($v^{2} = \filter{v}{\neg v^b}$).
        Because $v$ and $v^b$ are computable, $v_1$ and $v_2$ are also computable, we can determine $\semweakset{S1}(v^{1})$ and $\semweakset{S2}(v^{2})$. For each $u^{2} \in \semweakset{S1}(v^{1})$ and $u^{2} \in \semweakset{S2}(v^{2})$, we can assemble a computable vector $u$ by combining $u^{1}$ and $u^{2}$ according to $v^b$, stopping at the first occurrence of $\divg$ if necessary (we say $u = \mathit{interleave}(u^{1}, u^{2}, v^b)$).
        The set of such $u$, after applying $\Red$, is precisely $\semweakset{\Eifthenelse{B}{S1}{S2}}(v)$.

        \paragraph{While} 
        We describe 
        $f^{\setweak}_{while}$ as follows:
        
        Let a set of programs $\Ewhile{B}{S}$ and a vector state $v$ be given. In what follows, treat diverging vectors $[a_1, \cdots, a_n, \divg]$, as $[a_1, \cdots, a_n]$, then append $\divg$ to the end of the outputs if necessary. Note that when $a$ is a vector, we will use $\subvec{a}{i}{j}$ to denote $[a_i, \cdots, a_j]$. Vectors like $[\subvec{a}{i}{j}, b]$ will be understood to be flattened---$[a_i, \cdots, a_j, b_1, \cdots, b_n]$.

        For each possible semantics of the guard, we identify traces through the body of the loop that would cause the loop to converge ($T_{v^b, \state}$) or diverge ($TD_{v^b, \state}$) from start state $\state$, and we ask $\semweakset{S}$ whether any $s \in S$ can produce these traces.

        For example, suppose that we want to determine $\semweakset{\Ewhile{x_1 := 4}{S}}([\statelist{\stateentry{x_1}{12}}])$, given a set of statements $S$ where $\varfun(S) = \{x_1\}$. We can consider all computable ``traces'' of executions of a body of this loop. Thus, $[\statelist{\stateentry{x_1}{1}}, \statelist{\stateentry{x_1}{4}}]$ would represent a loop body that, on the first iteration, sent $\statelist{\stateentry{x_1}{12}}$ to $\statelist{\stateentry{x_1}{4}}$. Similarly, we can imagine diverging traces---i.e., $[\statelist{\stateentry{x_1}{12}}, \statelist{\stateentry{x_1}{13}}, \cdots]$ which would capture a body that is executed infinitely many times. We can determine $\semweakset{\Ewhile{x_1 := 4}{S}}([\statelist{\stateentry{x_1}{12}}])$ by querying $\semweakset{S}$ on all such traces.
           
        To start, suppose that we want $\semweakset{\Ewhile{B}{S}([\sigma])}$ for $\sigma \in \State$, and let $x$ be a computable vector of program states so that all distinct states over $\varspres = \varfun(B) \cup \varfun(S)$ appear. (Recall we have access to these sets through our semantics.)

        For each $v^b \in \semweakset{B}(x)$, call $T_{v^b, \state}$ the set of vectors of states (``traces'') $[t_1, \cdots, t_n]$ so that:
        \begin{itemize}
            \item $t_1 = \state$
            \item For each $j$, $v^b[t_j\restriction_{\varspres}][b_t] = \Et$ 
            \item $v^b[t_n\restriction_{\varspres}][b_t] = \Ef$
        \end{itemize}
        The set $T_{v^b, \state}$ represents the possible converging traces of the loop when the guard has the semantics of $v^b$. For each $t \in T_{v^b, \state}$, if $[t_2, \cdots, t_k, \divg] \in \semweakset{S}(\subvec{t}{1}{n-1})$ for any $k < n$, then the set of loops can diverge on $\sigma$. If $[t_2, \cdots, t_n] \in \semweakset{S}(\subvec{t}{1}{n-1})$, then $[t_n]$ is a possible outcome of the set of loops on $v$.

        Thus, if a loop in the set converges or diverges along a converging trace (one which ends falsifying the guard, $T_{v^b, \state}$), the semantics of $B$ and $S$ let us catch it.

        What if a loop diverges along an infinite trace (i.e., one which never falsifies $v^b$---i.e., a trace in $TD_{v^b, \state}$)?

        For each $v^b \in \semweakset{B}(x)$, call $TN_{v^b, \state}$ the set of computable, infinite vectors of states $[t_1, \cdots]$ so that:
        \begin{itemize}
            \item $t_1 = \state$
            \item For each $j$, $v^b[t_j\restriction_{\varspres}][b_t] = \Et$ 
        \end{itemize}
        If a loop body diverges along this trace, we will find $[\subvec{t}{2}{m}, \divg] \in \semweakset{S}(t)$ for some $m$.
        Otherwise, if no body diverges on this trace, but there is a body that realizes the trace, then $\subvec{t}{2}{\omega} \in \semweakset{S}(t)$.
        In both cases, we capture the divergence.

        Note that we only need to consider traces that are computable. Computable vectors suffice because any realized trace (that is, one that is produced by some $\Ewhile{b}{s}$) can be computed over $\varspres$ by simply running $\Ewhile{b}{s}$.

        In sum, $[\divg] \in \semweakset{\Ewhile{B}{S}}([\sigma])$ only if (i) there is some $v^b \in \semweakset{B}(x)$ and $t \in T_{v^b, \state}$ so that $[t_2, \cdots, t_k, \divg] \in \semweakset{S}(\subvec{t}{1}{n-1})$ for some $k < n$ or (ii) there is some $v^b \in \semweakset{B}(x)$ and $t \in TD_{v^b, \state}$ so that $[t_2, \cdots]$ (or an occluder thereof) is in $\semweakset{S}(t)$.

        Similarly, when $\sigma' \neq ~\divg$, then $[\sigma'] \in \semweakset{\Ewhile{B}{S}}([\sigma])$ only if there is some $v^b \in \semweakset{B}(x)$ and $t \in T_{v^b, \state}$ so that $[t_2, \cdots, t_n] \in \semweakset{S}(\subvec{t}{1}{n-1})$ and $t_n = \sigma'$. Thus, from $\semweakset{S}$ and $\semweakset{B}$, we can precisely determine $\semweakset{\Ewhile{B}{S}}$ on vectors of length $1$.\\

        We can extend the argument to longer input vectors by concatenating traces as follows:

        To check for output vectors without $\divg$, given input vector $v = [v_1,\cdots]$ replace $T_{v^b, \state}$ with $\times_{j \leq \abs{v}} T_{v^b, v_j}$. Thus $u = [u_1,\cdots] \in \semweakset{\Ewhile{B}{S}}([v_1,\cdots])$ only if $u$ and $v$ agree off of $\varspres$ and there is some $v^b \in \semweakset{B}(x)$ and $(t^{1},\cdots) \in \times_{j \leq \abs{v}} T_{v^b, v_j}$ computable so that $u_j = t^{j}_{n_j}$ for $j \leq \abs{u}$ and $[\subvec{t^{1}}{2}{n_1}, \subvec{t^{2}}{2}{n_2}, \cdots]$ is contained in $\semweakset{S}([\subvec{t^{1}}{1}{n_1-1}, \subvec{t^{2}}{1}{n_2-1},\cdots])$. Again, we only concern ourselves with computable vectors of traces because, if one exists, it can be computed by running the loop that produces it.

        To check for diverging output vectors $u$, given finite or infinite input $v = [v_1,\cdots]$, we identify vectors diverging in the $k$th slot by replacing $T_{v^b, \state}$ with $\times_{j \leq k} T_{v^b, v_j}$ and $TD_{v^b, \state}$ with $(\times_{j \leq k} T_{v^b, v_j}) \times TD_{v^b, v_k}$. Thus $u = [u_1, \cdots, u_{k-1}, \divg]$ or an occluder thereof is contained in $\semweakset{\Ewhile{B}{S}}([v_1,\cdots])$ if  either:
        \begin{enumerate}
            \item [(i)] there is some $v^b \in \semweakset{B}(x)$ and $(t^{1},\cdots, t^{k}) \in \times_{j \leq k} T_{v^b, v_j}$ so that $u_j = t^{j}_{n_j}$ for $j < k$ and $[\subvec{t^{1}}{2}{n_1}, \subvec{t^{2}}{2}{n_2}, \cdots, \subvec{t^{k-1}}{2}{n_{k-1}}, \subvec{t^{k}}{2}{m}, \divg]$ is contained within $\semweakset{S}([\subvec{t^{1}}{1}{n_1-1}, \subvec{t^{2}}{1}{n_2-1}, \cdots, \subvec{t^{k}}{1}{n_k-1}])$ for $m \leq n_k$.

            \item [(ii)] (i) does not occur, but there is some $v^b \in \semweakset{B}(x)$ and $(t^{1},\cdots, t^{k}) \in (\times_{j \leq k-1} T_{v^b, v_j}) \times TD_{v^b, v_k}$ so that $\semweakset{S}([\subvec{t^{1}}{1}{n_1-1}, \subvec{t^{2}}{1}{n_2-1}, \cdots, \subvec{t^{k-1}}{1}{n_{k-1}-1}, \subvec{t^{k}}{1}{\omega}])$ contains either $[\subvec{t^{1}}{2}{n_1},\subvec{t^{2}}{2}{n_2}, \cdots, \subvec{t^{k-1}}{2}{n_{k-1}}, \subvec{t^{k}}{2}{\omega}]$ or an occluder diverges after $t^{k-1}$.
        \end{enumerate}
        Thus, from $\semweakset{B}$ and $\semweakset{S}$, we can uniquely identify $\semweakset{\Ewhile{B}{S}}$.
\end{proof}

As an analog of $\semweaksetyel{\cdot}$, we have a result analogous to Theorem~\ref{thm:weak_minimality_yel} for $\semweaksetgrn{\cdot}$.
Note that the restriction of the semantics to computable vectors (i.e., through $V_{inf}$ in Def~\ref{def:vs_vars}) is not necessary to the semantics' definition or its compositionality above, but it is necessary for the proof that $\semweaksetgrn{\cdot}$ is the coarsest semantics that is finer than $\semsingsetgrn{\cdot}$ below.

\grnwkmin*

The proof that $\semweaksetgrn{\cdot} \restriction_{\State^* \cdot \{\epsilon, \divg\}} \preceq \abstrs{A}$ closely mirrors the proof of \Cref{thm:weak_minimality_yel}. The proof that $\semweaksetgrn{\cdot} \restriction_{\vinf} \preceq \abstrs{A}$ is carried out by first checking whether any loop body diverges on an infinite input vector-state and, if no body diverges, considering a set of loops which runs every entry in the (computable) input vector-state and diverges only when the loop body returns the wrong value.

\begin{proof}
    Clearly, $\semweakset{\progset}$ determines $\semsingset{\progset}$ because $\semsingset{\progset}(\sigma) = \{u[1] \mid u \in \semweakset{\progset}([\sigma])\}$. This gives a function mapping each $\semweakset{\progset}$ to $\semsingset{\progset}$. So $\semweakset{\cdot}$ is at least as fine as $\semsingset{\cdot}$.

    Let us show $\semweakset{\cdot}$ is the coarsest such semantics. Suppose an inductively defined set of statements $S$ and a compositional semantics $\abstrs{A}$ at least as fine as program-agnostic semantics $\semsingset{\cdot}$ are given. We will construct set of loops $W_\alpha$ so that $\abstrsapp{A}{S}$ gives each $\abstrsapp{A}{W_\alpha}$, each $\abstrsapp{A}{W_\alpha}$ gives $\semsingset{W_\alpha}$, and the $\semsingset{W_\alpha}$ together give $\semweakset{S}$. This will show $\abstrsapp{A}{S}$ gives $\semweakset{S}$.

    Given $v \in \vsvars$, we can determine $\semweakset{S}(v)$ as follows. As usual, we account for vectors ending in $\divg$ by ignoring the last entry and appending $\divg$ to the outputs if necessary.\\

    Recall $\varfun(S)$ is provided by $\abstrsapp{A}{S}$. Let $j$ be a variable not in $\varfun(S)$.

    Suppose $v = [v_1, \cdots, v_n]$ is finite, and consider a desired vector of outputs $u = [u_1, \cdots, u_n]$. Clearly, if $u$ does not agree with $v$ off of $\varfun(S)$, then $u \notin \semweakset{S}(v)$. So suppose they do agree.
    
    We can write a set of loops $W_{v,u}$ in which each loop runs a program in $S$ on every $v_j$, breaking when the wrong output is produced and incrementing a counter ($j \in \vars \setminus \varfun(S)$) for each $v_j$ correctly mapped to $u_j$:
\begin{lstlisting}[linewidth=\linewidth]
j := 1;
while (0 < j <= n) {
    (*@$\Eassign{\state}{v_j\restriction_{\varfun(S)}}$; @*) 
    S;
    if (*@$\sigma == {u_j}\restriction_{\varfun(S)} $@*) {
        j := j+1
    } else {
        j := 0
    }
}
\end{lstlisting}
    
    
    Then $u \in \semweakset{S}(v)$ if and only if there is a state $\sigma'$ with $j = n+1$ in $\semsingset{W_{v,u}}(\state)$ for arbitrary $\state \in \State$. If $\divg \in \semsingset{W_{v,u}}(\state)$, then $[u_1, \cdots, u_{n-1}, \divg]$ or some diverging prefix thereof is in $\semweakset{S}(v)$. We can determine that prefix by repeating this analysis for prefixes of $v$. By investigating various $u$ and prefixes of $v$, we can fully determine $\semweakset{S}(v)$ from $\semsingset{W_{v,\cdot}}$. Moreover, because $\abstrs{A}$ is compositional, $\abstrsapp{A}{S}$ determines each $\abstrsapp{A}{W_{v,u}}$. Since $\abstrs{A}$ is at least as fine as $\semsingset{\cdot}$, $\abstrsapp{A}{W_{v,u}}$ determines $\semsingset{W_{v,u}}$. Thus, $\abstrsapp{A}{S}$ determines $\semweakset{S}$ on finite inputs.\\

    Now suppose $v = [v_1, \cdots]$ is infinite.
    For each prefix of $v$, we can do the above to determine the divergent vectors in $\semweakset{S}(v)$. 
    The only question left is, given an infinite vector $u$, to determine whether $u \in \semweakset{S}(v)$. If $u$ has a divergent prefix in $\semweakset{S}(v)$, the answer is no; we have seen above how to check this. If $u$ is not computable, the answer is also no. Otherwise, we can determine whether $u \in \semweakset{S}(v)$ as follows:
    
    Because $v$ and $u$ are computable, we can write a set of loops $W$, each of which takes some $s \in S$ and runs it on each $v_j$ in turn, checking that the output matches $u_j$ at each step:
\begin{lstlisting}[linewidth=\linewidth]
j := 1;
while (0 < j) {
    (*@$\Eassign{\state}{v_j\restriction_{\varfun(S)}}$; @*) 
    S;
    if (*@$\sigma == {u_j}\restriction_{\varfun(S)} $@*) {
        j := j+1
    } else {
        j := 0
    }
}
\end{lstlisting}

    We have established in our casework that $u$ has no divergent prefix in $\semweakset{S}(v)$. Thus, $\divg \in \semsingset{W}(\state)$ on arbitrary $\state$ if and only if there is an $s \in S$ so that $s$ maps each $v_j$ to $u_j$. In other words, $\divg \in \semsingset{W}(\state)$ if and only if $u \in \semweakset{W}(v)$.

    Thus, we can fully determine $\semweakset{S}(v)$ from the program-agnostic semantics of loops running $S$. As in the finite case, compositionality and granularity of $\abstrs{A}$ imply that $\semweakset{S}(v)$ is uniquely determined by $\abstrsapp{A}{S}$ for all infinite $v$.

    For integer expressions $E$ (or Boolean expressions $B$), we can replace $S$ with $x_1 := E$ (or $\Eifthenelse{B}{x_1:=0}{x_1:=1}$) and the argument proceeds in the same way.

    So, $\abstrs{A}$ is at least as fine as $\semweakset{\cdot}$.
\end{proof}



\subsection{The Granularity of \texorpdfstring{$\semweaksetyel{\cdot}$}{} and \texorpdfstring{$\semweaksetgrn{\cdot}$}{}} \label{app:vec_compare}

Clearly $\semweaksetyel{\cdot} \not\preceq \semsingsetgrn{\cdot}$ following the proof of Theorem~\ref{thm:sing_noninductive}. Of course, $\semsingsetgrn{\cdot} \not\preceq \semweaksetyel{\cdot}$ because $\semweaksetyel{\emptyset} = \semweaksetyel{\{\Ewhile{\Et}{\Eassign{x}{x}}\}}$ but $\semweaksetgrn{\emptyset} \neq \semweaksetgrn{\{\Ewhile{\Et}{\Eassign{x}{x}}\}}$.

Now, comparing $\semweaksetyel{\cdot}$ and $\semweaksetgrn{\cdot}$, since $\semweaksetgrn{\cdot}$ is a compositional semantics finer than $\semsingsetyel{\cdot}$, $\semweaksetyel{\cdot} \preceq \semweaksetgrn{\cdot}$. Moreover, since $\semweaksetyel{\cdot}$ is incomparable with $\semsingsetgrn{\cdot}$, $\semweaksetyel{\cdot}$ and $\semweaksetgrn{\cdot}$ do not have equivalent granularity. Thus $\semweaksetyel{\cdot} \prec \semweaksetgrn{\cdot}$.

Interestingly, the strict $\semweaksetyel{\cdot} \prec \semweaksetgrn{\cdot}$ holds even over sets of nondiverging programs.

\begin{restatable} [$\semweaksetyel{\cdot} \prec \semweaksetgrn{\cdot}$]{thm}{yelwklessgrnwk}
    $\semweaksetyel{\cdot} \prec \semweaksetgrn{\cdot}$ over sets of nondiverging programs.
\end{restatable}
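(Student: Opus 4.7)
The plan is to verify both directions of the strict inequality. For $\semweaksetyel{\cdot} \preceq \semweaksetgrn{\cdot}$, the argument is immediate: for a set $C$ of nondiverging programs and any finite input vector $v \in \State^*$, the computation of $\semweaksetgrn{C}(v)$ produces only finite non-$\divg$ output vectors (no program in $C$ ever outputs $\divg$), and these coincide exactly with $\semweaksetyel{C}(v)$; the map $f$ that restricts a \green semantics to its action on finite vectors therefore recovers the \yellow semantics.

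For the strict direction $\semweaksetgrn{\cdot} \not\preceq \semweaksetyel{\cdot}$, I will exhibit two sets $C_1, C_2$ of nondiverging programs with the same \yellow vector-state semantics but different \green vector-state semantics. The key observation is that $\semweaksetyel{\cdot}$ only probes finite input vectors---hence only captures behaviors on finite sets of states---whereas $\semweaksetgrn{\cdot}$ additionally probes computable infinite input vectors, exposing the ``coherent'' behavior of a single program on countably many states at once. A set can therefore be rich enough to realize every finite piecewise behavior while still missing a globally coherent behavior that manifests only on an infinite probe.

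Concretely, for each finite $F \subseteq \Z$, let $p_F$ be the program that sets $x_2 := 1$ if $x_1 \in F$ and $x_2 := 0$ otherwise; this is expressible in $\gimp$ as a finite nested conditional. Let $p_{\mathsf{odd}}$ be a nondiverging $\gimp$ program that sets $x_2$ to $x_1 \bmod 2$, easily implemented with two bounded \textsf{while} loops that each reduce $|x_1|$ by $2$ per iteration (one handling the positive case, one the negative). Set $C_1 := \{p_F \mid F \subseteq \Z \text{ finite}\}$ and $C_2 := C_1 \cup \{p_{\mathsf{odd}}\}$; both contain only nondiverging programs and satisfy $\varfun(C_i) = \{x_1,x_2\}$.

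To see $\semweaksetyel{C_1} = \semweaksetyel{C_2}$, it suffices to show that on every finite vector $v = [\sigma_1, \ldots, \sigma_k]$ the vector $p_{\mathsf{odd}}(v)$ lies in $\semweaksetyel{C_1}(v)$: taking $F := \{\sigma_i[x_1] \mid \sigma_i[x_1] \text{ odd}\}$ (finite, since $k$ is finite), one has $p_F(v) = p_{\mathsf{odd}}(v)$. For the separation at the \green level, pick the computable infinite vector $v_\infty := [[x_1 \mapsto 0],\, [x_1 \mapsto 1],\, [x_1 \mapsto 2],\, \ldots]$. Then $p_{\mathsf{odd}}(v_\infty) \in \semweaksetgrn{C_2}(v_\infty)$, while every $p_F \in C_1$ outputs $x_2 = 1$ at only finitely many entries of $v_\infty$, so $p_{\mathsf{odd}}(v_\infty) \notin \semweaksetgrn{C_1}(v_\infty)$. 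Since all relevant outputs are infinite $\divg$-free vectors, $\Red$ induces no occlusion and the distinction survives reduction. The only real obstacle---minor---is packaging $p_{\mathsf{odd}}$ as a syntactically valid, provably terminating $\gimp$ program on all integer inputs; this is routine.
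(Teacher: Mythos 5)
Your overall architecture is exactly the paper's: show $\semweaksetyel{\cdot}\preceq\semweaksetgrn{\cdot}$ (your direct argument---restrict the \green denotation to finite vector-states, where for nondiverging sets it coincides with the \yellow one---is fine, and even more elementary than the paper's appeal to minimality), then exhibit two sets of nondiverging programs that agree under $\semweaksetyel{\cdot}$ because every finite behavior of one distinguished program is matched by ``hardcoding'' it, yet disagree under $\semweaksetgrn{\cdot}$ on a computable infinite input vector. The paper's witness is $S_1$ (finite-range table programs) versus $S_1\cup\{\Eassign{x_1}{x_2}\}$; your finite-indicator programs $p_F$ versus the parity program play precisely the same role, so this is the same proof with a different concrete witness.

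There is, however, one step that fails as written: the implementation of $p_{\mathsf{odd}}$. States in this paper are total maps on all of $\vars$, and every $p_F\in C_1$ writes only to $x_2$. A pair of loops that ``reduce $|x_1|$ by $2$ per iteration'' either clobbers $x_1$ (leaving it in $\{0,1\}$ or $\{-1,0\}$) or burns an auxiliary counter whose final value is likewise fixed; in either case, already on the singleton vector $[\,[x_1\mapsto 5]\,]$ the output of $p_{\mathsf{odd}}$ differs from the output of every program in $C_1$ in some variable other than $x_2$. That destroys the claimed equality $\semweaksetyel{C_1}=\semweaksetyel{C_2}$, which is the heart of the argument---so the packaging of $p_{\mathsf{odd}}$ is not merely a matter of syntax and termination, as you suggest, but of state hygiene. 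The repair is easy and should be stated: compute the parity in place on $x_2$, e.g.\ $\Eassign{x_2}{x_1}$; $\Ewhile{1 < x_2}{\Eassign{x_2}{x_2 - 1 - 1}}$; $\Ewhile{x_2 < 0}{\Eassign{x_2}{x_2 + 1 + 1}}$. This terminates on every integer input and, since the statement-level semantics does not propagate the reserved variables $e_t,b_t$, its output differs from its input only in $x_2$, exactly like the $p_F$. With that fix, both your equality on finite vectors and your separation on $v_\infty$ (where $\Red$ is indeed the identity, as no diverging vectors occur) go through.
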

\begin{proof}
    As above, $\semweaksetyel{\cdot} \preceq \semweaksetgrn{\cdot}$ in general, so $\semweaksetyel{\cdot} \preceq \semweaksetgrn{\cdot}$ on sets of nondiverging programs. It remains to show $\semweaksetgrn{\cdot} \not\preceq \semweaksetyel{\cdot}$ on sets of nondiverging programs. We do so by borrowing the sets of programs introduced by \citet{uls} in their Example 5.1.

    Consider the set $S_1 = L(S)$ defined by the grammar with nonterminals $S::= \Eifthenelse{x_2 == N}{\Eassign{x_1}{N}}{S} \mid \Eassign{x_1}{N}$ and $N::= 0 \mid 1 + N$. Consider also $S_2 = S_1 \cup \{\Eassign{x_1}{x_2}\}$. One may observe that the behavior of $\Eassign{x_1}{x_2}$ on finitely many input states may be captured by some $s \in S_1$ by hardcoding the states. However, no program in $S_1$ has the semantics of $\Eassign{x_1}{x_2}$ because all programs in $S_1$ have finite range. Said another way, the function computed by $\Eassign{x_1}{x_2}$ is a limit point in the Baire topology of the set of functions computed by the programs in $S_1$.

    Observe that $\semweaksetyel{S_1} = \semweaksetyel{S_2}$ because, for any finite vector-state, there is a program $s \in S_1$ that matches the behavior of $\Eassign{x_1}{x_2}$ on it. However, $\semweaksetgrn{S_1} \neq \semweaksetgrn{S_2}$. To see this, one may consider an infinite vector-state $v = [\statelist{\stateentry{x_1}{0}, \stateentry{x_2}{1}}, \statelist{\stateentry{x_1}{0}, \stateentry{x_2}{2}}, \cdots]$. All output vector-states in $\semweaksetgrn{S_1}(v)$ will have an upper bound on the value of $x_1$, but $[\statelist{\stateentry{x_1}{1}, \stateentry{x_2}{1}}, \statelist{\stateentry{x_1}{2}, \stateentry{x_2}{2}}, \cdots] \in \semweaksetgrn{S_2}(v)$ has no bound on $x_1$.
\end{proof}

Of course, both vector-state semantics are strictly coarser than the program-aware semantics. In the case of the \green vector-state semantics, this is a consequence of blocking our ability to observe multiple instances of divergence.

\begin{restatable}
 [$\semweaksetgrn{\cdot}$ Strictly Coarser than $\semfullset{\cdot}$]{thm}{weaklessthanstrong}\label{thm:weak_less_than_strong}
    $\semweaksetgrn{\cdot}\prec \semfullset{\cdot}$.
\end{restatable}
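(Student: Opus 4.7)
My plan is to establish $\semweaksetgrn{\cdot} \prec \semfullset{\cdot}$ by proving both $\semweaksetgrn{\cdot} \preceq \semfullset{\cdot}$ and $\semfullset{\cdot} \not\preceq \semweaksetgrn{\cdot}$. The ``$\preceq$'' direction is essentially bookkeeping. Since \Cref{def:full-sem-prog,def:full_sem_set} give $\semfullset{C} = \{\semsingproggrn{c} \mid c \in C\}$, and since \Cref{def:weak_sem_prog} determines each $\semweakproggrn{c}$ uniquely and pointwise from $\semsingproggrn{c}$ (by truncating at the first divergent entry), one can recover each individual vectorized function $\semweakproggrn{c}$ directly from $\semfullset{C}$. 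Applying \Cref{def:weak_sem_set} then yields $\semweaksetgrn{C}(V) = \Red(\bigcup_{c \in C} \semweakproggrn{c}(V))$, producing the required witness function $f$.

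For the strict separation, I will exhibit two sets $C_1, C_2$ with $\varfun(C_1) = \varfun(C_2)$ and $\semweaksetgrn{C_1} = \semweaksetgrn{C_2}$ yet $\semfullset{C_1} \neq \semfullset{C_2}$. The construction exploits the divergence-occluding behavior baked into $\Red$. Let $c_1 = \Ewhile{x_1 == 1}{\Eassign{x_1}{x_1}}$ and $c_2 = \Ewhile{x_1 == 2}{\Eassign{x_1}{x_1}}$, and let $c_3$ be a program expressible in $\gimp$ that acts as the identity except that it diverges on inputs satisfying $x_1 = 1$ or $x_1 = 2$ (e.g., using a loop whose guard is $\lnot(\lnot(x_1 == 1) \land \lnot(x_1 == 2))$). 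Take $C_1 = \{c_1, c_2\}$ and $C_2 = \{c_1, c_2, c_3\}$. Since $\semsingproggrn{c_3}$ differs from both $\semsingproggrn{c_1}$ (on $x_1 = 2$) and from $\semsingproggrn{c_2}$ (on $x_1 = 1$), we immediately have $\semfullset{C_1} \neq \semfullset{C_2}$.

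The main obstacle will be verifying $\semweaksetgrn{C_1} = \semweaksetgrn{C_2}$ on every $V \in 2^{\vsvars}$, which reduces to proving $\semweakproggrn{c_3}(V) \subseteq \semweakproggrn{c_1}(V) \cup \semweakproggrn{c_2}(V)$. Fixing an input vector $v$, let $i^{\star}$ be the least index with $v[i^{\star}] \in \{1, 2\}$, if any. If $v[i^{\star}] = 1$ then $c_1$ diverges at the same index and produces exactly the same truncated output as $c_3$; symmetrically, if $v[i^{\star}] = 2$ then $c_2$ does. If no such $i^{\star}$ exists, all three programs act as the identity on $v$ and produce the same output. Thus every output of $c_3$ on $v$ already appears among the outputs of $c_1$ or $c_2$, so the pre-reduction unions coincide and their reductions agree; this gives the required equality of $\semweaksetgrn{C_1}$ and $\semweaksetgrn{C_2}$ and completes the strict inequality.
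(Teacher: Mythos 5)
Your proposal is correct and follows essentially the same route as the paper: the coarseness direction recovers each divergence-truncated vectorized function from $\semfullset{C}=\{\semsingproggrn{\prog}\mid \prog\in C\}$ and then unions and reduces, and the strictness direction uses exactly the paper's family of identity-like programs that diverge on $x_1=1$, on $x_1=2$, or on both. The only (harmless) difference is the packaging of the counterexample: the paper contrasts $\{f,f_{1,2}\}$ with $\{f,f_1,f_2\}$ and invokes the occlusion behavior of $\Red$ to collapse the distinct diverging outputs of $f_1$ and $f_2$, whereas your sets $\{c_1,c_2\}$ versus $\{c_1,c_2,c_3\}$ make the added program's vector-state output coincide pointwise with that of $c_1$ or $c_2$ on every input vector, so the pre-reduction unions already agree and the occlusion property of $\Red$ is not even needed.
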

\begin{proof}
    First, observe that program-aware semantics is at least as fine as vector-state semantics. We can determine $\semweaksetgrn{\progset}$ on singleton sets from $\semfullset{\progset} = \{\semsingproggrn{\prog} \mid \prog \in \progset\}$ by, on an input $v \in \vsvars$, determining $\{[\semsingproggrn{\prog}(v_1), \cdots, ] \mid \prog \in \progset\}$ and applying $\Red$. Of course, $\semweaksetgrn{\progset}$ is determined by its value on singleton sets.

    Second, we show vector-state semantics is strictly coarser than program-aware semantics. Consider the following four partial computable functions, where $\divg$ stands for $\Ewhile{\Et}{x_1:=x_1}$:
    \begin{itemize}
        \item $f(x_1) := x_1$
        \item $f_1(x_1) := \Eifthenelse{x_1==1}{\divg}{x_1}$
        \item $f_2(x_1) := \Eifthenelse{x_1==2}{\divg}{x_1}$
        \item $f_{1,2}(x_1) := \Eifthenelse{(x_1 == 1 \lor x_1 == 2)}{\divg}{x_1}$
    \end{itemize}

    Let $S_1 = \{f, f_{1,2}\}$ and $S_2 = \{f, f_1, f_2\}$. Evidently $\semfullset{S_1} = \{\semsingproggrn{f},\semsingproggrn{f_{1,2}}\}$ and $\semfullset{S_2} = \{\semsingproggrn{f}, \semsingproggrn{f_1}, \semsingproggrn{f_2}\}$. Since $\semsingproggrn{f_1} \neq \semsingproggrn{f}, \semsingproggrn{f_{1,2}}$, we see that $\semsingproggrn{f_1} \in \semfullset{S_1} \setminus \semfullset{S_2}$ and thus $\semfullset{S_1} \neq \semfullset{S_2}$.
    
    However, $\semweaksetgrn{S_1} = \semweaksetgrn{S_2}$. For any input vector $v$ not containing a state setting $x_1$ to $1$ or $2$, all functions behave identically, and the outputs of the semantics are equal. When $\stateentry{x_1}{1}$ and $\stateentry{x_1}{2}$ appear in $v$, if $\stateentry{x_1}{1}$ appears first, then $\semweakproggrn{f_1}(v) = \semweakproggrn{f_{1,2}}(v)$, and moreover $\semweakproggrn{f_1}(v)$ is a prefix of $\semweakproggrn{f_2}(v)$. So 
    \begin{align*}
        \semweaksetgrn{S_1}(v) &= \Red(\{\semweakproggrn{f_1}(v), \semweakproggrn{f_2}(v), \semweakproggrn{f}(v)\})\\ 
        &= \{\semweakproggrn{f_1}(v), \semweakproggrn{f}(v)\}\\ 
        &= \{\semweakproggrn{f_{1,2}}(v), \semweakproggrn{f}(v)\}\\ 
        &= \semweaksetgrn{S_2}(v).
    \end{align*} If $\stateentry{x_1}{2}$ appears first, the argument is analogous. When $v$ is infinite, the sets are still equal, but $\semweakproggrn{f}(v)$ does not appear in them.

    Thus, $\semweaksetgrn{S}$ does not determine $\semfullset{S}$ for every $S \in \Stmtset$.
    So $\semweaksetgrn{\cdot} \prec \semfullset{\cdot}$.
\end{proof}

While program-aware semantics is strictly finer than \green vector-state semantics for sets of \textit{diverging} programs, the two semantics are equivalent for sets of \textit{nondiverging} programs in which only finitely many program variables appear. This distinction suggests that even \green vector-state semantics is a rather expressive semantics. When $\Tset$ is a collection of sets of programs, call $ND(\Tset)$, the collection of sets $\progset \in \Tset$ such that $\progset$ contains only nondiverging programs and $\abs{\varfun(\progset)}<\infty$.

\begin{restatable} [$\semweaksetgrn{\cdot}$ 
Equally
Granular as $\semfullset{\cdot}$ over Sets of Nondiverging Programs]{thm}{grnwkeqconvaw}
    For sets of nondiverging programs, $\semweaksetgrn{\cdot}$ and $\semfullset{\cdot}$ have equal granularity.
    
    That is, there are a functions $f$ and $g$ so that, for all $\progset \in ND(\Stmtset) \cup ND(\Expset) \cup ND(\Boolset)$, $f(\semweaksetgrn{\progset}) = \semfullset{\progset}$ and $g(\semfullset{\progset}) = \semweaksetgrn{\progset}$.

\end{restatable}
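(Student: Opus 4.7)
The ``$\semweaksetgrn{\cdot} \preceq \semfullset{\cdot}$'' direction (i.e., the existence of $g$) was essentially established in the proof of Theorem~\ref{thm:weak_less_than_strong}: given $\semfullset{\progset}$, define $g(\semfullset{\progset}) = \Red(\bigcup_{\semsingproggrn{\prog} \in \semfullset{\progset}} \semweakproggrn{\prog})$. I would just need to check that this construction behaves correctly when $\progset$ has no diverging programs.

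The nontrivial direction is showing $\semfullset{\cdot} \preceq \semweaksetgrn{\cdot}$ on $ND$, i.e., constructing $f$. My plan is the following. Fix $\progset \in ND(\Stmtset)$ (the $\Expset$ and $\Boolset$ cases are essentially identical). Recall that the semantics $\abstrs{A}$ implicitly carries $\varfun(\progset)$, which is finite by the definition of $ND$. Since $\varfun(\progset)$ is finite, the set $\State\restriction_{\varfun(\progset)}$ is countable and can be enumerated. Concretely, build a single infinite vector-state $v^\ast = [\sigma_1, \sigma_2, \sigma_3, \ldots]$ that enumerates every state in $\State\restriction_{\varfun(\progset)}$ (with the reserved variables $e_t$ and $b_t$, and any variables outside $\varfun(\progset)$, fixed at arbitrary constants). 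Because $\varfun(\progset)$ is finite, this enumeration is computable, so $v^\ast \in \vinf$.

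Now evaluate $W := \semweaksetgrn{\progset}(\{v^\ast\})$. Since every $\prog \in \progset$ is nondiverging, for each $\prog$ the vector $\semweakproggrn{\prog}(v^\ast) = [\semsingproggrn{\prog}(\sigma_1), \semsingproggrn{\prog}(\sigma_2), \ldots]$ lies entirely in $\vinf$; in particular, no entry is $\divg$. So no element of $\bigcup_{\prog \in \progset} \semweakproggrn{\prog}(v^\ast)$ has a diverging-finite occluder, and $\Red$ acts as the identity on this union. Hence
\[
  W = \{[\semsingproggrn{\prog}(\sigma_1), \semsingproggrn{\prog}(\sigma_2), \ldots] \mid \prog \in \progset\}.
\]
From this set one can read off each function $\semsingproggrn{\prog}$ as follows: each vector $u \in W$ corresponds to exactly one function $\phi_u : \State\restriction_{\varfun(\progset)} \to \State\restriction_{\varfun(\progset)}$ defined by $\phi_u(\sigma_i) = u[i]$. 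Because programs in $\gimp$ can only read and modify variables they syntactically reference, $\semsingproggrn{\prog}$ is completely determined by its restriction to $\State\restriction_{\varfun(\progset)}$ (the program acts as the identity off those variables); hence $\phi_u$ uniquely extends to a full function in $2^\DState \to 2^\DState$. Defining $f(\semweaksetgrn{\progset})$ to be the set of all such extensions obtained from the vectors in $W$ yields exactly $\semfullset{\progset}$.

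The main obstacles I anticipate are mostly bookkeeping: (a) verifying that the enumerating vector $v^\ast$ is genuinely in $\vinf$ (computability), (b) checking that $\Red$ really does act as the identity on the union—this requires that there is no shorter diverging-finite occluder in the union, which follows from the nondivergence assumption, and (c) justifying the lift from $\State\restriction_{\varfun(\progset)}$ to $\State$ via the fact that programs do not touch variables outside their $\varfun$. None of these should cause substantive difficulty, but they must be stated carefully because the definitions of $\vsvars$, $\Red$, and $\semweaksetgrn{\cdot}$ are delicate. For $\Expset$ and $\Boolset$, the only change is to include the reserved variables $e_t$ or $b_t$ in the enumeration; the argument is otherwise unchanged.
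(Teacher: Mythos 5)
Your plan is correct and follows essentially the same route as the paper's proof: the $g$ direction reuses the function from Theorem~\ref{thm:weak_less_than_strong}, and the $f$ direction queries $\semweaksetgrn{\progset}$ on a single infinite vector enumerating all states over the finite set $\varfun(\progset)$, using nondivergence to ensure each output vector directly encodes $\semsingproggrn{\prog}$. The bookkeeping points you flag (computability of the enumeration, $\Red$ acting as the identity in the absence of diverging occluders, and lifting from $\State\restriction_{\varfun(\progset)}$ to $\State$) are exactly the details the paper leaves implicit.
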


\begin{proof}
    Because $\semweaksetgrn{\cdot} \preceq \semfullset{\cdot}$ over 
    $\Tset \in \{\Stmtset, \Expset, \Boolset\}$, 
    it follows that $\semweaksetgrn{\cdot} \preceq \semfullset{\cdot}$ over $ND(\Tset)$. 
    One can use as $g$ the same function described in the proof of 
    Theorem~\ref{thm:weak_less_than_strong}.

    In the other direction, for $f$, consider some $\progset \in ND(\Tset)$. Let $\semweaksetgrn{\progset}$ be given, and let $v \in \vsvars$ be an infinite vector so that every state over $\varfun(\progset)$ appears in $v$.

    Because no programs in $\progset$ diverge, $\semweaksetgrn{\progset}(v) = \{[\semsingproggrn{\prog}(v[1]), \cdots] \mid \prog \in \progset\}$. Because $v$ covers all states over $\varfun(\progset)$, $[\semsingproggrn{\prog}(v[1]), \cdots ]$ determines $\semsingproggrn{\prog}$. Taking $T \in \{\Stmtvars, \Expvars, \Boolvars\}$ respectively, we see that for any $\prog \in \Tvars$,
    $\semsingproggrn{\prog} \in \semfullset{\prog}$ iff $[\semsingproggrn{\prog}(v[1]), \cdots] \in \semweaksetgrn{\progset}(v)$.

    Then $\semweaksetgrn{\cdot}$ determines $\semfullset{\cdot}$ over $ND(\Tset)$
\end{proof}

The three compositional semantics $\semweaksetyel{\cdot}$, $\semweaksetgrn{\cdot}$, and $\semfullset{\cdot}$ offer three different levels of granularity for different use cases. If someone wishes to design a compositional reasoning system for sets of programs, they may choose one of these semantics as the object their system reasons about.

\end{document}